\DeclareMathOperator{\Tr}{Tr}
\newcommand{\rd}{{\mathrm d}}
\newcommand{\R}{\mathbb R}
\newcommand{\C}{\mathbb C}
\newcommand{\UNCD}{{\mathrm{UNCD}}}
\newcommand{\vertiii}[1]{{\left\vert\kern-0.25ex\left\vert\kern-0.25ex\left\vert #1 
    \right\vert\kern-0.25ex\right\vert\kern-0.25ex\right\vert}}
\newcommand{\Acal}{{\mathcal A}}
\newcommand{\Bcal}{{\mathcal B}}
\newcommand{\Hcal}{{\mathcal H}}
\newcommand{\PiAcal}{{\Pi_{\Acal}}}
\newcommand{\PiBcal}{{\Pi_{\Bcal}}}
\newcommand{\Ker}{{\mathrm{Ker}}}
\newcommand{\Ima}{{\mathrm{Im}}}
\newcommand{\N}{\mathbb{N}}
\newcommand{\bbone}{\mathbb{I}}
\newcommand{\nab}{n_{\Acal,\Bcal}}
\newcommand{\nabmin}{\nab^{\min}}
\newcommand{\na}{n_{\Acal}}
\newcommand{\nb}{n_{\Bcal}}
\newcommand{\mab}{m_{\Acal,\Bcal}}
\newcommand{\Mab}{M_{\Acal, \Bcal}}
\newcommand{\Zc}{Z_{\textrm c}}
\newcommand{\Zr}{Z_{\textrm r}}
\newtheorem{theorem}{Theorem}
\newtheorem{lemma}[theorem]{Lemma}
\newtheorem{definition}[theorem]{Definition}
\newtheorem{proposition}[theorem]{Proposition}
\newcommand{\oset}[3][0ex]{%
	\mathrel{\mathop{#3}\limits^{
			\vbox to#1{\kern-11\ex@
				\hbox{$\scriptstyle#2$}\vss}}}}
\begin{document}

\title{ Relating incompatibility, noncommutativity, uncertainty and\\
Kirkwood-Dirac nonclassicality}

\author{Stephan De Bi\`evre\thanks{Stephan.De-Bievre@univ-lille.fr}\\
Univ. Lille, CNRS, Inria, UMR 8524 \\ Laboratoire Paul Painlev\'e, F-59000 Lille, France
}


\date{\today}

\maketitle
\begin{abstract}
We provide an in-depth study of the recently introduced notion of completely incompatible observables and its links to the support uncertainty and to the Kirkwood-Dirac nonclassicality of pure quantum states. The latter notion has recently been proven central to a number of issues in quantum information theory and quantum metrology. In this last context, it was shown that a quantum advantage requires the use of Kirkwood-Dirac nonclassical states.
We establish sharp bounds of very general validity that imply that the support uncertainty is an efficient Kirkwood-Dirac nonclassicality witness.  When adapted to completely incompatible  observables that are close to mutually unbiased ones, this bound allows us to fully characterize the Kirkwood-Dirac classical states as the eigenvectors of the two observables. We show furthermore that complete incompatibility implies several weaker notions of incompatibility, among which features a strong form of noncommutativity.

\end{abstract}

\section{Introduction} 
Among the salient characteristics of quantum mechanics, distinguishing it from classical mechanics,  feature prominently the incompatibility and noncommutativity of two observables and the associated uncertainty principles. They are, together with entanglement, the main ingredients for the study of the classical-quantum transition within the quantum state space. The subject continues to attract considerable attention from the viewpoint of foundational issues~\cite{Jo07, Sp08,  HeWo10,  Pu14, Dr15,  HeEtAl16, YuSwDr18, Lo18,   CaHeTo20, UoEtAl21} as well as in the context of the search for a quantum advantage in various protocols of quantum information and metrology~\cite{Fe11, LuBa12, BaLu14, ThGiChHoBaLu16,  ArEtAl20, ArDrHa21,DeFaKa19, MoKa21}. In particular, the use of Kirkwood-Dirac nonclassical states has recently been proven to be essential in the latter context~\cite{ArEtAl20, ArDrHa21}.

It was pointed out in~\cite{SDB21} that when incompatibility is equated with noncommutativity (as is often the case~\cite{HeWo10, HeEtAl16,  DeFaKa19,  CaHeTo20, MoKa21, UoEtAl21}), only a very weak notion of incompatibility is obtained and a much stronger notion, referred to as ``complete incompatibility'' was proposed. (See Definition~\ref{def:COINC} below.) The latter provides a mathematical expression to the physical idea that the measurement of  a second observable after the measurement of a first one always perturbs the result obtained in the first measurement, whatever the pre-measurement state. Complete incompatibility was then shown to lead to a strong uncertainty relation for all pure states which in turn was proven to be linked to the notion of Kirkwood-Dirac nonclassicality. It is the goal of this paper to further explore the consequences of the complete incompatibility of two observables and its relation to various weaker notions of incompatibility (Proposition~\ref{prop:incompatibilities}), as well as to deepen the  logical links between this notion, the uncertainty inherent in quantum states, and their Kirkwood-Dirac nonclassicality. One of the main results of our analysis is  that when two observables are completely incompatible and -- in a sense we make precise -- close to mutually unbiased, then all states, except the eigenstates of the two observables, are Kirkwood-Dirac nonclassical (Theorem~\ref{thm:COINC_KDNC}). 

When two observables do not satisfy the stringent complete incompatibility condition, our results still permit to partially characterize the KD-nonclassical states (Theorem~\ref{thm:NCbound} and Theorem~\ref{thm:pert_mubclassical}). 

We shall work in the context of quantum mechanics on a Hilbert space $\Hcal$ of finite dimension $d$ and formulate our definitions and results in terms of two orthonormal bases $\Acal=\{|a_i\rangle| 1\leq i\leq d\}$ and $\Bcal=\{|b_j\rangle| 1\leq j\leq d\}$ that one can -- but need not -- think of as the eigenbases of two observables $A$ and $B$. We start by showing (Proposition~\ref{prop:incompatibilities}) that the complete incompatibility of two bases/observables implies a number of weaker forms of incompatibility, including noncommutativity,  that we each interpret physically in terms of successive repeated measurements.     
We show in particular that when two observables are completely incompatible, no spectral projector of $A$ commutes with any spectral projector of $B$. This is much stronger than saying that the two observables do not commute, which only implies that at least one spectral projector of $A$ does not commute with one of $B$. Nevertheless, as we show, even this very strong form of noncommutativity does not imply complete incompatibility, which is a stronger notion still. 

To study the link of complete incompatibility with uncertainty, we proceed as follows. Given a state $\psi\in\Hcal$, we define $\na(\psi)$ (respectively $\nb(\psi)$) to be the number of nonvanishing components of $\psi$ on $\Acal$ (respectively $\Bcal$):
$$
\na(\psi)=\sharp\{i\in\llbracket 1,d\rrbracket \mid \langle a_i|\psi\rangle\not=0\},\quad \nb(\psi)=\sharp\{j\in\llbracket 1,d\rrbracket \mid \langle b_j|\psi\rangle\not=0\}.
$$
We then introduce the uncertainty diagram of the pair $(\Acal, \Bcal)$ to be the set of points $(\na,\nb)\in\llbracket 1, d\rrbracket^2$ in the $\na\nb$-plane for which there exists $\psi\in\Hcal$ so that $\na(\psi)=\na$, $\nb(\psi)=\nb$. Whereas for an arbitrary choice of bases $\Acal$ and $\Bcal$ the uncertainty diagram can have a complex structure, we show it can be easily characterized in the case the two bases are completely incompatible. Indeed, in that case, it is composed of \emph{all} points $(\na,\nb)\in\llbracket 1, d\rrbracket^2$ for which $\na+\nb\geq d+1$ (Theorem~\ref{thm:COINC_KDNC}). This last inequality, which is in fact equivalent to complete incompatibility, can be interpreted as an uncertainty relation: it says $\na(\psi)$ and $\nb(\psi)$ cannot both be small. 

To illustrate these first results, we analyse the case where $\Acal$ and $\Bcal$ are mutually unbiased bases (MUB), meaning $|\langle a_i|b_j\rangle|=\frac1{\sqrt d}$, for all $i,j$. Those have attracted considerable attention over the years~\cite{Schw60, Iv81, PlRoPe06, DuEnBeZy10, FaKa19} and many open questions concerning their classification remain in dimensions $d\geq 6$~\cite{Ba22}.
The MUB are sometimes referred to as ``complementary''  or  ``maximally incompatible''~\cite{DuEnBeZy10} because if a measurement in $\Acal$ yields the outcome $|a_i\rangle$, then a subsequent measurement in $\Bcal$ yields any of the outcomes $|b_j\rangle$ with the same probability.  It turns out, nevertheless, that not all MUB are completely incompatible, as already pointed out in~\cite{SDB21}. Whereas, in dimensions $d=2,3,5$ this is easily seen to be the case, no MUB are completely incompatible when $d=4$~\cite{SDB21}. In higher dimension, the situation is unclear: it is in particular not known in which dimension there exist MUB that are completely incompatible. We show in this paper (Theorem~\ref{thm:COINC_mab}) that, in all dimensions, there exist completely incompatible bases $\Acal, \Bcal$ for which $\mab:=\min_{i,j}|\langle a_i|b_j\rangle|\geq \frac1{\sqrt d}-\delta$, for any $\delta>0$. In other words, there exist completely incompatible bases that are arbitrary close to being mutually unbiased.

Having explained that the complete incompatibility of two observables is equivalent to a strong uncertainty relation for all states $|\psi\rangle\in\Hcal$, we turn to the central question of this paper, which are the links between  complete incompatibility, uncertainty and ``nonclassicality''. For that purpose, we use the notion of nonclassicality associated to the Kirkwood-Dirac distribution. 
Recall that, given $\Hcal$ and two orthonormal bases $\Acal=\{|a_i\rangle\}$ and $\Bcal=\{|b_j\rangle\}$, the Kirkwood-Dirac (KD) distribution of a state $\psi$~\cite{Ki33, Di45},
\begin{equation}\label{eq:KD}
Q(\psi)_{ij}=\langle a_i|\psi\rangle\langle\psi|b_j\rangle \langle b_j|a_i\rangle, 1\leq i,j\leq d,
\end{equation}
is a quasi-probability distribution  somewhat similar in spirit to the Wigner distribution or Glauber-Sudarshan $P$-function~\cite{cagl69a, cagl69b} in continuous variable quantum mechanics, which are intimately linked to the choice of two conjugate quadratures $X$ and $P$.  
It is complex-valued and satisfies
$
\sum_{ij} Q(\psi)_{ij}=1,
$
with marginals
$
\sum_j Q(\psi)_{ij}=|\langle a_i|\psi\rangle|^2,  \sum_i Q(\psi)_{ij}=|\langle b_i|\psi\rangle|^2.
$
A state $\psi\in\Hcal$ is said to be \emph{KD classical} if its KD distribution is real nonnegative everywhere; in other words, if its KD distribution is a probability distribution. Otherwise it is \emph{KD nonclassical}.  KD nonclassicality  is used  in quantum tomography~\cite{LuBa12, BaLu14, ThGiChHoBaLu16} as well as in the theory and applications of weak measurements, (non)contextuality, and their relation to nonclassical effects in quantum mechanics~\cite{Sp08, Dr15, Pu14, Lo18}. Also, KD nonclassicality has been shown to be an essential ingredient necessary to obtain an operational quantum advantage in postselected metrology~\cite{ArEtAl20, ArDrHa21}. 

For these reasons, the question of how to characterize the KD-nonclassical states poses itself naturally. One way to do this is to identify a nonclassicality witness. By this, we mean a quantity depending on the state $|\psi\rangle$ that can be computed or measured without determining the KD-distribution of the state, and whose value can reveal its KD nonclassicality.  It is crucial to keep in mind that the KD distribution and hence the KD nonclassicality of $|\psi\rangle$  depend not only on $|\psi\rangle$, but also on the two bases $\Acal, \Bcal$ used to define it, the choice of which is in applications linked to the identification of two observables $A$ and $B$ of which they are eigenbases. The questions that arise are therefore: under what conditions on the bases $\Acal$ and $\Bcal$ do there exist KD-nonclassical states, how many are there, where in Hilbert space can they be found and how strongly nonclassical are they? 

A form of incompatibility or of noncommutativity is needed between the projectors $|a_i\rangle\langle a_i|$ and $|b_j\rangle\langle b_j|$ for KD-nonclassical states to exist. Indeed, if for example $A$ and $B$ have nondegenerate spectra, and if they are compatible in the usual sense that  $[A,B]=0$, then all those projectors commute, and therefore each $|a_i\rangle$ is up to a phase equal to some $|b_j\rangle$. It is then immediate from~\eqref{eq:KD} that $Q_{ij}(\psi)\geq 0$ for all $\psi$ so that there are no KD-nonclassical states in $\Hcal$.  Under the assumption that $A$ and $B$ are incompatible in the sense that they do not commute, a sufficient but non-optimal condition for a state to be KD nonclassical was given in~\cite{ArDrHa21}. (See Theorem~\ref{thm:arvid} below.) In the present paper, we sharpen and optimize that result in several ways. In particular, our main results under the assumption that the bases are completely incompatible are summarized in the following theorem. It is a corollary of Theorem~\ref{thm:COINCsupport_new} and Theorem~\ref{thm:pert_mubclassical}. 

\begin{theorem} \label{thm:COINC_KDNC}
Two bases $\Acal$ and $\Bcal$ are COINC if and only if for all $\psi\in\Hcal$, $\na(\psi)+\nb(\psi)\geq d+1$.\\  In that case: 
\begin{itemize}
\item[(i)] For all $1\leq \na,\nb\leq d$ with $\na+\nb\geq d+1$, $\exists \psi\in\Hcal$, $\na(\psi)=\na, \nb(\psi)=\nb$.
\item[(ii)] If $\psi\in\Hcal$ satisfies $\nab(\psi):=\na(\psi)+\nb(\psi)>d+1$, then $\psi$ is KD nonclassical.
\item[(iii)] If a state $\psi\in\Hcal$ is KD classical, then $\nab(\psi):=\na(\psi)+\nb(\psi)=d+1$.
\item[(iv)] If in addition $\Acal$, $\Bcal$ are such that
\begin{equation}\label{eq:mablowerbound}
\mab^2:=\min_{ij}|\langle a_i|b_j\rangle|^2> \frac1{d}\left(\frac{\frac{d(d-1)}{2}}{1+\frac{d(d-1)}{2}}\right),
\end{equation}
and hence in particular when they are mutually unbiased, then the only classical states are the vectors in $\Acal$ and $\Bcal$. 
\end{itemize}
\end{theorem}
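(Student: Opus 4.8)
The plan is to prove the equivalence and (i) by elementary linear algebra, to prove (ii) and (iii) by linear algebra together with the Perron--Frobenius theorem, and to obtain (iv) from a quantitative version of the same Perron--Frobenius estimate; throughout I write $V_I^{\Acal}:=\mathrm{span}\{|a_i\rangle:i\in I\}$ and $V_J^{\Bcal}:=\mathrm{span}\{|b_j\rangle:j\in J\}$. For the equivalence I would use the reformulation of Definition~\ref{def:COINC} that $\Acal,\Bcal$ are COINC iff $V_I^{\Acal}\cap V_J^{\Bcal}=\{0\}$ whenever $|I|+|J|\le d$. If $\psi\neq0$ has $\Acal$-support $I_0$ and $\Bcal$-support $J_0$ then $0\neq\psi\in V_{I_0}^{\Acal}\cap V_{J_0}^{\Bcal}$, so COINC forces $\na(\psi)+\nb(\psi)=|I_0|+|J_0|\ge d+1$; conversely if $0\neq\psi\in V_I^{\Acal}\cap V_J^{\Bcal}$ then $\na(\psi)\le|I|$, $\nb(\psi)\le|J|$, so the uncertainty relation gives $|I|+|J|\ge d+1$, which is COINC. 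For (i) I fix $I,J$ with $|I|=\na$, $|J|=\nb$, $\na+\nb\ge d+1$ and set $W:=V_I^{\Acal}\cap V_J^{\Bcal}$; the identity $\dim(U_1\cap U_2)=\dim U_1+\dim U_2-d+\dim(U_1^{\perp}\cap U_2^{\perp})$ and COINC applied to complementary index sets (all of total size $\le d$) give $\dim W=\na+\nb-d\ge1$ and show each $\{\psi\in W:\langle a_i|\psi\rangle=0\}=V_{I\setminus\{i\}}^{\Acal}\cap V_J^{\Bcal}$ (and each $\{\psi\in W:\langle b_j|\psi\rangle=0\}$) to be a proper subspace of $W$; since a finite-dimensional complex vector space is not a finite union of proper subspaces, any $\psi\in W$ avoiding all of them has $\na(\psi)=\na$, $\nb(\psi)=\nb$.

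For (ii) and (iii), let $\psi$ be KD classical with $\Acal$-support $I$ and $\Bcal$-support $J$. Rephasing the vectors of $\Acal$ and $\Bcal$ changes neither $Q(\psi)$ nor $\na,\nb$, so I may assume $\langle a_i|\psi\rangle\ge0$ and $\langle b_j|\psi\rangle\ge0$ for all $i,j$, and then $Q(\psi)_{ij}\ge0$ forces $\langle b_j|a_i\rangle\ge0$ for $i\in I$, $j\in J$. If $|I|=1$ or $|J|=1$ ($\psi$ a basis vector) then $\nab(\psi)\le d+1$ is immediate, so assume $|I|,|J|\ge2$. Writing $r_i=\langle a_i|\psi\rangle>0$, $s_j=\langle b_j|\psi\rangle>0$ and $M_J=(\langle b_j|a_i\rangle)_{j\in J,\,i\in I}\ge0$, the completeness relations give $M_J\vec r=\vec s$ and $M_J^{\mathrm T}\vec s=\vec r$, so $\vec r>0$ is an eigenvector with eigenvalue $1$ of the nonnegative matrix $A:=M_J^{\mathrm T}M_J$. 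The decisive step is that the bipartite graph $G$ on $I\sqcup J$ with an edge $(i,j)$ iff $\langle b_j|a_i\rangle\neq0$ is connected: $G$ has no isolated vertex (no row or column of $M_J$ vanishes), and if it split as $(I_1\sqcup J_1)\sqcup(I_2\sqcup J_2)$ with all four parts nonempty, then $\psi_1:=\sum_{i\in I_1}r_i|a_i\rangle\in V_{I_1}^{\Acal}\cap V_{J_1\cup J^c}^{\Bcal}$ and, by the mirror argument, $\psi_1':=\sum_{j\in J_1}s_j|b_j\rangle\in V_{I_1\cup I^c}^{\Acal}\cap V_{J_1}^{\Bcal}$; feeding $\psi_1,\psi_2$ into the COINC uncertainty relation gives $|I|\ge|J|+2$, feeding in $\psi_1',\psi_2'$ gives $|J|\ge|I|+2$, a contradiction. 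Connectedness makes $A$ irreducible, so Perron--Frobenius makes $1=\rho(A)$ a simple eigenvalue; since the full submatrix with columns $\{|a_i\rangle\}_{i\in I}$ has orthonormal columns, $M_J^{\mathrm T}M_J+M_{J^c}^{*}M_{J^c}$ is the identity, whence $M_{J^c}^{*}M_{J^c}$ has rank $|I|-1$, and since $M_{J^c}$ has only $d-|J|$ rows, $|I|-1\le d-|J|$, i.e.\ $\nab(\psi)\le d+1$. With the equivalence this yields (iii), and (ii) is its contrapositive.

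For (iv), suppose $\psi$ is KD classical but proportional to no vector of $\Acal\cup\Bcal$, so by (iii) $|I|+|J|=d+1$ with $|I|,|J|\ge2$. Keeping the rephasing, $A_{ii}=\sum_{j\in J}|\langle b_j|a_i\rangle|^2\ge|J|\mab^2$ and $A_{ii'}\ge|J|\mab^2$ for $i\neq i'$; substituting these into the $i$-th line $r_i(1-A_{ii})=\sum_{i'\neq i}A_{ii'}r_{i'}$ of $A\vec r=\vec r$ gives $r_i(1-|J|\mab^2)\ge|J|\mab^2\,(S-r_i)$ with $S=\sum_{i\in I}r_i$, hence $r_i\ge|J|\mab^2\,S$ for all $i\in I$, and summing over $i$ yields $1\ge|I||J|\mab^2$. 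Thus $\mab^2\le 1/(|I||J|)\le 1/(2(d-1))$, since $|I|+|J|=d+1$ with $|I|,|J|\ge2$ forces $|I||J|\ge2(d-1)$. For $d\ge3$ one has $(d-1)(d-2)\ge2$, equivalently $1/(2(d-1))$ is at most the right-hand side of~\eqref{eq:mablowerbound}; hence~\eqref{eq:mablowerbound} implies $\mab^2>1/(2(d-1))$ --- a contradiction (the case $d=2$ is vacuous, since then $\nab(\psi)=3$ cannot have $|I|,|J|\ge2$). Therefore under~\eqref{eq:mablowerbound}, and in particular for mutually unbiased $\Acal,\Bcal$, the KD classical states are exactly the vectors of $\Acal$ and of $\Bcal$, each of which is plainly KD classical.

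The linear-algebra bookkeeping in (i) and the Perron--Frobenius estimate in (iv) should be routine; the step I expect to demand the most care is the connectedness of $G$ in the proof of (iii), since this is the only point at which complete incompatibility is used in an essential way and it has to be applied simultaneously on the $\Acal$- and the $\Bcal$-side --- establishing the inclusions $\psi_1\in V_{I_1}^{\Acal}\cap V_{J_1\cup J^c}^{\Bcal}$ and $\psi_1'\in V_{I_1\cup I^c}^{\Acal}\cap V_{J_1}^{\Bcal}$ and extracting from them the contradictory pair of inequalities is the crux. A minor but necessary verification, used throughout, is that rephasing the basis vectors affects none of $Q(\psi)$, $\na$, $\nb$, $\mab$, nor the COINC property.
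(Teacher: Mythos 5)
Your proof is correct, and for parts (ii)--(iv) it takes a genuinely different route from the paper's. The paper derives (ii)--(iii) from Proposition~\ref{prop:NCbound_bis}: after the same rephasing, it expands the orthogonality relations $\langle b_j|b_{j'}\rangle=0$ over the complement of the $\Acal$-support and invokes the obtuse-vectors Lemma~\ref{lem:obtuse1} (at most $n+1$ vectors of $\C^n$ with pairwise negative inner products), which needs no complete incompatibility at all --- only a mild bound on the zeros of $U$ --- and therefore also yields the standalone witness Theorem~\ref{thm:NCbound}. You instead package classicality into the nonnegative eigenvector relation $M_J^{\mathrm T}M_J\vec r=\vec r$, prove irreducibility by a bipartite-connectivity argument that consumes COINC on both the $\Acal$- and $\Bcal$-sides, apply Perron--Frobenius to get simplicity of the eigenvalue $1$, and read off $\na(\psi)+\nb(\psi)\le d+1$ from $\mathrm{rank}\bigl(I-M_J^{\mathrm T}M_J\bigr)=\mathrm{rank}\,M_{J^c}^{*}M_{J^c}\le d-\nb(\psi)$; this is self-contained and correct, although within Theorem~\ref{thm:COINC_KDNC} the connectivity step is actually automatic, since COINC implies $\mab>0$ (Proposition~\ref{prop:incompatibilities}) and the bipartite graph is then complete --- and, unlike the paper's route, your argument as written does not extend beyond the COINC setting. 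For (iv) the paper (Theorem~\ref{thm:pert_mubclassical}) bounds $\mab^2\,\na(\psi)\le\Mab^2\,(d-\na(\psi))$ and translates Eq.~\eqref{eq:mablowerbound} into the ratio condition Eq.~\eqref{eq:mMbound}, whereas you extract $1\ge \na(\psi)\,\nb(\psi)\,\mab^2$ directly from the eigenvector equation and combine it with $\na\nb\ge 2(d-1)$ on the line $\na+\nb=d+1$; both are valid, and yours avoids $\Mab$ altogether. Finally, for the equivalence and for (i) your argument mirrors the paper's Theorem~\ref{thm:COINCsupport_new} and Lemma~\ref{lem:COINCsupport}, except that you compute $\dim\bigl(\PiAcal(S)\Hcal\cap\PiBcal(T)\Hcal\bigr)=|S|+|T|-d$ via the orthogonal-complement dimension identity together with COINC applied to the complementary index sets, rather than via the minor criterion of Lemma~\ref{lem:minorcrit}; that is a slightly more elementary path to the same count, and your ``avoid finitely many proper subspaces'' step coincides with the paper's open-dense argument.
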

The first part of this result can be suggestively paraphrased as saying that, when $\Acal$ and $\Bcal$ are completely incompatible, all states, except possible those that have minimal support uncertainty, are KD nonclassical. Part (ii) in particular shows that the \emph{support uncertainty} $\nab(\psi):=\na(\psi)+\nb(\psi)$ is a KD-nonclassicality witness. According to (iv),  if in addition $\mab$ is close to its maximal possible value $1/\sqrt d$, only the elements of the two bases $\Acal$ and $\Bcal$ are classical.

Since complete incompatibility is a stringent condition on the bases $\Acal$ and $\Bcal$, not necessarily satisfied in all situations of interest, it is important to understand what can be said about the KD-nonclassical states in such situations as well.  It was shown in~\cite{SDB21} that, if the bases satisfy the weaker incompatibility condition that 
$$
\mab:=\min_{i,j} |\langle a_i|b_j\rangle|>0,
$$
 then part~(ii) and~(iii)  of Theorem~\ref{thm:COINC_KDNC} still hold (but not~(i)). In other words, under that condition, the support uncertainty is still a KD-nonclassicality witness. We show here this remains true in much greater generality, even if $\mab=0$, but provided not too many of the overlaps $\langle a_i|b_j\rangle$ vanish (Theorem~\ref{thm:NCbound}). We further establish a sharpening of the bound in (ii) of the above theorem:  provided $\mab$ is sufficiently close to $1/\sqrt d$, any $|\psi\rangle$ that is not an eigenvector of one of the two observables and satisfies $n_{\Acal,\Bcal}(\psi)>d$ is nonclassical (Theorem~\ref{thm:pert_mubclassical}).

The paper is organized as follows. In Section~\ref{s:setup} we define the mathematical framework within which we shall work. Section~\ref{s:compatibility} briefly reviews the standard notions of incompatibility that appear in the literature and their link with noncommutativity. Section~\ref{s:uncertainties} similarly reviews the various uncertainty relations available in the literature and introduces the less well known support uncertainty that we shall consider here, as well as a very general support uncertainty relation that it satisfies.  This sets the stage for our definition and characterization of complete incompatibility in Section~\ref{s:COINC} and for the explanation of its link with a strengthened support uncertainty relation. A study of the relationships between  mutual unbiasedness and complete incompatibility is proposed in Section~\ref{s:COINC_MUB}. In Section~\ref{s:supuncwitness} we establish under which conditions the support uncertainty of a state can serve as a KD nonclassicality witness. This allows us to provide a complete characterization of all KD-classical states under suitable conditions on $\Acal$ and $\Bcal$.  Some final conclusions are drawn in Section~\ref{s:conclusion}.

 \section{Setup}\label{s:setup}
In this preliminary section, we describe the basic objects entering our study and fix the notation we shall use.  Throughout, we will consider two distinct orthonormal bases $\Acal=\{|a_i\rangle| 1\leq i\leq d\}$ and $\Bcal=\{|b_j\rangle| 1\leq j\leq d\}$ of a $d$-dimensional Hilbert space $\Hcal$. 
Given two such bases,
we define their transition matrix $U_{ij}=\langle a_i|b_j\rangle$. 
Introducing 
$$
\mab:=\min_{i,j}|\langle a_i| b_j\rangle|,\quad \Mab=\max_{i,j}|\langle a_i| b_j\rangle|,
$$
one has
\begin{equation}\label{eq:mMbounds}
0\leq \mab\leq d^{-1/2}\leq \Mab\leq 1,
\end{equation}
which follows immediately from the fact that the transition matrix is unitary so that 
 $$
d=\Tr \bbone =\sum_{i,j} |\langle b_j|a_i\rangle|^2.
 $$
 Clearly
 \begin{equation}\label{eq:mabMab}
 \mab>0\Rightarrow \Mab<1.
 \end{equation}
 The condition $\Mab<1$ means $\Acal$ and $\Bcal$ have no basis vectors in common. We will see below (Proposition~\ref{prop:incompatibilities}) that both the conditions $\Mab<1$ and $\mab>0$ can be viewed as a form of incompatibility between the two bases, with $\mab>0$ being the stronger one of the two.

Considering a column of $U$ containing a term for which $\Mab=|\langle a_i|b_j\rangle|$, one has
\begin{equation}\label{eq:mMbound1}
1\geq \Mab^2 +\mab^2(d-1)
\end{equation}
and similarly, considering a column of $U$ containing a term for which $\mab=|\langle a_i|b_j\rangle|$, one has
\begin{equation}\label{eq:mMbound2}
1\leq \mab^2 +\Mab^2(d-1).
\end{equation}
Hence
$$
\mab=\frac{1}{\sqrt d}\Leftrightarrow \Mab=\frac{1}{\sqrt d}\Leftrightarrow \mab=\Mab;
$$
any of these conditions characterizes mutually unbiased bases.

When two bases $\Acal$ and $\Bcal$ are specified, we can associate to each $\psi\in\Hcal$ its $\Acal$- and $\Bcal$-\emph{representations}, which are the 
vectors  of their components $u(\psi)=(\langle a_1|\psi\rangle,\dots,\langle a_d|\psi\rangle)\in\C^d$ on the $\Acal$ basis and $v(\psi)=(\langle b_1|\psi\rangle,\dots,\langle b_d|\psi\rangle)\in\C^d$ on the $\Bcal$ basis. One has
\begin{equation}\label{eq:ABreps}
u(\psi)=Uv(\psi)\quad\textrm{and}\quad
\sum_{i=1}^d |u_i(\psi)|^2=\langle \psi|\psi\rangle=\sum_{j=1}^d |v_j(\psi)|^2.
\end{equation}
Given a state $\psi$, so that $\langle \psi|\psi\rangle=1$, $|u_i(\psi)|^2$ and $|v_j(\psi)|^2$ form two probabilities representing uncertainties which can be measured in various ways. We will come back to this extensively below and in particular study the joint uncertainty with respect to both bases. 
 
Given  bases $\Acal$ and $\Bcal$, we introduce a family of orthogonal projectors as follows. For every $S, T\subset \llbracket 1,d\rrbracket:=\{1,2,\dots, d\}$, 
$$
\PiAcal(S)=\sum_{i\in S} |a_i\rangle\langle a_i|,\quad \PiBcal(T)=\sum_{j\in T} |b_j\rangle\langle b_j|. 
$$
In some cases the two bases $\Acal, \Bcal$ arise as the eigenbases of two self-adjoint operators $A$ and $B$ on $\Hcal$. If their eigenvalues are non-degenerate, then the corresponding bases are unique (up to a global phase and possible reordering of the basis vectors), but not otherwise. In what follows, given two observables $A$ and $B$, we shall always assume a particular eigenbasis has been chosen for each. In many examples of interest, however, the two bases $\Acal$ and $\Bcal$ are not associated to observables, as we shall see. 
If a basis is given, we will say a self-adjoint operator $A$ with eigenvalues $\alpha_1<\alpha_2<\dots <\alpha_L$ is adapted to the basis $\Acal$ if  the latter is a basis of eigenvectors for $A$. We can associate a projective partition of unity to such an adapted observable by considering the partition of $\llbracket 1,d\rrbracket$ defined by 
\begin{equation}\label{eq:eigenvalueset}
S_\ell=\{i\in \llbracket 1, d\rrbracket \mid A|a_i\rangle=\alpha_\ell |a_i\rangle\}, \quad 1\leq \ell\leq L.
\end{equation}
The $\PiAcal(S_\ell)$ are then the eigenprojectors of $A$ and one has
$$
\cup_\ell S_\ell=\llbracket 1,d\rrbracket, \quad \sum_\ell \PiAcal(S_\ell)=\bbone.
$$ Finally, there is a unique projection-valued measure associated to $A$ in the usual way. Let $I\subset \R$. Then we introduce the spectral projector of $A$ for $I$ to be the operator $\Pi(A\in I)$ defined as follows:
$$
\Pi(A\in I)=\sum_{\alpha_\ell\in I}\PiAcal(S_\ell).
$$
This is the sum of the projectors onto the basis vectors whose corresponding eigenvalue is in $I$. 
Note that these spectral projectors are all of the type $\PiAcal(S)$ for some $S\subset\llbracket 1,d\rrbracket$. 

These considerations lead to the following description of quantum measurements associated to a basis $\Acal$, that we need for our discussion of  both uncertainty and incompatibility.
We only need some basic elements of the quantum theory of projective measurements as discussed in in~\cite{Jo07, NiChu10, CTDL15}, that we briefly recall. 

We  in particular only consider selective measurements: the outcome of an individual measurement is always recorded. Given a basis $\Acal$, one can associate to every  partition $S_1,\dots, S_L$ of the index set $\llbracket 1,d\rrbracket$ a family of projectors $\PiAcal(S_1), \dots\PiAcal(S_L)$ forming a projective partition of unity in the sense that
$$
\sum_{\ell=1}^L\PiAcal(S_\ell)=\bbone.
$$ 
To such a family corresponds a selective projective measurement the outcomes of which are the $S_\ell$. In the case where the $S_\ell$ are as defined in~\eqref{eq:eigenvalueset}, this is commonly referred to as a measurement of $A$. The outcomes $S_\ell$ are then identified with the eigenvalues $\alpha_\ell$.   If initially the system is in the state $\psi\in\Hcal$ and the outcome $S_\ell$ is realized then the post-measurement state is the pure state $\PiAcal(S_\ell)\psi/\|\PiAcal(S_\ell)\psi\|$.  The probability of this outcome is $\|\PiAcal(S_\ell)\psi\|^2$. This is what we shall refer to as a measurement in the basis $\Acal$.  When $L=d$ and $S_\ell=\{\ell\}$, for $\ell=1,\dots, d$,  we will say the measurement is finegrained. Otherwise it is coarsegrained. When $L=2$, a case we will frequently consider, the corresponding measurement has only two possible outcomes and one has  $S_1=S, S_2=S^{\textrm{c}}$, for some $S\subset\llbracket 1,d\rrbracket$. In line with the preceding terminology, we can also refer to this as a measurement of $\PiAcal(S)$. Indeed, the projector 
$\PiAcal(S)$ has two eigenvalues, $1$ and $0$, that correspond to the outcomes $S$ and $S^{\textrm{c}}$. 

Note that the transition matrix $U$ depends on the order in which the basis vectors are listed. None of the definitions or results in this work will depend on that order and we will freely reorder the bases when it is convenient. Similarly, the results only depend on the basis vectors up to a global phase. In particular, if $k$ of the basis vectors of $\Acal$ coincide with $k$ of those of $\Bcal$ then one can consider $|a_i\rangle=|b_i\rangle$ for $1\leq i\leq k$, after a possible reordering and adjusting of the phases. From the point of view of incompatibility, uncertainty and KD nonclassicality, nothing interesting then happens on the subspace of $\Hcal$ spanned by the first $k$ basis vectors and the entire analysis in this paper can then be done on the $d-k$-dimensional complementary subspace. In particular, one then has
$$
Q_{ij}(\psi)=|\langle \psi|a_i\rangle|^2\delta_{ij}, \quad \forall 1\leq i,j\leq k.
$$
More generally, when $U$ is block-diagonal, then so is $Q(\psi)$ and the analysis should be performed for each block separately. 

For later use, we introduce the following notation. For $S, T\subset \llbracket 1, d\rrbracket$ we introduce the  $|S|\times |T|$ matrix $U(S,T)$ with entries 
\begin{equation}\label{eq:UST}
U(S,T)_{k\ell}=\langle a_{i_k}|b_{j_\ell}\rangle,
\end{equation}
 where $1\leq k\leq |S|, 1\leq \ell\leq |T|$, and $i_k\in S, j_\ell\in T$. Note that Rank $U(S,T)\leq\min \{|S|, |T|\}$. Also, in what follows
 \begin{equation}\label{eq:hcalST}
\Hcal(S,T)=\PiAcal(S)\Hcal\cap\PiBcal(T)\Hcal.
\end{equation}

\section{(In)compatibility and (non)commutativity  }\label{s:compatibility}
To study (in)compatibility, we need to discuss repeated successive measurements in the bases $\Acal$ and $\Bcal$, a subject we turn to in this section. 
We start with a brief but critical analysis of the standard relationship between compatibility and commutativity of observables~\cite{Lu51, CTDL15}, which is helpful in motivating the notion of complete incompatibility introduced in  Section~\ref{s:COINC}.

 Suppose on a state $\psi$  successive measurements in $\Acal$, then $\Bcal$, yield the outcomes $S$, then $T$. 
We will say these outcomes are compatible if, when measuring again in $\Acal$ after having obtained $T$, the outcome $S$ occurs with probability one. This means that the measurement of $\Bcal$ after the first measurement of $\Acal$ has not disturbed the first outcome.  Repeating measurements in $\Acal$ and/or $\Bcal$ will then systematically yield the same outcomes, always with probability one. As a result, we adopt the following definition:
\begin{definition}\label{def:compatibleoutcomes}  $S,T$ are compatible outcomes for a measurement in $\Acal$ followed by one in $\Bcal$ if
\begin{equation}\label{eq:compatible}
\PiBcal(T)\PiAcal(S)\Hcal\subset \PiAcal(S)\Hcal.
\end{equation}
\end{definition}
Indeed, if the pre-measurement state is $\psi$, the (nonnormalized) state after the first two measurements is $\PiBcal(T)\PiAcal(S)\psi$.  For the second measurement in $\Acal$ to yield $S$ with probability one  for all such pre-measurement states $\psi$, one needs that  $\PiBcal(T)\PiAcal(S)\psi\in\PiAcal(S)\Hcal$ for all such $\psi$. This justifies the definition. For brevity, we will simply say $S$ and $T$ are compatible. 

Eq.~\eqref{eq:compatible} is equivalent to  $[\PiAcal(S),\PiBcal(T)]=0$:
\begin{equation}\label{eq:compatibleST}
\PiBcal(T)\PiAcal(S)\Hcal\subset \PiAcal(S)\Hcal\Leftrightarrow [\PiAcal(S),\PiBcal(T)]=0.
\end{equation}
 Indeed, Eq.~\eqref{eq:compatible} means $\PiBcal(T)$ leaves $\PiAcal(S)\Hcal$ invariant, which implies it also leaves its orthogonal complement $\PiAcal(S^{\textrm{c}})\Hcal$ invariant so that $\PiAcal(S)\PiBcal(T)\PiAcal(S^\textrm{c})=0$. Hence
\begin{eqnarray*}
\PiBcal(T)\PiAcal(S)\psi&=&\PiAcal(S)\PiBcal(T)\PiAcal(S)\psi=\PiAcal(S)\PiBcal(T)\psi-\PiAcal(S)\PiBcal(T)\PiAcal(S^\textrm{c})\psi\\
&=&\PiAcal(S)\PiBcal(T)\psi.
\end{eqnarray*}
The converse implication is obvious. One then also has
\begin{equation}\label{eq:compSTcomm}
\PiBcal(T)\PiAcal(S)\Hcal\subset \PiAcal(S)\Hcal\Leftrightarrow [\PiAcal(S),\PiBcal(T)]=0\Leftrightarrow \PiAcal(S)\PiBcal(T)\Hcal\subset \PiBcal(T)\Hcal.
\end{equation}
As a result, if first a measurement in $\Bcal$ is made, and the outcome $T$ obtained, then a subsequent measurement of $\Acal$ with outcome $S$ does not perturb the first measurement. In other words, if $S,T$ are compatible outcomes for a measurement in $\Acal$ followed by one in $\Bcal$, then $T,S$ are compatible outcomes for a measurement in $\Bcal$ followed by one in $\Acal$.  

We will say two observables $A$ and $B$ are compatible when all outcomes associated to their eigenvalues $\alpha_\ell, \beta_k$ (see Eq.~\eqref{eq:eigenvalueset}) are compatible in the sense of Eq.~\eqref{eq:compatible}. In view of what precedes, this means $A$ and $B$ are compatible if and only if they commute. Mathematically, this is a very strong, restrictive property. It entails a number of further physical properties such as joint measurability and measurement non-disturbance~\cite{HeWo10, HeEtAl16, DeFaKa19}.

Note that $S$ and $T$ are mutually exclusive when, after a measurement in $\Acal$ yielded the outcome $S$, a measurement in $\Bcal$ cannot yield the outcome $T$. This is equivalent to $\PiBcal(T)\PiAcal(S)\Hcal=0$. This, in turn, implies Eq.~\eqref{eq:compatible}, so that, in this sense, mutually exclusive events are compatible. While this sounds  odd, considering the standard everyday meaning of the words ``compatible'' and ``mutually exclusive'', it is implicit in  all discussions on quantum mechanics and we shall also adopt this practice. Indeed, it often occurs that two observables commute, while many outcomes $\alpha_\ell, \beta_k$ are mutually exclusive.

What happens if $[\PiAcal(S), \PiBcal(T)]\not=0$?  Suppose a measurement in $\Acal$ on the pre-measurement state $\psi$ yields the outcome $S$ so that the post-measurement state is, up to normalization, $\PiAcal(S)\psi$. Suppose that a subsequent measurement in $B$ yields the outcome $T$, and the post-measurement state is $\PiBcal(T)\PiAcal(S)\psi$. As we saw, this second measurement has not perturbed the first 
if and only if 
\begin{equation}\label{eq:compatiblebis}
\PiAcal(S)\PiBcal(T)\PiAcal(S)\psi=\PiBcal(T)\PiAcal(S)\psi,
\end{equation}
which is equivalent to 
$\PiBcal(T)\PiAcal(S)\psi\in\PiAcal(S)\Hcal$. Do such states exist? The answer is yes, if and only if
\begin{equation}\label{eq:nonperturb}
\PiAcal(S)\Hcal\cap\PiBcal(T)\Hcal\not=\{0\}.
\end{equation}
 In that case, taking $0\not=\phi\in\PiAcal(S)\Hcal\cap\PiBcal(T)\Hcal$, $\phi'\in\PiAcal(S^{\textrm c})\Hcal$, and setting $\psi=\phi+\phi'$, one has indeed~Eq.~\eqref{eq:compatiblebis}.  Note that Eq.~\eqref{eq:compatiblebis} is equivalent to the statement
$$
\PiAcal(S)\psi\in \Ker[\PiAcal(S),\PiBcal(T)].
$$
Of course, even though $[\PiAcal(S), \PiBcal(T)]\not=0$, generally the kernel of $[\PiAcal(S), \PiBcal(T)]$ is not trivial. In fact, it is always true that
$$
\PiAcal(S)\Hcal\cap\PiBcal(T)\Hcal\subset \Ker[\PiAcal(S), \PiBcal(T)].
$$
In conclusion, if $\PiAcal(S)\Hcal\cap\PiBcal(T)\Hcal\not=\{0\}$, then, for any state $\psi\in \PiAcal(S)\Hcal\cap\PiBcal(T)\Hcal$, repeated successive measurements
in $\Acal$ and $\Bcal$ will give with probability one the outcomes $S$ and $T$. In other words, even if the commutator does not vanish, it is still possible for the measurement in $\Bcal$ to yield the outcome $T$, without perturbing the previous outcome $S$. However, if $[\PiAcal(S), \PiBcal(T)]\not=0$, there also exist pre-measurement states $\psi$ for which this is not true. To see this, note that, since the projectors don't commute, we know $\PiAcal(S)\Hcal$ is not invariant under $\PiBcal(T)$. So there  exists $\psi\in\PiAcal(S)\Hcal$ so that   $\PiBcal(T)\psi\not\in \PiAcal(S)\Hcal$. If initially the system is in such a  state $\psi$, the outcome of an $\Acal$ measurement is $S$. Since $\PiBcal(T)\psi\not\in\PiAcal(S)\Hcal$, we know $\PiBcal(T)\psi\not=0$, so that the probability that a measurement in $\Bcal$ yields $T$ does not vanish. When this occurs, the post-measurement state is, up to normalization, $\PiBcal(T)\psi\not\in \PiAcal(S)\Hcal$. Hence the probability that a subsequent measurement of $A$ yields $S$ is strictly less than one. 

In conclusion, when  two projectors $\PiAcal(S)$ and $\PiBcal(T)$ do not commute, it may still happen for some pre-measurement states that repeated successive measurements in $\Acal$ and $\Bcal$ systematically yield the same outcomes $S$ and $T$. However, this does no longer happen for every pre-measurement state that yields the outcomes $S$ and $T$ in the first two measurements. 

A simple but instructive example illustrating this situation is this:
\begin{eqnarray*}
A&=&a_1(|a_1\rangle\langle a_1|+|a_2\rangle\langle a_2|)+a_3|a_3\rangle\langle a_3|, \quad a_1<a_3,\\
B&=&b_1(|b_1\rangle\langle b_1|+|b_2\rangle\langle b_2|)+b_3|b_3\rangle\langle b_3|, \quad b_1<b_3,
\end{eqnarray*}
with
$$
|b_1\rangle=\frac1{\sqrt2}(|a_1\rangle+|a_2\rangle),\quad |b_2\rangle=\frac1{\sqrt2}\left(\frac1{\sqrt2}(|a_1\rangle-|a_2\rangle)+|a_3\rangle\right),\quad |b_3\rangle=\frac1{\sqrt2}\left(\frac1{\sqrt2}(|a_1\rangle-|a_2\rangle)-|a_3\rangle\right).
$$
Let $S=\{1,2\}, T=\{1,2\}$, so that $\PiAcal(S)$ and $\PiBcal(T)$ are the eigenprojectors of $A$, $B$ with eigenvalue $a_1, b_1$. They do not commute but 
$\PiAcal(S)\Hcal\cap\PiBcal(T)\Hcal=\C|b_1\rangle$. 
Now, let $|\psi\rangle=\frac1{\sqrt2}(|b_1\rangle+|a_3\rangle)$. Then a measurement of $A$ yields the outcome $a_1$ with probability $1/2$; the post-measurement state is then $|b_1\rangle$. A subsequent measurement of $B$ yields the outcome $b_1$ without altering the state and hence without altering the outcomes of any further measurements of $A$, which will always yield $a_1$. In this situation, the measurement of $B$ has not disturbed the one of $A$. If, on the other hand, the pre-measurement state is
$|\psi\rangle=\frac1{\sqrt2}(|a_2\rangle+|a_3\rangle)$, then a first measurement of $A$ still yields $a_1$ with probability $1/2$, but now the resulting state 
is $|a_2\rangle$ and
$$
|a_2\rangle =\frac1{\sqrt2}\left(|b_1\rangle-\frac1{\sqrt2}(|b_2\rangle +|b_3\rangle)\right).
$$
 A subsequent measurement of $B$ yields $b_1$ with probability $3/4$ and the resulting state is then 
 $$
 \sqrt{\frac23}(|b_1\rangle-\frac1{\sqrt2}|b_2\rangle).
 $$ 
 This is not an eigenvector of $A$ and a second measurement of $A$ only yields $a_1$ again with probability $5/6$ (and $a_3$ with probability $1/6$). In other words, this time the measurement of $B$ has perturbed the initial value obtained for $A$. 
This example thus illustrates  that indeed,  when two operators do not commute, successive measurements may or may not perturb outcomes previously obtained. 

It is common practice in quantum mechanics to identify the incompatibility of observables with their noncommutativity~\cite{CTDL15}.
In view of the previous discussion, this yields a rather weak notion of incompatible since it means that in some -- but not in all -- circumstances the measurement of the second observable may perturb the outcome of a previous measurement of the first observable. This is of course the result of the fact that incompatibility is defined here as the negation of compatibility, which is a very strong notion: it requires that a measurement of $B$ \emph{never} disturbs a previous measurement of $A$. 

In what follows, we will say two bases $\Acal$, $\Bcal$ are compatible if $[\PiAcal(S),\PiBcal(T)]=0$ for all $S,T\in\llbracket 1, d\rrbracket$. 
 And we will say two such bases are incompatible if this is not true; in other words, if \emph{there exist} $S,T\in\llbracket 1,d\rrbracket$ so that $[\PiAcal(S),\PiBcal(T)]\not=0$.
We will see below that, to connect incompatibility to uncertainty and nonclassicality, a stronger notion of incompatibility is needed.

\section{Uncertainties}\label{s:uncertainties}
It is well known that one way to understand the Heisenberg uncertainty relation is to view it as
a property of the Fourier transform saying, loosely speaking, that a function and its Fourier transform cannot both be sharply localized. When made precise, this statement can take many forms, thoroughly reviewed in~\cite{FoSi97} (See also~\cite{WiWi21}). The  Heisenberg uncertainty relation (in the form proven by Weyl~\cite{We50}) expresses the localization of the wave function in terms of the variances of its position and momentum distributions.  Another form was proven much more recently and states that a wave function and its Fourier transform cannot both have their support in sets of finite volume. In other words, they cannot both vanish outside a  region of finite volume. This formulation is very similar to the general support uncertainty principles that we will consider below (Section~\ref{s:suppuncrel}) in the finite dimensional setting. They have attracted attention more recently, in the context of signal analysis~\cite{DoSta89, GhoJa11}, in the theory of the Fourier analysis on finite groups~\cite{Tao05, WiWi21}, and in the study of nonclassicality in quantum systems with a finite dimensional Hilbert space~\cite{SDB21}, which is our focus here. 

In order to give perspective on the notion of support uncertainty, we first briefly review some aspects of the Heisenberg, Robertson, and entropic uncertainty relations. In Section~\ref{s:HeRouncrel} we recall the well-known link between noncommutativity and uncertainty as expressed in the Robertson uncertainty relation and show that noncommutativity alone does not lead to very strong information about the uncertainty inherent in a state. We briefly review entropic uncertainty relations in Section~\ref{s:entuncrel} and turn to the support uncertainty relation in Section~\ref{s:suppuncrel}.

\subsection{The Heisenberg and Robertson uncertainty relations}\label{s:HeRouncrel}
Rather than viewing the original Heisenberg uncertainty relation as a property of the Fourier transform, one can alternatively (and equivalently) see it as a bound on the uncertainties of an arbitrary state $|\psi\rangle$ with respect to two conjugate variables $Q$ and $P$ satisfying the canonical commutation relation $[Q,P]= i$. This alternative viewpoint leads to the generalization shown by Robertson~\cite{Rob29}, and which   concerns arbitrary noncommuting observables $A$ and $B$: 
\begin{equation}\label{eq:dispunc2}
\Delta A\Delta B\geq \frac12|\langle \psi|C|\psi\rangle|,\quad \textrm{where}\ C=-i[A,B].
\end{equation} 
It relates the precision with which we can know the value of two noncommuting  observables $A$ and $B$ when the system under consideration is  in a quantum state $|\psi\rangle$. More precisely, when the right hand side of this uncertainty relation does not vanish, the inequality can be interpreted as providing a limit on the statistical variation of the outcomes of measurements of $A$ and $B$ on an ensemble of systems in state $|\psi\rangle$~\cite{NiChu10, Coles17}. In other words, it quantifies to what extent the two observables are incompatible in this state: the sharper the knowledge on one, the less sharp the knowledge on the other. However, in finite dimensional Hilbert spaces, this can never give a useful lower bound on the product of the variances \emph{for all states $|\psi\rangle\in\Hcal$ at once}. Indeed, when $|\psi\rangle$ is an eigenstate of $A$ or of $B$, the right hand side vanishes and the inequality does not give a useful lower bound on the variances at all.  In other words, in a finite dimensional Hilbert space, one always has that
\begin{equation}\label{eq:Cmin0}
\min_{\langle\psi|\psi\rangle=1} |\langle \psi|C|\psi\rangle|=0.
\end{equation}
In fact, since $\Tr C=0$, the self-adjoint operator $C$ cannot have purely positive or purely negative spectrum: if it has a positive eigenvalue, it must also have a negative one, and vice-versa. As a result, the quadratic form $\langle \psi|C|\psi\rangle$ cannot be positive or negative definite, which again implies~\eqref{eq:Cmin0}. 
As a result, to get any information on the uncertainty on the knowledge of the values of the observables $A$ and $B$ in the state $|\psi\rangle$, one needs to know something about the state $|\psi\rangle$, namely the value of $\langle\psi|C|\psi\rangle$. When it vanishes, no information is obtained. In this sense, the noncommutativity of $A$ and $B$ alone does not provide automatically an information on the degree of  incompatibility of the two observables for any state $|\psi\rangle$. Again we see here that the common practice of equating the ``incompatibility'' of two quantum observables with the noncommutativity of the corresponding observables, recalled in the previous section, yields only limited information and suggests stronger notions of incompatibility are needed. We will extensively come back to this point below.

One may think the above observations are particular to the finite dimensional situation, but as we now briefly indicate, they are not. Let us therefore consider two observables represented by bounded self-adjoint operators $A$ and $B$ on an infinite dimensional Hilbert  space. They may not have any normalizable eigenvectors, and only continuous spectrum. The above arguments leading to~\eqref{eq:Cinf0} do then not apply  but  one still has the following simple result, closely related to Putnam's theorem~\cite{Cycon87}. The proof is given in Appendix~\ref{s:commutator}.
\begin{proposition} \label{prop:Cinf0} Let $A$ and $B$ be self-adjoint bounded operators on a finite or infinite dimensional Hilbert space and let $C=-i[A,B]$. Then
\begin{equation}\label{eq:Cinf0}
\inf_{\langle\psi|\psi\rangle=1} |\langle \psi|C|\psi\rangle|=0.
\end{equation}
\end{proposition}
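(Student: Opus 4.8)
The plan is to cover the statement in all dimensions by a single argument of Wielandt--Wintner type, substituting for the identity $\Tr C=0$ (used in the text for the finite‑dimensional case and unavailable in infinite dimensions) a dynamical estimate which forces $A$ to be unbounded whenever $C$ is bounded below. First I would dispose of the trivial case $\dim\Hcal=1$, where $[A,B]=0$ and $C=0$. So assume $\dim\Hcal\ge 2$ and argue by contradiction: suppose $c:=\inf_{\|\psi\|=1}|\langle\psi|C\psi\rangle|>0$. Since $C$ is self-adjoint, $\psi\mapsto\langle\psi|C\psi\rangle$ is a continuous, real-valued, nowhere-vanishing function on the unit sphere of $\Hcal$; as that sphere is path-connected once $\dim\Hcal\ge 2$, the intermediate value theorem forces this function to have constant sign. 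Replacing $B$ by $-B$ if necessary (which turns $C$ into $-C$), we may assume $\langle\psi|C\psi\rangle\ge c$ for every unit vector, i.e.\ $C\ge cI>0$.

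The core step is then that a bounded self-adjoint $C$ with $C\ge cI$ cannot equal $-i[A,B]$ for bounded self-adjoint $A,B$. I would use the norm-continuous unitary group $V(t)=e^{itB}$ — norm-differentiable because $B$ is bounded — and set $A(t):=V(t)^*AV(t)$. From $[A,B]=iC$ one computes $A'(t)=ie^{-itB}[A,B]e^{itB}=-e^{-itB}Ce^{itB}$; since $s\mapsto e^{-isB}Ce^{isB}$ is norm-continuous and bounded by $\|C\|$, this integrates to $A(t)-A(0)=-\int_0^t e^{-isB}Ce^{isB}\,ds$ in operator norm. Fix a unit vector $\psi$, put $\psi_s:=e^{isB}\psi$ (again a unit vector), and take expectation values, using $\langle\psi|A(t)|\psi\rangle=\langle\psi_t|A|\psi_t\rangle$: for every $t>0$,
\[
 \langle\psi_t|A|\psi_t\rangle-\langle\psi|A|\psi\rangle \;=\; -\int_0^t \langle\psi_s|C|\psi_s\rangle\,ds \;\le\; -ct,
\]
since $\langle\psi_s|C|\psi_s\rangle\ge c$. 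But the left-hand side lies in $[-2\|A\|,2\|A\|]$ because $\|\psi_t\|=\|\psi\|=1$, so $ct\le 2\|A\|$ for all $t>0$ — a contradiction. Hence $c=0$, which is~\eqref{eq:Cinf0}.

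I expect the substantive point to be conceptual rather than computational: recognizing that the finite-dimensional reasoning quoted above does not transfer and that the right replacement is the growth estimate for $\langle\psi_t|A|\psi_t\rangle$ generated by the flow of $B$ (alternatively one may simply invoke Putnam's theorem~\cite{Cycon87}, of which this is essentially the proof). The remaining steps are routine: checking the sign conventions in $A'(t)=-e^{-itB}Ce^{itB}$ — the one place where the hypothesis $C=-i[A,B]$, and the earlier normalization to $C\ge cI$ rather than $C\le -cI$, get used — and justifying norm-continuity and norm-differentiability of $t\mapsto e^{itB}$ and $t\mapsto e^{-itB}Ae^{itB}$, which hold because $B$ is bounded and the exponential series converges and may be differentiated in operator norm. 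No compactness or spectral decomposition of $C$ enters, so the same argument re-establishes the finite-dimensional case as well.
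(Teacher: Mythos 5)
Your proof is correct, and its engine is the same Putnam-type dynamical estimate the paper uses for its hardest case; what differs is the reduction to that case. The paper first splits along the spectral decomposition $\Hcal=\Hcal_-\oplus\Ker C\oplus\Hcal_+$: if $\Ker C\neq\{0\}$ the infimum is trivially $0$; if $C$ has spectrum of both signs it exhibits an explicit unit vector $\psi=\alpha\psi_-+\beta\psi_+$ with $\langle\psi|C|\psi\rangle=0$; only in the remaining sign-definite case does it run the monotonicity argument, there conjugating $B$ by $e^{iAt}$ and contradicting the boundedness of $B$, whereas you conjugate $A$ by $e^{itB}$ and contradict the boundedness of $A$ --- these are symmetric and equally valid, and your sign check $A'(t)=-e^{-itB}Ce^{itB}$ is right. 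Your innovation is to bypass the case analysis entirely: a strictly positive infimum of $|\langle\psi|C|\psi\rangle|$ plus continuity and connectedness of the unit sphere (equivalently, convexity of the numerical range) forces $C\geq cI$ or $C\leq -cI$, after which one uniform contradiction finishes both the finite- and infinite-dimensional statements. This buys a shorter, dimension-independent proof; the paper's route buys slightly more information, namely that in the mixed-sign and kernel cases the infimum is actually attained by an explicit state, while your argument by contradiction only yields that the infimum is $0$. Both are complete proofs of the proposition as stated.
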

Note that~\eqref{eq:Cinf0} again implies that the Heisenberg-Robertson uncertainty relation in Eq.~\eqref{eq:dispunc2} does not give a uniform bound on the variances for all $|\psi\rangle$ at once so that, here as well, noncommutativity alone does not imply an information on the degree of incompatibility of the observables $A$ and $B$. 

Note that, if $A=Q$ and $B=P$, then  $C=\bbone$ and of course~\eqref{eq:Cinf0} does not hold in that case. A slight variation on this example is $A=h(Q)$, $B=P$, so that $C=h'(Q)$. Assuming $h'(q)>\delta>0$ for all $q$, again~\eqref{eq:Cinf0} does not hold. As a concrete example, one may consider $h(q)=q+a\sin q$ with $a<1$. Then $h'(q)\geq (1-a)>0$ and $\inf_{\langle\psi|\psi\rangle=1} |\langle \psi|C|\psi\rangle|=1-a$.  Of course, in none of these cases are $A$ and $B$ bounded. 
These examples show that, in infinite dimension, the commutator $C$ can be a positive definite operator which has positive spectrum  bounded away from zero.  This cannot happen however, if both $A$ and $B$ are bounded, as a result of~\eqref{eq:Cinf0}. In that case, it is still possible for $C$ to be non-negative, meaning that its spectrum is non-negative, and includes $0$ so that  \eqref{eq:Cinf0} still holds. Examples of such operators are given in~\cite{How87, Ka91, HeKr19}.

 The conclusion we draw is therefore that, generally, the fact that two observables do not commute does not necessarily provide a useful bound
 on the uncertainties inherent in their measurement on an arbitrary state, as measured by the variances. As in  the discussion in the previous section, this implies again that the standard notion of incompatibility of two observables (namely their noncommutativity) does not imply much about such uncertainty. These observations provide a first motivation to look for other means than the variance to express the uncertainty inherent in a state with regards to the measurement of two observables. 
 
 Before doing so, let us point out that, in finite dimension, there is a second drawback to the use of variances to measure uncertainties. Indeed, when measurements are made in two bases $\Acal$ and $\Bcal$ that are not naturally associated with observables $A$ and $B$, then mean values and variances cannot be meaningfully defined~\cite{De83, MaUf88}. Different approaches to the quantification of uncertainty are then needed. The entropic and support uncertainty relations provide such alternatives, as discussed next.

\subsection{Entropic uncertainty relations}\label{s:entuncrel}
The two shortcomings of~\eqref{eq:dispunc2} in finite dimensional Hilbert space pointed out above are one of the reasons that entropic uncertainty relations have been developed in this context~\cite{De83, MaUf88, Coles17}. The best known of those is the Maassen-Uffink uncertainty relation which reads
\begin{equation}\label{eq:maassenuff}
H_{\Acal, \Bcal}(\psi)=H_\Acal(\psi)+H_\Bcal(\psi)\geq \ln \Mab^{-2},
\end{equation}
where  $\Acal, \Bcal$ are two orthonormal bases of $\Hcal$, as above, and 
$$
H_\Acal(\psi)=-\sum_i |\langle a_i|\psi\rangle|^2\ln|\langle a_i\psi\rangle|^2,\quad H_\Bcal(\psi)=-\sum_i |\langle a_i|\psi\rangle|^2\ln|\langle a_i\psi\rangle|^2
$$ 
are the Shannon entropies associated to the $\Acal$- and $\Bcal$-representations of $|\psi\rangle$.  Note that, when $\Acal,\Bcal$ are the eigenbases of two observables $A$ and $B$, the entropic uncertainty $H_{\Acal, \Bcal}(\psi)$ does not depend on the spectra of $A$ and $B$, only on the eigenbases, contrary to the variances $\Delta A, \Delta B$. In addition, $H_\Acal$ and $H_\Bcal$ are insensitive to permutations of the basis vectors and to global phase changes.

The entropy $H_\Acal(\psi)$ vanishes if and only if $|\psi\rangle$ equals one of the basis vectors; in that case the uncertainty inherent in $|\psi\rangle$ with respect to an $\Acal$-measurement vanishes.  An analogous statement holds for $H_\Bcal$. The entropy $H_{\Acal, \Bcal}(\psi)$ yields a measure of the joint uncertainty inherent in $|\psi\rangle$  with respect to $\Acal$ and $\Bcal$ measurements. It vanishes iff $|\psi\rangle$ is a common basis vector of $\Acal$ and $\Bcal$. In fact,  such a vector exists if and only if  $\Mab=1$. 
In that case, the Maassen-Uffink uncertainty relation provides no information, since the Shannon entropies are always nonnegative. When $\Mab<1$, on the other hand, it provides a uniform lower bound valid for all $|\psi\rangle$, similar to the Heisenberg uncertainty relation for conjugate observables. We will see below that the condition $\Mab<1$ can be interpreted as a -- very weak -- incompatibility condition between the bases: see Proposition~\ref{prop:incompatibilities}.
 The smaller $\Mab$, the stronger the bound obtained. From~\eqref{eq:mMbounds} one concludes therefore that the strongest entropic uncertainty relation is  obtained when $\Mab=\frac1{\sqrt d}$, in which case all overlaps have the same amplitude and the bases are mutually unbiased. The lower bound is then reached (at least) by the basis vectors.  
We note however that, even when $\Mab<1$, the bound is not always optimal and may not be reached, as extensively discussed in~\cite{AbEtAl15}.


\subsection{The support uncertainty relation} \label{s:suppuncrel}
Another uncertainty relation that has been used in various contexts other than quantum mechanics is the support uncertainty relation (see Eq.~\eqref{eq:uncprincab} below), first introduced in~\cite{DoSta89} in the context of the Fourier transform on finite groups. It has been shown to have much wider validity, however: see~\cite{GhoJa11, WiWi21}. 

To describe it, we introduce first the $\Acal$-support and $\Bcal$-support of $|\psi\rangle$:
$$
S_\psi=\{i\in\llbracket 1, d\rrbracket \mid \langle a_i|\psi\rangle\not=0\},\quad
\mathrm{and} \quad
T_\psi=\{j\in\llbracket 1, d\rrbracket \mid \langle b_j|\psi\rangle\not=0\}.
$$
They are the supports of the state $|\psi\rangle$ in the $\Acal$- and $\Bcal$-representations introduced in Eq.~\eqref{eq:ABreps}. One can also think of $S_\psi$ as the set of possible outcomes of a fine-grained measurement in the $\Acal$-basis with the pre-measurement state $|\psi\rangle$. We then define
\begin{equation}\label{eq:supportsizes}
n_\Acal(\psi)=|S_\psi|\not=0,\ n_\Bcal(\psi)=|T_\psi|\not=0,
\end{equation}
where  $|S|$ denotes the number of elements in $S$. 
Then $n_\Acal(\psi)$ (respectively $n_\Bcal(\psi)$) is the number of nonvanishing entries of $u(\psi)$ (respectively $v(\psi)$); $n_\Acal(\psi)$ is therefore  the size of the support or the ``spread'' of the probability distribution $|u_i(\psi)|^2$ (respectively $|v_j(\psi)|^2$) of the state $|\psi\rangle$ in the $\Acal$- (respectively $\Bcal$)-representations. As such $\na(\psi)$ and $\nb(\psi)$ are measures of the localization of $|\psi\rangle$ in the respective representations. They can be thought of as providing a measure of the uncertainty on the outcomes of a measurement in the $\Acal$-basis, respectively $\Bcal$-basis when the system is in the state $|\psi\rangle$. 

In our notation the support uncertainty relation then reads
\begin{equation}\label{eq:uncprincab}
n_\Acal(\psi)n_\Bcal(\psi)\geq \Mab^{-2}.
\end{equation}
 The analogy with the Heisenberg-Robinson uncertainty relation is clear: it says that the supports of $|\psi\rangle$ in the $\Acal$- and in the $\Bcal$-representation cannot both be small since their product is bounded below.  Like the Robertson uncertainty principle, it has a very simple proof, that we provide for completeness in Appendix~\ref{s:proof_uncprinc}. (See also~\cite{SDB21} for an alternatiive argument.)  It can also be seen as an immediate consequence of the more sophisticated 
Maassen-Uffink bound since the Shannon entropy is well-known to have the property that
$$
H_\Acal(\psi)\leq \ln(\na(\psi)),\quad H_\Bcal(\psi)\leq \ln(\nb(\psi)).
$$
 Unlike what happens in the Heisenberg-Robinson uncertainty relation, and similarly to what happens for the Maassen-Uffink entropic inequality, the lower bound in Eq.~\eqref{eq:uncprincab} has the advantage that it does not depend on the state $|\psi\rangle$,  and only depends on $\Acal$ and $\Bcal$ through the maximum value  $\Mab$ of the matrix elements of $U$. It is also not necessarily saturated, as we will see below. Like the Maassen-Uffink bound, it is without interest if $\Mab=1$, since both $\na(\psi)$ and $\nb(\psi)$ are positive integers. 
 
For the purpose of relating uncertainty to incompatibility and KD nonclassicality, it turns out to be more instructive to consider lower bounds on the sum 
\begin{equation}\label{eq:suppunc}
n_{\Acal,\Bcal}(\psi):=n_{\Acal}(\psi)+ n_{\Bcal}(\psi),
\end{equation}
 that we shall refer to as the \emph{support uncertainty} of $|\psi\rangle$, rather than on the product $\na(\psi)\nb(\psi)$. Both such bounds yield information on the tradeoff between uncertainty in the $\Acal$- and $\Bcal$-representations of the state, but, as we shall establish, the sum provides a more natural and direct link with a strong form of incompatibility and KD nonclassicality.   
The situation is not unlike the one in quantum optics, where the sum of the variances $(\Delta Q)^2$ and $ (\Delta P)^2$ of two conjugate quadratures $Q,P$ of the field yields a measure of the uncertainty and of the optical nonclassicality of the state referred to as its \emph{total noise}~\cite{hi89}. It satisfies the lower bound $(\Delta Q)^2+(\Delta P)^2\geq1$ for all states.

For later purposes, we also introduce the \emph{minimal support uncertainty} $\nabmin$ of the two bases $\Acal, \Bcal$ as
\begin{equation}\label{eq:nabmin}
\nabmin=\min_{|\psi\rangle\not=0} n_{\Acal,\Bcal}(\psi).
\end{equation}
Clearly
$$
2\leq \nabmin\leq d+1.
$$
The lower bound is obvious since for any nonzero $|\psi\rangle\in\Hcal$, $\na(\psi)\geq 1 $ and $\nb(\psi)\geq 1$. For the upper bound, consider any basis vector $|a_i\rangle$. Then $\na(|a_i\rangle)=1$, $\nb(|a_i\rangle)\leq d$. So $\nab(|a_i\rangle)\leq d+1$, which yields the upper bound. Note that 
\begin{equation}\label{eq:minab3}
\nabmin\geq 3 \Leftrightarrow \Mab<1.
\end{equation}
Indeed, $\nabmin\geq 3$ is equivalent to the fact that each $\Acal$ basis vector is the linear combination of at least two $\Bcal$-basis vectors and vice versa. Hence $\Mab<1$. The converse is obvious. More generally, Eq.~\eqref{eq:uncprincab} implies that
\begin{equation}\label{eq:uncprincaplusb}  
\nabmin\geq \frac{2}{\Mab},
\end{equation}
which is again not sharp in general, as can be observed in Fig.~\ref{fig:uncdiagTAO}. See also the next subsection.

We have already pointed out that the Maassen-Uffink and support uncertainty relations provide no information when $\Mab=1$. This is not surprising. 
Indeed, when $\Mab=1$, $\Acal$ has $k\geq1$ basis vectors $|a_i\rangle$ that are equal (up to an irrelevant phase) to $k$ basis vectors $|b_j\rangle$. It is then not possible to get a nontrivial bound constraining the uncertainties $\na(\psi), \nb(\psi)$ since they can both be simultaneously equal to $1$. As anticipated at the end of Section~\ref{s:setup}, after reordering the basis vectors and adjusting their phase, we can then assume $|a_1\rangle=|b_1\rangle, \dots, |a_k\rangle=|b_k\rangle$ and split the Hilbert space $\Hcal$ into the subspace $\Hcal_k$ generated by the first $k$ basis vectors and its orthogonal complement $\Hcal_k^\perp$, of dimension $d'=d-k$. 
The analysis of the uncertainty inherent in a state $|\psi\rangle$ can then be done separately for its component in each of these subspaces.


\subsection{Uncertainty diagrams}
Given two bases $\Acal$ and $\Bcal$, their \emph{uncertainty diagram} UNCD$(\Acal, \Bcal)$ was defined in~\cite{SDB21} to be the subset of all points $(\na,\nb)$ in the first quadrant of the $\na\nb$-plane for which there exists a state $|\psi\rangle\in\Hcal$ so that 
\begin{equation}\label{eq:uncdiagdef}
\na=\na(\psi),\quad \nb=\nb(\psi),
\end{equation}
In other words, 
\begin{equation}\label{eq:uncdiagdef2}
\UNCD(\Acal, \Bcal)=\{(\na,\nb)\in \mathbb N^2 | \exists \psi\in \Hcal, \na(\psi)=\na, \nb(\psi)=\nb\}.
\end{equation}
One can also view the uncertainty diagram is  the image of the map $|\psi\rangle\to(\na(\psi),\nb(\psi))\in\llbracket 1,d\rrbracket^2$.  As we will see, the uncertainty diagram is a useful practical and visual tool to study the mutual uncertainty of a state with respect to the $\Acal$- and $\Bcal$- representations. Analogous uncertainty diagrams for entropic uncertainty measures were introduced in~\cite{AbEtAl15}.
 In view of Eq.~\eqref{eq:uncprincab}, the uncertainty diagram always lies above the hyperbola
$$
\na\nb=\Mab^{-2}.
$$
This lower bound is not necessarily reached, of course, since in general $\Mab^{-2}$ is not an integer.   

Three examples of uncertainty diagrams are presented in Fig.~\ref{fig:uncdiagTAO}. In the left panel a spin $1$ system is considered, for which $\Hcal=\C^3$. We choose for $\Acal$ and $\Bcal$ the eigenbases of $J_z$ and $J_x$, where
$$
J_z=\begin{pmatrix}
1&0&0\\
0&0&0\\
0&0&-1
\end{pmatrix},\quad
J_x=\frac1{\sqrt2}\begin{pmatrix}
0&1&0\\
1&0&1\\
0&1&0
\end{pmatrix}.
$$
We write $|z,\epsilon\rangle$ and $|x,\epsilon\rangle$, $\epsilon=1,0,-1$ for the corresponding eigenvectors. The transition matrix from the $J_z$ basis to the $J_x$ basis is then
$$
U=
\frac12
\begin{pmatrix}
1&\sqrt 2&1\\
\sqrt 2&0&-\sqrt 2\\
1&-\sqrt 2&1
\end{pmatrix}.
$$
In this case, $\mab=0$, $\Mab=1/\sqrt2$ and the support uncertainty relation reads $n_z n_x(\psi)\geq 2$. The corresponding lower bound on the uncertainty diagram is therefore $n_z n_x=2$.   Note that $J_x$ and $J_z$ do not commute so that the bases are incompatible.   Details of the computations underlying the left panel of Fig.~\ref{fig:uncdiagTAO} are given in Appendix~\ref{app:spinone}.

 \begin{figure*}[t!]
\begin{center}
\includegraphics[height=3cm]{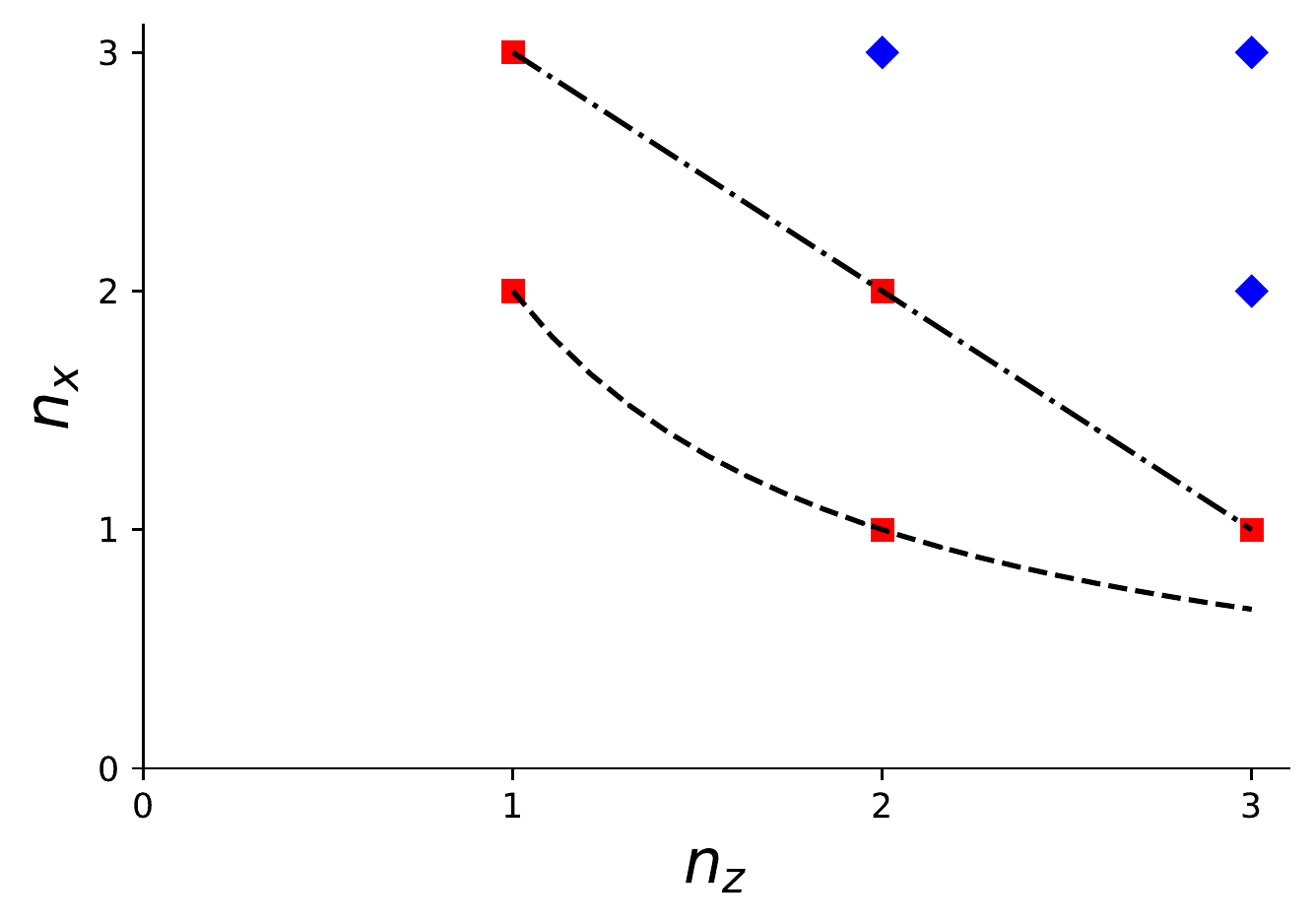}\hskip0.5cm
\includegraphics[height=3cm, keepaspectratio]{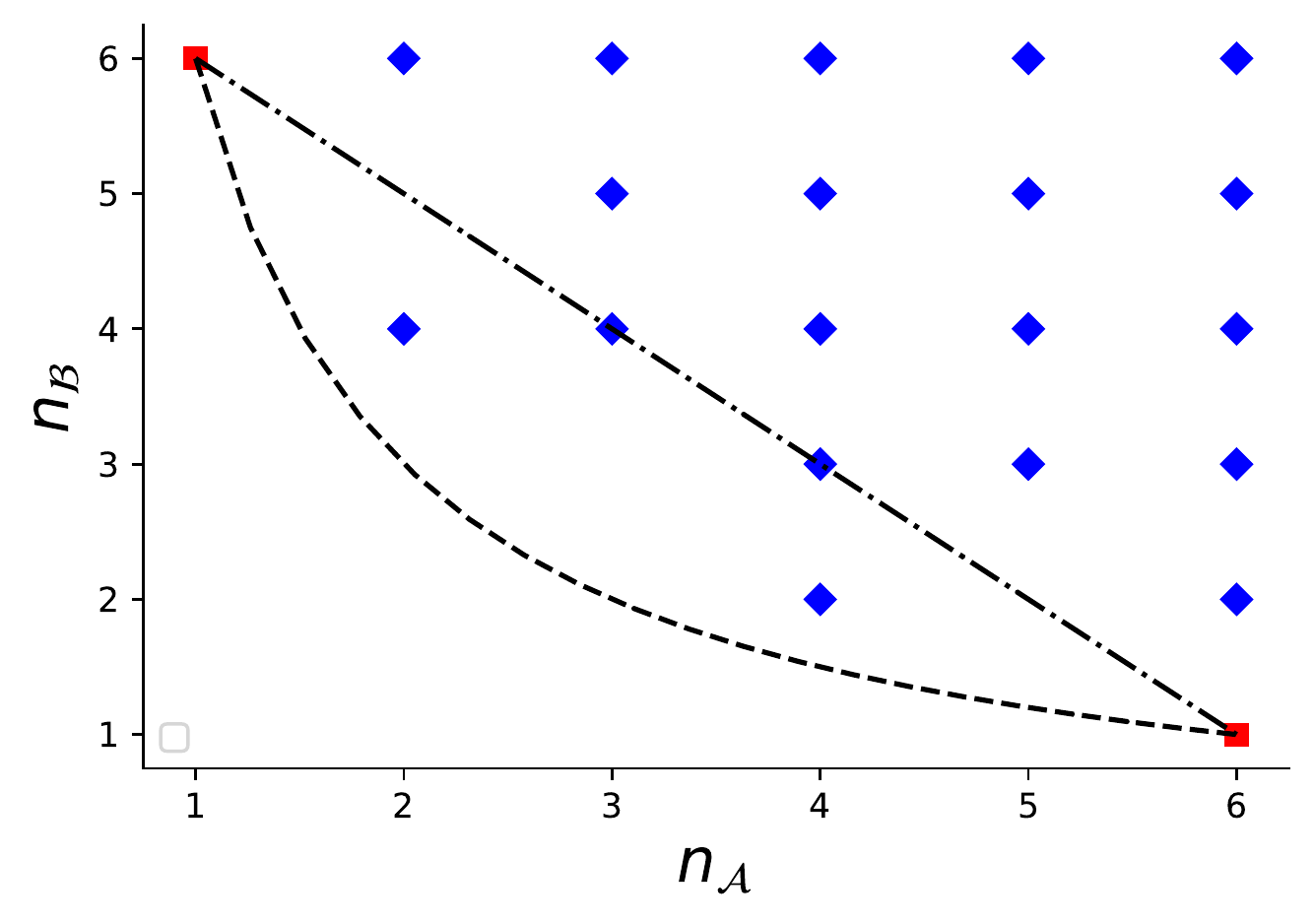}\hskip0.5cm
\includegraphics[height=3cm, keepaspectratio]{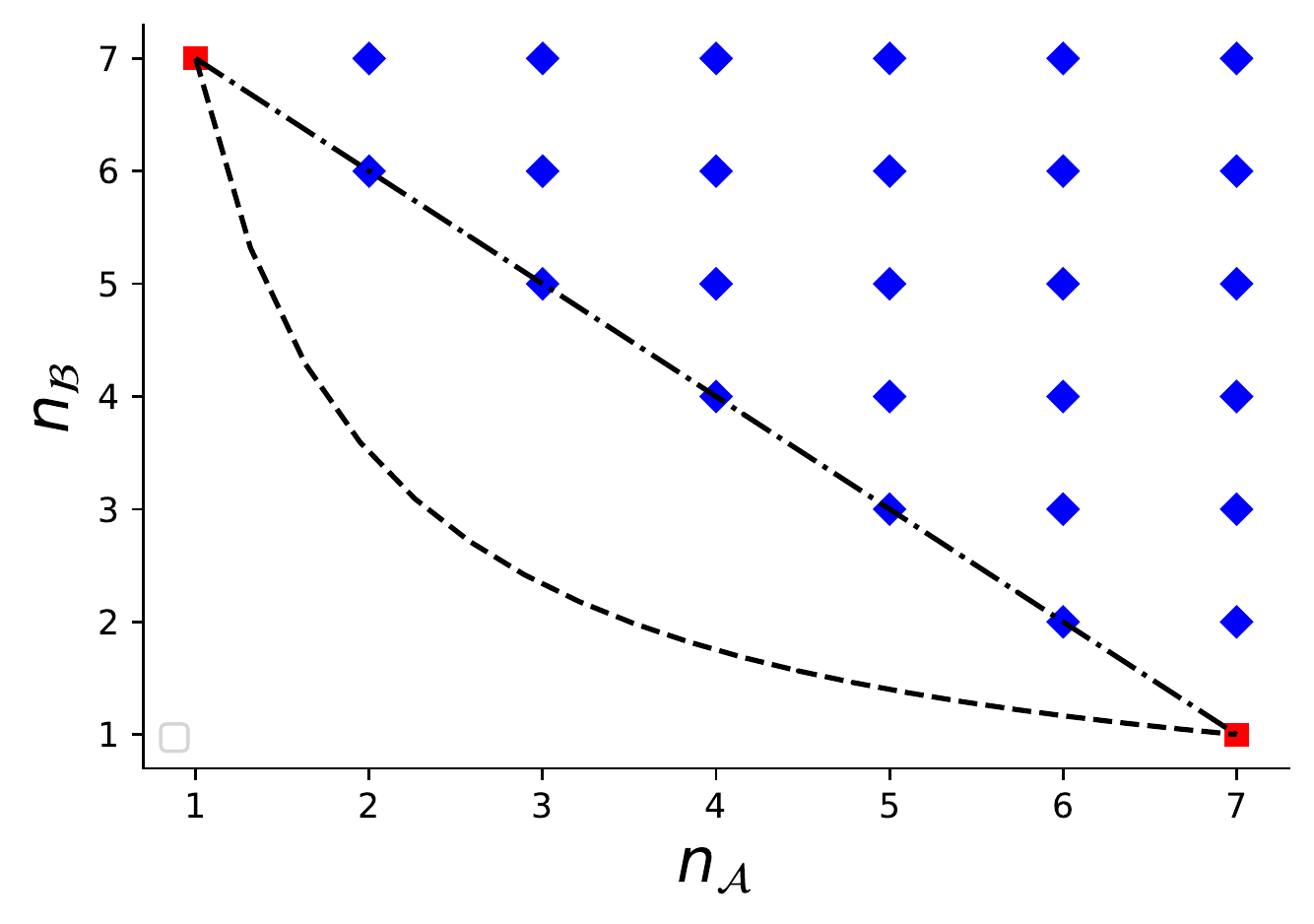}
\end{center}
\caption{Uncertainty diagrams. Dot-dashed line: $n_\Acal(\psi)+n_\Bcal(\psi)=d+1$. Dashed curve: $n_\Acal(\psi)n_\Bcal(\psi)=\Mab^{-2}$. Diamonds (blue): KD-nonclassical states. Squares (red): KD-classical states. Left panel: Case of the spin $1$ transition matrix ($d=3$) for which $\mab=0, \Mab=1/\sqrt{2}$.  Middle panel. Case of the Tao matrix $(d=6)$ which is a MUB for which therefore $\mab=\frac1{\sqrt d}=\Mab$.     Note the absence of states for which $(\na,\nb)=(2,5), (3,3)$ or $(5,2)$.   Right panel. Case of the DFT matrix  for $d=7$ which is both  a MUB and COINC. 
  \label{fig:uncdiagTAO}}
\end{figure*}

The middle panel of Fig.~\ref{fig:uncdiagTAO} shows the uncertainty diagram for two basis $\Acal,\Bcal$ for which the transition matrix is given by the Tao matrix~\cite{Tao04}, which is the transition matrix for two MUB in dimension $d=6$ given by
\begin{equation}\label{eq:taomatrix}
U=\frac1{\sqrt d}\\
\begin{pmatrix}
1&1&1&1&1&1\\
1&1&\omega&\omega&\omega^2&\omega^2\\
1&\omega&1&\omega^2&\omega^2&\omega\\
1&\omega&\omega^2&1&\omega&\omega^2\\
1&\omega^2&\omega^2&\omega&1&\omega\\
1&\omega^2&\omega&\omega^2&\omega&1
\end{pmatrix},
\quad 
\omega=\exp(i\frac{2\pi}{3}).
\end{equation}
The details of the computations underlying the middle panel of Fig.~\ref{fig:uncdiagTAO} are given in Appendix~\ref{app:tao}. 

The right hand panel of Fig.~\ref{fig:uncdiagTAO} shows the uncertainty diagram of two bases related by the discrete Fourier transform (DFT) in dimension $d=7$. Its structure is fully explained by Theorem~\ref{thm:COINCsupport_new} below, as we shall see. 

In general, given two bases $\Acal$ and $\Bcal$, it is not clear what the precise shape of its uncertainty diagram is. For example, it is not clear what its lower edge $n_\Acal\in\llbracket 1, d\rrbracket \to L(\na)$ is, defined as
\begin{equation}\label{eq:loweredge}
L(\na)=\min\{\nb(\psi)| \psi\in\Hcal, \na(\psi)=\na, \|\psi\|=1\}.
\end{equation}
We know from the support uncertainty relation that $L(\na)\geq \Mab^{-2}\na^{-1}$, but this bound may not be reached for all values of $\na$.  Also, there may be ``holes'' in the uncertainty diagram, meaning values $\nb>L(\na)$ for which there exists no $|\psi\rangle$ so that $\na(\psi)=\na, \nb(\psi)=\nb$.  Both these phenomena are illustrated in Fig.~\ref{fig:uncdiagTAO}. For example, for the Tao matrix, one has
$$
L(1)=6,\ L(2)=4=L(3),\ L(4)=2,\ L(5)=3,\ L(6)=1.
$$

Further examples of uncertainty diagrams are given in~\cite{SDB21}. 
Similar difficulties arise with the uncertainty diagrams for entropic uncertainty, as studied in~\cite{AbEtAl15}. 

It turns out that for completely incompatible bases, to which we turn our attention next (Definition~\ref{def:COINC}), the lower edge of the uncertainty diagram is easily determined to be
$$
L(\na)=d+1-\na.
$$
In addition, it turns out the uncertainty diagram then has no ``holes''. This is shown in Theorem~\ref{thm:COINCsupport_new} below.


\section{Complete incompatibility} \label{s:COINC}
\subsection{Complete incompatibility: definition, interpretation.}\label{s:coinc_def}
The previous sections make it clear that incompatiblity, understood as noncommutativity, is a rather weak notion. On the one hand, it only implies that a  measurement in $\Bcal$ \emph{may} perturb a previously obtained outcome in a measurement in $\Acal$. On the other hand, it provides only limited information on the (minimal) uncertainty inherent in an arbitrary given state. To avoid these shortcomings, the following very stringent notion of incompatibility was introduced in~\cite{SDB21}.
\begin{definition}\label{def:COINC}
We say that two bases $\Acal$ and $\Bcal$ are completely  incompatible (COINC) if and only if all index sets $S,T$ in $\llbracket 1,d\rrbracket$ for which $|S|+|T|\leq d$ have the property that $\PiAcal(S)\Hcal\cap\PiBcal(T)\Hcal=\{0\}$. 
\end{definition}
It follows from the discussion in Section~\ref{s:compatibility}, and particularly the one concerning Eq.~\eqref{eq:nonperturb}, that this mathematical property has a direct interpretation in terms of the incompatibility of repeated successive measurements in the $\Acal$ and $\Bcal$ bases. In short, complete incompatibility of $\Acal$ and $\Bcal$ means the following. Suppose that the system is prepared in \emph{any} state $|\psi\rangle$ and a measurement in the $\Acal$ basis yielding the outcome $S$ is followed by one in the $\Bcal$ basis yielding the outcome $T$. In that case, this second measurement has \emph{always} a nonvanishing probability to perturb the result of the first, provided $|S|+|T|\leq d$.  Since Eq.~\eqref{eq:nonperturb} allows for the possibility of the second measurement not to perturb the first for certain pre-measurement states, it is by negating this condition that a very strong notion of incompatibility is obtained. This is all the more true since the condition is imposed for all outcomes $S,T$, with $|S|+|T|\leq d$.

Note that the restriction $|S|+|T|\leq d$ is needed since, when $|S|+|T|>d$, the intersection $\PiAcal(S)\Hcal\cap\PiBcal(T)\Hcal$ is necessarily nontrivial because the dimension of $\PiAcal(S)\Hcal$ is $|S|$ and that of $\PiBcal(T)\Hcal$ is $|T|$. What this expresses is the intuitively expected fact that if the measurements are too coarsegrained, then it may happen that one does not perturb the other. Note also that it would be sufficient in the definition to require $\PiAcal(S)\Hcal\cap\PiBcal(T)\Hcal=\{0\}$ for all $S,T$ so that $|S|+|T|=d$, since the more general requirement follows from this immediately. 

The condition $\Pi_\Acal(S)\Hcal\cap \Pi_\Bcal(T)\Hcal=\{0\}$ can also be understood geometrically, a viewpoint that sometimes helps the intuition.  Given $S$, respectively $T$, one can think of $\PiAcal(S)\Hcal$, respectively $\PiBcal(T)\Hcal$ as  ``coordinate vector spaces'' for the $\Acal$-, respectively $\Bcal$-representations. Indeed, analogously with the ``$xy$-plane'', which corresponds to points in $\R^3$ that have vanishing $z$-coordinate, these spaces contain all states  that have vanishing coordinates corresponding to indices in $S^c$, respectively $T^c$. The condition that $\Pi_\Acal(S)\Hcal\cap \Pi_\Bcal(T)\Hcal=\{0\}$ for all $|S|+|T|\leq d$ can then be understood as saying that these coordinate spaces for the $\Acal$- and $\Bcal$-representation have a trivial intersection. To put it differently, the coordinate spaces for $\Acal$ sit askew in the coordinate spaces for $\Bcal$, and vice versa. 

The definition of COINC bases depends on the two bases $\Acal$ and $\Bcal$ only through the unitary transition matrix between them, defined as  $U|a_j\rangle=|b_j\rangle$, with matrix elements $U_{ij}=\langle a_i|b_j\rangle$ in the $\Acal$-basis. This can be seen as follows. If $\Acal, \Bcal$ have transition matrix $U$ and $\Acal', \Bcal'$ have transition matrix $U'$, then $U=U'$ if and only if there exists a unitary map $V$ so that $|a'_i\rangle =V|a_i\rangle$ and $|b'_i\rangle=V|b_i\rangle$. But in that case, clearly, $\Acal$ and $\Bcal$ are COINC if and only if $\Acal'$ and $\Bcal'$ are. 
Conversely, given an arbitrary unitary $d\times d$ matrix $U$, one can always view it as the transition matrix between the canonical basis of $\Hcal=\C^d$ (taken to be $\Acal$) and the basis composed of the columns of $U$ (taken to be $\Bcal$).  In view of this, we will say a unitary matrix $U$ is COINC whenever these two bases $\Acal$ and $\Bcal$ are. Depending on the situation at hand, it is sometimes more convenient to think in terms of two bases $\Acal, \Bcal$ and sometimes in terms of a unitary matrix $U$, without reference to the bases. Note finally that the complete incompatibility of two bases is also not affected by a renumbering, nor by global phase changes of the basis vectors.

\subsection{Complete incompatibility: characterization and examples}
 A number of examples of COINC bases were given in~\cite{SDB21}. 
It is first of all easy to see that in dimension $d=2,3$, two bases are COINC if and only if $\mab>0$, in other words when the transition matrix has no zeros. This is no longer true in higher dimension as we shall show in Proposition~\ref{prop:incompatibilities} below. For example,  it is shown in~\cite{SDB21} using result of Tao~\cite{Tao04} that, when the transition matrix $U$ is the discrete Fourier transform (DFT), the bases are COINC if and only if $d$ is prime. 
Other explicit examples of bases that are COINC don't seem to be very easy to come by.  It was nevertheless illustrated numerically in~\cite{SDB21} that small perturbations of MUB for $d=4$ tend to be COINC. Theorem~\ref{thm:COINC_U_dense} below explains this numerical observation in much more generality.
\begin{theorem}\label{thm:COINC_U_dense}
The set of completely incompatible unitary $d\times d$ matrices is open and dense in the set of all unitary matrices.
\end{theorem}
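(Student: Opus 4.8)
The plan is to reduce complete incompatibility to the nonvanishing of finitely many minors of the transition matrix $U$, from which openness is immediate, and then to obtain density from the fact that a nontrivial analytic function on the connected group of unitary matrices is generically nonzero.

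The first step is the algebraic reformulation. By the remark following Definition~\ref{def:COINC} it suffices, when checking that $\Acal$ and $\Bcal$ are COINC, to consider index sets $S,T\subset\llbracket 1,d\rrbracket$ with $|S|+|T|=d$. For such a pair, a nonzero vector of $\Hcal(S,T)=\PiAcal(S)\Hcal\cap\PiBcal(T)\Hcal$, written $\phi=\sum_{j\in T}c_j|b_j\rangle$, lies in $\PiAcal(S)\Hcal$ exactly when $\sum_{j\in T}U_{ij}c_j=0$ for every $i\in S^{\mathrm c}$; since $|S^{\mathrm c}|=|T|$, this is a square homogeneous linear system whose matrix is $U(S^{\mathrm c},T)$ in the notation of~\eqref{eq:UST}, and it has a nonzero solution iff $\det U(S^{\mathrm c},T)=0$. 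Writing $R=S^{\mathrm c}$, this gives: $\Acal$ and $\Bcal$ are COINC if and only if every square submatrix $U(R,T)$ (with $R,T\subset\llbracket 1,d\rrbracket$, $|R|=|T|$; the case $|R|=|T|=d$ being automatic since $U$ is unitary) is nonsingular. In particular, since there are finitely many such pairs $(R,T)$ and each $U\mapsto\det U(R,T)$ is a polynomial — hence continuous — function on the group of unitaries, $\{U:\det U(R,T)\neq0\}$ is open, and the COINC matrices form the intersection of these finitely many open sets; this settles openness.

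For density I would show that the complement, namely the finite union of the closed sets $Z_{R,T}=\{U:\det U(R,T)=0\}$, is nowhere dense. Each function $U\mapsto\det U(R,T)$ is real-analytic on the connected real-analytic manifold of unitary $d\times d$ matrices, so $Z_{R,T}$ is either the whole group or a nowhere-dense (indeed Haar-null) subset; the former is excluded by a single explicit witness. Choosing a permutation $\sigma$ of $\llbracket 1,d\rrbracket$ with $\sigma(T)=R$ — which exists because $|T|=|R|$ — the permutation matrix $P$ with $P_{ij}=\delta_{i,\sigma(j)}$ is unitary, and its submatrix $P(R,T)$ is again a permutation matrix, so $\det P(R,T)=\pm1\neq0$. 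Hence each $Z_{R,T}$ is a proper analytic subvariety, closed with empty interior; a finite union of such sets is closed and nowhere dense, so its complement — the set of COINC unitaries — is dense (and in fact of full Haar measure).

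The only step carrying real content is the reformulation of complete incompatibility in terms of minors; the rest is the standard principle that a nontrivial analytic function vanishes on a negligible set, the one point requiring care being the permutation-matrix witness showing that no minor vanishes identically. Aside from the bookkeeping of which subsets play which role — and the observation that restricting to $|S|+|T|=d$ is precisely what makes the relevant submatrices square — I do not anticipate any serious obstacle.
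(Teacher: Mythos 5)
Your proposal is correct, and its first half (the reduction of complete incompatibility to the nonvanishing of all proper square submatrices $U(S^{\mathrm c},T)$ with $|S|+|T|=d$, hence openness from finitely many continuous minor functions) is essentially the paper's Lemma~\ref{lem:minorcrit}. Where you genuinely diverge is the density argument. The paper proceeds constructively: starting from a unitary $U$ with vanishing minors, it builds (Theorem~\ref{thm:COINC_dense} together with the technical Lemma~\ref{lem:Mlemma}) an explicit family of unitaries $T(\epsilon)\to\bbone$, obtained by composing small rotations chosen so that each one removes at least one vanishing minor while preserving a designated nonvanishing one, so that $T(\epsilon)U$ has no vanishing minors; this is a downward recursion on the number of vanishing minors, with the unitarity of the perturbation built in by hand. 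You instead invoke genericity: each map $U\mapsto\det U(R,T)$ is real-analytic on the connected group $U(d)$, it is not identically zero (your permutation-matrix witness $P$ with $\sigma(T)=R$ is a clean and correct way to see this, and $P(R,T)$ is indeed a permutation matrix), so its zero set is closed with empty interior, and a finite union of such sets is nowhere dense. Both arguments are sound; yours is shorter and avoids the delicate bookkeeping of Lemma~\ref{lem:Mlemma}, and it even gives the stronger conclusion that non-COINC unitaries form a Haar-null set, while the paper's construction has the merit of being explicit --- it exhibits a concrete perturbation $\Acal(\epsilon)=\Acal\, T(\epsilon)^\dagger$ of one basis only, which is the form in which the result is then applied. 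Note that your density statement also yields Theorem~\ref{thm:COINC_mab} immediately, since any neighborhood of a mutually unbiased transition matrix contains COINC unitaries with $\mab$ arbitrarily close to $1/\sqrt d$.
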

The proof, given in  Appendix~\ref{app:coincopendense}, relies on the following algebraic characterization of complete incompatibility proven in~\cite{SDB21}. 
\begin{lemma}\label{lem:minorcrit}
Two bases $\Acal$ and $\Bcal$ are COINC if and only if none of the minors of their transition matrix $U$  vanishes.
\end{lemma}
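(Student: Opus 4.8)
The plan is to convert the geometric condition in Definition~\ref{def:COINC} into a statement about determinants of square submatrices of $U$ by writing everything in coordinates.

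First I would reduce to the extremal case $|S|+|T|=d$. As noted right after Definition~\ref{def:COINC}, if $S'\subseteq S$ and $T'\subseteq T$ then $\PiAcal(S')\Hcal\cap\PiBcal(T')\Hcal\subseteq\PiAcal(S)\Hcal\cap\PiBcal(T)\Hcal$; and any pair $(S',T')$ with $|S'|+|T'|\le d$ can be enlarged (one index at a time, since at least one of the two sets is not all of $\llbracket 1,d\rrbracket$ as long as the total is below $d$) to a pair $(S,T)$ with $|S|+|T|=d$. Hence $\Acal,\Bcal$ are COINC if and only if $\PiAcal(S)\Hcal\cap\PiBcal(T)\Hcal=\{0\}$ for every pair with $|S|+|T|=d$.

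Next, I would fix such a pair and describe the intersection. A vector $\psi$ lies in $\PiBcal(T)\Hcal$ exactly when its $\Bcal$-representation $v(\psi)$ is supported in $T$, and it lies in $\PiAcal(S)\Hcal$ exactly when its $\Acal$-representation $u(\psi)=Uv(\psi)$ (see \eqref{eq:ABreps}) vanishes on $S^{\mathrm c}$. Writing $w=(v_j(\psi))_{j\in T}\in\C^{|T|}$, the condition $\psi\in\PiAcal(S)\Hcal\cap\PiBcal(T)\Hcal$ is exactly the homogeneous linear system $U(S^{\mathrm c},T)\,w=0$, with $U(S^{\mathrm c},T)$ the submatrix from \eqref{eq:UST}. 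Since $|S|+|T|=d$ forces $|S^{\mathrm c}|=|T|$, the matrix $U(S^{\mathrm c},T)$ is square, so it has a nontrivial kernel if and only if $\det U(S^{\mathrm c},T)=0$. Therefore $\PiAcal(S)\Hcal\cap\PiBcal(T)\Hcal\neq\{0\}$ if and only if this particular minor of $U$ vanishes.

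Finally I would let $(S,T)$ run over all pairs with $|S|+|T|=d$ and observe that the complementation $S\mapsto S^{\mathrm c}$ makes $(S^{\mathrm c},T)$ run over \emph{all} pairs of index sets of equal cardinality $k\in\llbracket 1,d\rrbracket$; consequently the determinants $\det U(S^{\mathrm c},T)$ range over precisely the set of all minors of $U$ (the case $k=d$ gives $\det U$ itself, automatically nonzero by unitarity, in accordance with $\PiAcal(\emptyset)\Hcal=\{0\}$). Combining this with the reduction step yields the equivalence. I do not expect a serious obstacle here; the only points demanding care are the reduction to $|S|+|T|=d$ and the bookkeeping that complementation is a bijection between the relevant $(S,T)$-pairs and the square submatrices of $U$, plus a check of the trivial edge cases.
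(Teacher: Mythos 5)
Your argument is correct and is essentially the paper's own proof (Lemma~\ref{lem:minor_inc} in Appendix~\ref{app:coincopendense}): there too the intersection $\PiAcal(S)\Hcal\cap\PiBcal(T)\Hcal$ is identified with the kernel of $U(S^{\mathrm c},T)$, the case $|S|+|T|=d$ makes this submatrix square so that triviality of the intersection is equivalent to a nonvanishing minor, and the general case follows by the same monotonicity-in-$(S,T)$ reduction you use. The only difference is cosmetic ordering (the paper states the rectangular rank version first and then specializes), so nothing further is needed.
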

To make this paper self-contained we also provide a proof of  the lemma in  Appendix~\ref{app:coincopendense}.

The theorem may seem somewhat surprising at first sight, since we insisted on the fact that complete incompatibility is a strong notion of incompatibility. To understand why it is true, one may consider Lemma~\ref{lem:minorcrit}. A minor is a multi-variable polynomial in the matrix elements of $U$ and its zeros form a very ``thin'' set: essentially it is a hypersurface of co-dimension $1$. Hence, since small perturbations of $U$ will typically change the values of all its minors, those that initially vanish will typically become nonvanishing. On the other hand, if the perturbation is small enough, those that do not vanish initially will remain nonvanishing.  This suggest that the set of COINC transition matrices is dense. This is precisely what we prove in Appendix~\ref{app:coincopendense}.  The proof is somewhat technical because of the need to ensure that the perturbations considered preserve the unitarity. By the same reasoning, the set of COINC transition matrices is dense.  Indeed, if no minor of $U$ vanishes, then this will remain true for all small perturbations of $U$.  The situation is similar to what happens when one considers commuting self-adjoint $d\times d$ matrices, which are considered compatible.  Those also form a very thin subset of all pairs of self-adjoint matrices and again, a slight perturbation will typically render them noncommuting, hence incompatible. And in this context also, when two observables do not commute, small perturbations of them will still not commute. This being said, the set of completely incompatible observables is considerably smaller than the set of merely incompatible ones, as will become clear in the following section.


\subsection{A hierarchy of incompatibilities}
As a first illustration of the strength of the notion of complete incompatibility, let us consider the case where  $A$ and $B$ are observables with non-degenerate spectrum and eigenbases $\Acal$ and $\Bcal$. Saying that $A$ and $B$ are incompatible in the standard sense that they do not commute then means that there exist at least one $i,j$ so that $[|a_i\rangle\langle a_i|, \langle b_j\rangle\langle b_j|]\not=0$. This is clearly much weaker than requiring that this needs to be  true for all $i,j$ which in turn is weaker than requiring that $[\PiAcal(S),\PiBcal(T)]\not=0$ for all  $S,T\subset \llbracket 1,d\rrbracket$ with $|S|+|T|\leq d$. 
This is readily illustrated on a spin $1$ system,  so that $\Hcal=\C^3$, taking $\Acal$ to be the eigenbasis of $J_z$ and $\Bcal$ of $J_x$. 
In spite of the fact that $J_x$ and $J_z$ do not commute, 
the corresponding eigenbases are not COINC. To see this, one may note that $|x,0\rangle\in\PiAcal(S)\Hcal$, with $S=\{-1,1\}$. So, if first a coarse-grained measurement is made of $J_z$ on an arbitrary state $|\psi\rangle$, with two possible outcomes ``$J_z=0$'' and ``$J_z\not=0$'', and the outcome is ``$J_z\not=0$'', then a subsequent fine-grained measurement of $J_x$ yielding the outcome ``$J_x=0$'' will not perturb the first measurement. 
This is reflected in the fact that  the outcomes $T=\{0\}$ and $S=\{-1, 1\}$ are compatible in the sense of Definition~\ref{def:compatibleoutcomes}: $[\PiAcal(S),\PiBcal(T)]=0$.

To shed further light on the definition of complete incompatibility and to justify the use of this terminology, we will now show that Definition~\ref{def:COINC} implies  three properties that can be seen as weaker manifestations of incompatibility. 
\begin{proposition}\label{prop:incompatibilities}  Let $\Acal, \Bcal$ be two orthonormal bases on a $d\geq 2$ dimensional Hilbert space. Consider the following statements: 
\begin{enumerate}
\item[(i)] $\Acal$ and $\Bcal$ are COINC.
\item[(ii)] For all $S,T\subset \llbracket 1,d\rrbracket$, with $1\leq |S|, |T|<d$, $[\PiAcal(S),\PiBcal(T)]\not=0$.
\item[(iii)] $\mab>0$.
\item[(iv)] $\Mab<1$.
\end{enumerate}
Then (i) $\Rightarrow$ (ii) $\Rightarrow$ (iii) $\Rightarrow$ (iv).\\ 
In addition
\begin{itemize}
\item When $d=2$, one has (i) $\Leftrightarrow$ (ii) $\Leftrightarrow$ (iii) $\Leftrightarrow$ (iv);
\item When $d=3$, one has (i) $\Leftrightarrow$ (ii) $\Leftrightarrow$ (iii) $\Rightarrow$ (iv), but (iv) does not imply (iii);
\item When $d\geq 4$, one has  (i) $\Rightarrow$ (ii) $\Rightarrow$ (iii) $\Rightarrow$ (iv), but (iv) does not imply (iii) and (iii)  does not imply (ii);
\item When $d=4$ or $d=6$, (ii) does not imply (i). 
\end{itemize}
\end{proposition}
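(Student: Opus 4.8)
The plan is to establish the chain of implications first and then treat the low-dimensional equivalences and the counterexamples separately. For (i)$\Rightarrow$(ii): if $\Acal,\Bcal$ are COINC and $1\le|S|,|T|<d$, I want to produce $S',T'$ with $|S'|+|T'|=d$ and $\Pi_\Acal(S')\Hcal\cap\Pi_\Bcal(T')\Hcal\ne\{0\}$ whenever $[\Pi_\Acal(S),\Pi_\Bcal(T)]=0$; by~\eqref{eq:nonperturb} and~\eqref{eq:compatibleST}, commuting projectors give $\Pi_\Acal(S)\Hcal\cap\Pi_\Bcal(T)\Hcal\ne\{0\}$ or $\Pi_\Acal(S)\Hcal\cap\Pi_\Bcal(T^{\mathrm c})\Hcal\ne\{0\}$ (one of the four intersections $\Pi_\Acal(S^{(\pm)})\Hcal\cap\Pi_\Bcal(T^{(\pm)})\Hcal$ is nonzero by a dimension count, since the four corresponding subspaces decompose $\Hcal$); choosing the one with smaller total index size and using that COINC forbids nonzero intersection when $|S|+|T|\le d$ gives a contradiction, as at least one of the four pairs has index-sum $\le d$. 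For (ii)$\Rightarrow$(iii): contrapositive — if $\mab=0$ then some $\langle a_i|b_j\rangle=0$, so taking $S=\{i\}$, $T=\{j\}$ gives $|b_j\rangle\in\Pi_\Acal(S^{\mathrm c})\Hcal$, hence $\Pi_\Acal(S)\Pi_\Bcal(T)=0=\Pi_\Bcal(T)\Pi_\Acal(S)$, so these rank-one projectors commute, violating (ii). For (iii)$\Rightarrow$(iv): this is exactly~\eqref{eq:mabMab}, already proven.

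For the $d=2$ case, all of (ii), (iii), (iv) are equivalent to $U$ having no zero entry (for a $2\times2$ unitary, one entry vanishes iff the ``other diagonal'' entry vanishes too, and then the matrix is a permutation up to phases); (i) is equivalent to the same by Lemma~\ref{lem:minorcrit}, since for $d=2$ the minors of $U$ are precisely its four entries and its determinant, and $\det U\ne0$ automatically. For $d=3$: (i)$\Leftrightarrow$(iii) follows from Lemma~\ref{lem:minorcrit} because the $1\times1$ minors are the entries (nonzero iff $\mab>0$), the $3\times3$ minor is $\det U\ne0$, and — this is the one genuine computation — the $2\times2$ minors of a $3\times3$ unitary matrix are automatically nonzero once all entries are nonzero (each $2\times2$ minor of $U$ equals, up to a unimodular factor, the complex conjugate of the complementary entry of $U$, by the cofactor/adjugate identity $U^{-1}=U^\dagger$); for the failure (iv)$\not\Rightarrow$(iii) one exhibits an explicit $3\times3$ unitary with $\Mab<1$ but one zero entry (e.g.\ a direct-sum-like construction or a small perturbation of a permutation matrix kept unitary).

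For $d\ge4$, the chain (i)$\Rightarrow$(ii)$\Rightarrow$(iii)$\Rightarrow$(iv) is already in hand; the non-implications require explicit matrices. For (iv)$\not\Rightarrow$(iii): any unitary with $\Mab<1$ but containing a zero entry, e.g.\ a block-diagonal or suitably padded construction. For (iii)$\not\Rightarrow$(ii) when $d\ge4$: I want a unitary $U$ with all entries nonzero but with some $[\Pi_\Acal(S),\Pi_\Bcal(T)]=0$ for $1\le|S|,|T|<d$; the cleanest route is to take a block structure where a $2$-dimensional $\Acal$-coordinate subspace coincides with a $2$-dimensional $\Bcal$-coordinate subspace — i.e.\ $U$ of block form $U_1\oplus U_2$ in a basis adapted to both — which forces $\Pi_\Acal(S)\Hcal=\Pi_\Bcal(T)\Hcal$ for the corresponding $S,T$ (so the commutator vanishes) while $U_1,U_2$ can each be chosen with no zero entries so that $\mab>0$; here one must be slightly careful that $S,T$ are proper, which needs $d\ge4$. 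The main obstacle — and the part needing the most care — is the $d=4$ and $d=6$ claim that (ii)$\not\Rightarrow$(i): one must produce, for each of these dimensions, a unitary $U$ such that $[\Pi_\Acal(S),\Pi_\Bcal(T)]\ne0$ for \emph{all} proper $S,T$ (a condition on all $\binom{d}{|S|}\binom{d}{|T|}$ pairs, equivalently, by the argument above, $\Pi_\Acal(S)\Hcal\cap\Pi_\Bcal(T)\Hcal=\{0\}$ for all $|S|+|T|=d$ with $1\le|S|,|T|<d$ of the ``wrong'' type is not quite COINC) yet some minor of $U$ vanishes, witnessing by Lemma~\ref{lem:minorcrit} that $U$ is not COINC; for $d=4$ one naturally reaches for a perturbation of an MUB (recalling that no MUB is COINC in $d=4$ by~\cite{SDB21}) chosen to keep the relevant $1$- and $2$-dimensional intersections trivial while some $2\times2$ or $3\times3$ minor stays zero, and for $d=6$ the Tao matrix~\eqref{eq:taomatrix} or a near neighbour is the natural candidate, the verification being essentially the content of Appendix~\ref{app:tao}. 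I expect this last existence/verification step to be the real work; everything else is either already proven in the excerpt or a short dimension count.
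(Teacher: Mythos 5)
Your chain (i)$\Rightarrow$(ii)$\Rightarrow$(iii)$\Rightarrow$(iv) and the low-dimensional equivalences follow essentially the paper's route: the contrapositive dimension-count with complements for (i)$\Rightarrow$(ii), and Lemma~\ref{lem:minorcrit} plus the adjugate identity $U^{-1}=U^\dagger$ (so that the $(d-1)$-minors are nonzero multiples of entries of $U^\dagger$) for $d=2,3$. These parts are sound; your (ii)$\Rightarrow$(iii) via the rank-one projectors $S=\{i\}$, $T=\{j\}$, for which $\PiAcal(S)\PiBcal(T)=0=\PiBcal(T)\PiAcal(S)$ when $\langle a_i|b_j\rangle=0$, is even slightly more direct than the paper's choice $S=\{i\}^{\mathrm c}$, $T=\{j\}$.

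The genuine gap is your counterexample scheme for ``(iii) does not imply (ii)'' in $d\geq4$. You propose to make a $2$-dimensional $\Acal$-coordinate subspace coincide with a $2$-dimensional $\Bcal$-coordinate subspace, i.e.\ $\PiAcal(S)\Hcal=\PiBcal(T)\Hcal$ with $U=U_1\oplus U_2$, and to keep $\mab>0$ by choosing $U_1,U_2$ without zeros. This cannot work: if $\PiAcal(S)\Hcal=\PiBcal(T)\Hcal$ with $S,T$ proper, then every $|b_j\rangle$ with $j\notin T$ is orthogonal to every $|a_i\rangle$ with $i\in S$, so the off-diagonal blocks of the transition matrix vanish and $\mab=0$ no matter how $U_1,U_2$ are chosen; condition (iii) fails by construction. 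What is actually needed is a pair of commuting projectors whose ranges are \emph{not} equal (all four intersections $\PiAcal(S)\Hcal\cap\PiBcal(T)\Hcal$, $\PiAcal(S)\Hcal\cap\PiBcal(T^{\mathrm c})\Hcal$, etc.\ being proper, nonzero subspaces) while every overlap $\langle a_i|b_j\rangle$ stays nonzero; the paper achieves this with the $d=4$ MUB family $U(s)$, where $\PiAcal(\{1,2\})$ and $\PiBcal(\{1,2\})$ commute although the two subspaces differ, and with a simplex construction for general $d\geq4$. Separately, your treatment of ``(ii) does not imply (i)'' for $d=4,6$ remains a plan rather than a proof: for $d=4$ one must actually exhibit a non-COINC $U$ and verify \emph{all} commutators (the paper constructs an explicit two-parameter family and checks the nine essentially distinct $|S|=|T|=2$ cases at $\theta=\theta'=\pi/3$), and for $d=6$ the Tao matrix is indeed the right candidate, but the verification of (ii) is a separate check that $U(S^{\mathrm c},T)U(S,T)^\dagger\neq0$ for all proper $S,T$ (done numerically in the paper), not the uncertainty-diagram computation of Appendix~\ref{app:tao}. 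Finally, your $d=3$ witness for (iv)$\not\Rightarrow$(iii) should be an explicit unitary with a zero entry but no unimodular entry (e.g.\ the spin-$1$ transition matrix); a literal direct sum would have $\Mab=1$.
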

The proof of the proposition is given below, but we first discuss its meaning by interpreting each of the four statements (i)-(iv) as a manifestation of incompatibility. 
The interpretation of (i) was given above. We need to understand the meaning of (ii), which is a strong statement of ``noncommutativity''.  Indeed, when two operators $A$ and $B$ commute, then all their spectral projectors commute. Saying that they don't commute means therefore that there exist some spectral projectors that don't commute; (ii) is much stronger, since it requires no spectral projectors to commute. 
For example, the spin 1 bases discussed above clearly do not satisfy (ii), as explained above. 

 To see that (ii) implies also a strong notion of incompatibility, first  recall that we showed  two outcomes $S$ and $T$ are compatible (Definition~\ref{def:compatibleoutcomes})  if and only if $[\PiAcal(S),\PiBcal(T)]=0$ (see Eq.~\eqref{eq:compatibleST}). So (ii) can be paraphrased by saying that no two outcomes $S$ and $T$ are compatible for a measurement in $\Acal$ followed by one in $\Bcal$. This is indeed a strong notion of incompatibility. It can in fact be viewed as the negation of a weak form of compatibility given  by the negation of (ii): there exist nontrivial $S, T$ so that   $[\PiAcal(S),\PiBcal(T)]=0$. However, note that  (ii) only means that for any two such outcomes, it \emph{may} happen -- depending on the pre-measurement state -- that the occurrence of the second one perturbs the first one. In other words, the second statement guarantees the possible perturbation of the first measurement by the second for all outcomes, but not for all states. One then understands why the second assertion is a weaker form of incompatibility than the first. Indeed, when (i) holds, the second outcome \emph{always} perturbs the first one, as explained above. In Appendix~\ref{s:counterexamples} we give  examples of bases $\Acal$ and $\Bcal$ satisfying (ii) but not (i) in dimensions $4$ and $6$. 
We conjecture it is true that (ii) does not imply (i) in all dimensions $d\geq 4$, but we have not produced such examples in other dimensions than $4$ and $6$. Note  that checking condition (ii) in higher dimension is necessarily more involved since there are then many more choices of $S$ and $T$.

Statement (iv), namely $\Mab<1$ guarantees that  none of the basis vectors of $\Acal$ is  a basis vector of $\Bcal$, and vice versa. This means that, when a first fine-grained measurement in $\Acal$ leaves the system in a basis state of $\Acal$, a subsequent measurement of $\Bcal$ has at least two possible outcomes with nonzero probability and therefore necessarily disturbs the first measurement, which is indeed a manifestation of incompatibility. So a fine-grained measurement in $\Bcal$ then necessarily perturbs a previous fine-grained measurement in $\Acal$. In particular, if $\Acal$ and $\Bcal$ are  eigenbases of two observables $A$ and $B$ with nondegenerate spectra, then (iv) implies the noncommutativity of all $|a_i\rangle\langle a_i|$ and $|b_j\rangle\langle b_j|$ which implies the noncommutativity of $A$ and $B$~\cite{footnote1}. The converse is not true: $A$ and $B$ may not commute and nevertheless have some eigenvectors in common, in which case $\Mab=1$.

Statement (iii) is a strengthening of (iv). It is sometimes referred to as a condition of ``complementarity'' for the two bases. This is because the condition $\mab>0$ implies that, when the system is in one of the basis states of $\Acal$ and a fine-grained measurement is made in $\Bcal$, then any outcome $|b_j\rangle$ may occur with nonvanishing probability. And vice versa. In that sense, there is uncertainty in the outcome of this second measurement, and this uncertainty is larger when $\mab$ is larger. Some authors~\cite{DuEnBeZy10} reserve the term ``complementary'' or ``maximally incompatible'' for mutually unbiased bases, for which these probabilities are the same for all $j$ so that $\mab$ takes on its maximal possible value $1/\sqrt d$.  When (iii) holds, repeated successive fine-grained measurements of $\Acal$ and $\Bcal$ do not  yield the same result with probability one.   In fact, they can give any result and in that sense there is incompatibility between $\Acal$ and $\Bcal$. The proposition asserts this form of incompatibility  does not imply the stronger condition on the commutators in (ii), in any dimension $d\geq 4$. 

We now turn to the proof of Proposition~\ref{prop:incompatibilities} which contains instructive examples and counterexamples.\\
\noindent\emph{Proof.} (i) $\Rightarrow$ (ii). We prove the contrapositive. Suppose there exist $S,T$, with $1\leq |S|, |T|<d$, $[\PiAcal(S),\PiBcal(T)]=0$. Then also 
$[\PiAcal(S),\PiBcal(T^c)]=0=[\PiAcal(S^c),\PiBcal(T)]=[\PiAcal(S^c),\PiBcal(T^c)]$. So we can assume without loss of generality that  $1\leq |S|\leq |S^c|<d$,  $1\leq |T|\leq |T^c|<d$, and $1\leq |S|\leq |T|<d$. It follows that $|S|+|T|\leq |S|+ |T^c|\leq d$. Suppose there exists $\phi\in\Hcal$ so that $|\psi\rangle=\PiAcal(S)\PiBcal(T)\phi\not=0$. Then $0\not=\psi\in \PiAcal(S)\Hcal \cap\PiBcal(T)\Hcal$, so that $\Acal, \Bcal$ are not COINC. If such a $\phi$ does not exist, then $\PiAcal(S)\Pi_\Bcal(T)=0$. Now let $0\not=\psi\in\PiAcal(S)\Hcal$. Then $|\psi\rangle=\PiAcal(S)\psi=\PiAcal(S)\PiBcal(T^c)\psi=\PiBcal(T^c)\PiAcal(S)\psi$. Hence $0\not=\psi\in \PiAcal(S)\Hcal \cap \PiBcal(T^c)\Hcal$, implying $\Acal, \Bcal$ are not COINC.\\
(ii) $\Rightarrow$ (iii). We prove the contrapositive. Suppose $\mab=0$. Then there exist $i,j$ so that $\langle a_i|b_j\rangle=0$. Possibly renumbering the bases, we can assume $i=1=j$. Hence $|b_1\rangle\in \PiAcal(S)\Hcal$, with $S=\{2,3,\dots, d\}$. Setting $T=\{1\}$, this means $\PiBcal(T)\Hcal\subset \PiAcal(S)\Hcal$. Hence $[\PiAcal(S),\PiBcal(T)]=0$ (Lemma~\ref{lem:commutingproj}~(iii)) and so (ii) does not hold.\\ 
(iii) $\Rightarrow$ (iv). See Eq.~\eqref{eq:mabMab}. 

We now consider the reverse implications. We first show that, if $d=2$ or $3$, then (iii) implies (i) (and hence~(ii) by the above). We will use Lemma~\ref{lem:minorcrit} for that purpose. Note that, if (iii) holds, then $U^\dagger$, which is the inverse of $U$, also has no zero elements. In addition, the matrix elements of $U^\dagger$ are non-zero multiples of $(d-1)$-minors of $U$. This implies that, when $d=2,3$, no minors of $U$ vanish. Hence the bases are COINC.

That (iv) does not imply (iii) is obvious for all $d\geq 3$ and is illustrated by the spin $1$ example given above for $d=3$. 

We now show that (iii) does not imply (ii) if $d\geq 4$. Treating first the case $d=4$,  we consider 
\begin{equation*}
U(s)=\frac12
\begin{pmatrix}
1&1&1&1\\
1&1&-1&-1\\
1&-1&s&-s\\
1&-1&-s&s
\end{pmatrix}, \quad |s|=1; U^\dagger(s)=U(\overline s).
\end{equation*}
These are the most general transition matrices for MUB in dimension $d=4$, up to permutations of rows and columns, and global phase~\cite{Ba22}. Taking
$S=\{1,2\}=T$, we have
$$
\PiAcal(S)=
\begin{pmatrix}
1&0&0&0\\
0&1&0&0\\
0&0&0&0\\
0&0&0&0
\end{pmatrix}
\ 
\textrm{and}
\  
\PiBcal(T)=\frac12
\begin{pmatrix}
1&1&0&0\\
1&1&0&0\\
0&0&1&1\\
0&0&1&1
\end{pmatrix}
$$
These clearly commute so (ii) does not hold. An analogous construction yields the result in all dimensions $d\geq 4$ as we now show. Let $n=d-2$ and set, for some $a,\delta\in\C$,
$$
U=
\begin{pmatrix}
a&a&\delta&\dots&\delta\\
a&a&-\delta&\dots&-\delta\\
\delta&-\delta&u_{11}&\dots&u_{1n}\\
\vdots&\vdots&\vdots&\vdots&\vdots\\
\delta&-\delta&u_{n1}&\dots&u_{nn}
\end{pmatrix}
$$
The normalization and orthogonality of the first two columns and rows implies
$$
|a|=\frac12,\quad |\delta|=\frac1{\sqrt{2n}}.
$$
Now write $u_i=(u_{i1}\dots u_{in})^T\in\R^n$, then the matrix $U$ will be unitary provided one has, for all $1\leq i<j\leq n$,
$$
\langle 1|u_i\rangle=0,\quad \langle u_i|u_i\rangle=1-2\delta^2,\quad \langle u_i|u_j\rangle=-2\delta^2,
$$
where $|1\rangle=(1 \dots 1)^T\in\R^n$. Normalizing the $u_i$, one has $\tilde u_i=u_i/\|u_i\|$ and, for all $1\leq i<j\leq n$, 
$$
\langle \tilde u_i|\tilde u_j\rangle=-\frac1{n-1}.
$$
This means the $\tilde u_i$ form a regular simplex of  $n$ equidistant points on the unit sphere in the $(n-1)$-dimensional Euclidean plane perpendicular to $|1\rangle\in\R^n$.  When $d=4$, $n=2$ and $\tilde u_2=-\tilde u_1$. When $d=5$, $n=3$  and $\tilde u_1, \tilde u_2, \tilde u_3$ form an equilateral triangle. When $d=6$, $n=4$, and $\tilde u_1,\dots \tilde u_4$ form a regular tetrahedron.  The matrix $U$ so constructed clearly satisfies (iii) but, for the same reason as in the case $d=4$, it does not satisfy (ii). 

It remains to show that in dimension $d=4$ and $d=6$ it is not true that (ii) implies (i). 
We start with $d=4$. Let
$$
|c\rangle=\frac1{\sqrt2}(|1-2\rangle +|3+4\rangle),\quad |c'\rangle=\frac1{\sqrt2}(|1-2\rangle-|3+4\rangle),
$$
where 
$$
|1\pm 2\rangle=\frac1{\sqrt2}(|a_1\rangle\pm|a_2\rangle),
$$
and similarly for $|3\pm4\rangle$. Then define, for $\theta,\theta'\in[0,\pi]$,
\begin{eqnarray*}
|b_1\rangle&=& \cos\theta |1+2\rangle +\sin\theta |c\rangle \\
|b_2\rangle&=& -\sin\theta |1+2\rangle +\cos\theta |c\rangle\\
|b_3\rangle&=&\cos\theta' |3-4\rangle +\sin\theta' |c'\rangle \\
|b_4\rangle&=& -\sin\theta' |3-4\rangle +\cos \theta' |c'\rangle
\end{eqnarray*}
The $|b_i\rangle$ clearly form an orthonormal basis for all $\theta, \theta'$. Also, $|1+2\rangle\in\PiBcal(\{1,2\})$ so that $\PiAcal(\{1,2\})\cap \PiBcal(\{1,2\})\not=\{0\}$. Hence the bases $\Acal$ and $\Bcal$ are not COINC. The transition matrix between the two bases is
\begin{eqnarray*}
U=
\begin{pmatrix}
\frac1{\sqrt2}\cos\theta+\frac12\sin\theta&-\frac1{\sqrt2}\sin\theta+\frac12\cos\theta&\frac12\sin\theta'&\frac12\cos\theta'\\
\frac1{\sqrt2}\cos\theta-\frac12\sin\theta&-\frac1{\sqrt2}\sin\theta-\frac12\cos\theta&-\frac12\sin\theta'&-\frac12\cos\theta'\\
\frac12\sin\theta&\frac12\cos\theta&\frac1{\sqrt2}\cos\theta'-\frac12\sin\theta'&-\frac1{\sqrt2}\sin\theta'-\frac12\cos\theta'\\
\frac12\sin\theta&\frac12\cos\theta&-\frac{1}{\sqrt2}\cos\theta'-\frac12\sin\theta'&\frac1{\sqrt2}\sin\theta'-\frac12\cos\theta'
\end{pmatrix}
\end{eqnarray*}
To show (ii) holds, we need to show none of the  commutators $[\PiAcal(S),\PiBcal(T)]$ vanish, if $1<|S|,|T|<d$. Choosing $\theta,\theta'$ so that $\mab>0$, it is immediate from Lemma~\ref{lem:commutingproj}~(v) that, for 
$|S|=1$ or $|T|=1$, $[\PiAcal(S),\PiBcal(T)]\not=0$. Indeed, since the transition matrix has no zeros, no eigenvector $|a_i\rangle$ is contained in any of the spaces $\PiBcal(T)\Hcal$, nor is it perpendicular to it. Since $[\PiAcal(S),\PiBcal(T)]=[\PiAcal(S^{\textrm c}),\PiBcal(T)]=[\PiAcal(S),\PiBcal(T^{\textrm c})]=[\PiAcal(S^{\textrm c}),\PiBcal(T^{\textrm c})]$, the same holds when $|S|=3$ or $|T|=3$.  It remains therefore only to check the cases where
$|S|=2=|T|$, of which there are 36. However, since when the statement is true for some $S,T$, it also is for $S, T^{\textrm c}$, \emph{etc.}, we need to only check  nine cases, which we take to be $S=\{1,2\}, \{1,3\}, \{1,4\}$ and $T=\{1,2\}, \{1,3\}, \{1,4\}$. We will use Lemma~\ref{lem:propnoncommutcrit} and therefore need to check, for each of those choices, that $\PiAcal(S^{\textrm c})\PiBcal(T)\PiAcal(S)\not=0$. If $S=\{i,j\}$, $S^{\textrm c}=\{m,n\}$ and $T=\{k,\ell\}$, we have 
\begin{eqnarray*}
\PiAcal(S^{\textrm c})\PiBcal(T)\PiAcal(S)&=&\\
&\ &\hskip -2cm |a_m\rangle\left[\langle a_m|b_k\rangle\langle b_k|a_i\rangle+\langle a_m|b_\ell\rangle\langle b_\ell|a_i\rangle\right]\langle a_i|+|a_m\rangle\left[\langle a_m|b_k\rangle\langle b_k|a_j\rangle+\langle a_m|b_\ell\rangle\langle b_\ell|a_j\rangle\right]\langle a_j|\\
&\ &\hskip -2cm +|a_n\rangle[\langle a_n|b_k\rangle\langle b_k|a_i\rangle+\langle a_n|b_\ell\rangle\langle b_\ell|a_i\rangle]\langle a_i|+
|a_n\rangle[\langle a_n|b_k\rangle\langle b_k|a_j\rangle+\langle a_n|b_\ell\rangle\langle b_\ell|a_j\rangle]\langle a_j|
\end{eqnarray*}
Since all four terms appearing here are linearly independent, it is enough to check for each choice of $S$ and $T$  that one of the four coefficients in square brackets does not vanish. The computations are straightforward but tedious and we don't reproduce the details here. Setting $\theta=\frac{\pi}{3}=\theta'$, one does indeed find $\PiAcal(S^{\textrm c})\PiBcal(T)\PiAcal(S)\not=0$ for the above choices of $T$ and $S$.

We now turn again to the Tao matrix in $d=6$. We know it is not COINC and will now show it satisfies nevertheless condition~(ii). Using Lemma~\ref{lem:propnoncommutcrit}, one easily sees that the latter is equivalent to 
$$
 \forall S,T\in\llbracket 1,d\rrbracket, 1\leq |S|, |T| <d, U(S^{\textrm{c}}, T)U(S,T)^\dagger\not=0.
 $$
Because of the very larg number of index sets $S,T$ to check, we resorted to a numerical computation to show  that, when $U$ is the Tao matrix, this is indeed true.
\qed

\subsection{The uncertainty diagram of COINC bases}\label{s:unc_coinc}
In this section we analyze what the complete incompatibility of two bases implies for their uncertainty diagram: we will see it can be completely determined and that it has a very simple form. 

\begin{theorem}\label{thm:COINCsupport_new} Let $\Acal, \Bcal$ be two orthonormal bases of $\Hcal$. Then the following statements are equivalent:\\
\begin{tabular}{ll}
(i) & $\Acal$ and $\Bcal$ are COINC;\\
(ii) &$\UNCD(\Acal,\Bcal)\subset \{\na+\nb\geq d+1\}\cap \llbracket 1,d\rrbracket^2$;\\
(iii)& $\UNCD(\Acal,\Bcal)= \{\na+\nb\geq d+1\}\cap \llbracket 1,d\rrbracket^2$
\end{tabular}\\
In particular, $\nabmin=d+1$ and the lower edge $L(\na)$, defined in~\eqref{eq:loweredge}, satisfies $L(\na)=d+1-\na$. 
\end{theorem}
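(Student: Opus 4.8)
The plan is to prove the three equivalences through the two intermediate equivalences (i)$\Leftrightarrow$(ii) and (ii)$\Leftrightarrow$(iii) — the latter having one trivial direction and one substantive one — and then to read off the statements about $\nabmin$ and $L(\na)$ from the description of $\UNCD(\Acal,\Bcal)$ in (iii).

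First, (i)$\Leftrightarrow$(ii). For (ii)$\Rightarrow$(i) I argue by contraposition: if $\Acal,\Bcal$ are not COINC, Definition~\ref{def:COINC} supplies $S,T\subset\llbracket 1,d\rrbracket$ with $|S|+|T|\le d$ and a nonzero $\psi\in\PiAcal(S)\Hcal\cap\PiBcal(T)\Hcal$; since $\psi\in\PiAcal(S)\Hcal$ forces $S_\psi\subset S$ and likewise $T_\psi\subset T$, one gets $\na(\psi)+\nb(\psi)\le|S|+|T|\le d$, so $(\na(\psi),\nb(\psi))\in\UNCD(\Acal,\Bcal)$ violates (ii). For (i)$\Rightarrow$(ii), let $(\na,\nb)\in\UNCD(\Acal,\Bcal)$ be realized by some $\psi\ne 0$; then $0\ne\psi\in\PiAcal(S_\psi)\Hcal\cap\PiBcal(T_\psi)\Hcal$, so complete incompatibility forbids $|S_\psi|+|T_\psi|\le d$, whence $\na(\psi)+\nb(\psi)\ge d+1$.

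Since (iii)$\Rightarrow$(ii) is immediate, it remains to prove (ii)$\Rightarrow$(iii), equivalently — using (ii)$\Leftrightarrow$(i) — that if $\Acal,\Bcal$ are COINC then every $(\na,\nb)\in\llbracket 1,d\rrbracket^2$ with $\na+\nb\ge d+1$ lies in $\UNCD(\Acal,\Bcal)$. Fix such a pair, pick index sets $S,T$ with $|S|=\na$, $|T|=\nb$, and set $V=\PiAcal(S)\Hcal\cap\PiBcal(T)\Hcal$. The argument has two ingredients. The first is a dimension count: since $|S^{\mathrm c}|+|T^{\mathrm c}|=2d-\na-\nb\le d$, COINC gives $\PiAcal(S^{\mathrm c})\Hcal\cap\PiBcal(T^{\mathrm c})\Hcal=\{0\}$, and taking orthogonal complements this says $\PiAcal(S)\Hcal+\PiBcal(T)\Hcal=\Hcal$, so $\dim V=\na+\nb-d\ge 1$. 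The second is that no $\Acal$- or $\Bcal$-coordinate is forced to vanish on $V$: for $i\in S$ the subspace $W_i=\{\psi\in V:\langle a_i|\psi\rangle=0\}$ equals $\PiAcal(S\setminus\{i\})\Hcal\cap\PiBcal(T)\Hcal$, which by the same dimension count (when $\na+\nb\ge d+2$) has dimension $\dim V-1$, and which by COINC applied directly equals $\{0\}$ in the borderline case $\na+\nb=d+1$; in both cases $W_i\subsetneq V$, and symmetrically $W'_j=\{\psi\in V:\langle b_j|\psi\rangle=0\}\subsetneq V$ for $j\in T$. Because a vector space over the infinite field $\C$ is not a finite union of proper subspaces, there is a $\psi\in V$ avoiding all of the $W_i$ ($i\in S$) and $W'_j$ ($j\in T$); for that $\psi$ one has $S_\psi=S$, $T_\psi=T$, hence $\na(\psi)=\na$, $\nb(\psi)=\nb$ and $(\na,\nb)\in\UNCD(\Acal,\Bcal)$. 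This yields the inclusion $\supset$ in (iii), while $\subset$ is exactly (ii).

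Finally, from (iii), $\nabmin=\min\{\na+\nb:(\na,\nb)\in\llbracket 1,d\rrbracket^2,\ \na+\nb\ge d+1\}=d+1$, attained at $(1,d)$, and for each $\na\in\llbracket 1,d\rrbracket$, $L(\na)=\min\{\nb\in\llbracket 1,d\rrbracket:\na+\nb\ge d+1\}=d+1-\na$, the minimum being legitimate because $1\le d+1-\na\le d$. The main obstacle is the realizability step (ii)$\Rightarrow$(iii): one must treat the borderline case $\na+\nb=d+1$ separately, where the dimension formula for $V$ collapses to $\dim V=1$ and the triviality of the $W_i$ must be taken from Definition~\ref{def:COINC} rather than from a dimension comparison, and one must invoke that $\C$ is infinite in order to rule out a ``hole'' of the uncertainty diagram at $(\na,\nb)$.
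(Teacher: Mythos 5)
Your proof is correct, and its overall skeleton matches the paper's: (i)$\Leftrightarrow$(ii) is argued exactly as in the paper (any $\psi$ lies in $\PiAcal(S_\psi)\Hcal\cap\PiBcal(T_\psi)\Hcal$, and conversely a nontrivial intersection with $|S|+|T|\leq d$ produces a low-support state), and (i)$\Rightarrow$(iii) is proved by showing $\dim\bigl(\PiAcal(S)\Hcal\cap\PiBcal(T)\Hcal\bigr)=|S|+|T|-d$ and then exhibiting a vector of full support inside that intersection. Where you genuinely diverge is in how the dimension count is obtained: the paper's Lemma~\ref{lem:COINCsupport} expands $\psi$ in the $\Bcal$-basis and invokes the minor criterion of Lemma~\ref{lem:minorcrit} to show that the resulting $(K-R)\times K$ homogeneous system has maximal rank, whereas you apply the COINC condition to the complementary sets $S^{\mathrm c},T^{\mathrm c}$ (using $|S^{\mathrm c}|+|T^{\mathrm c}|\leq d$), pass to orthogonal complements to get $\PiAcal(S)\Hcal+\PiBcal(T)\Hcal=\Hcal$, and then use the dimension theorem — so you need only the set-theoretic definition of complete incompatibility, not its algebraic characterization via nonvanishing minors, which makes the argument more self-contained (the paper in effect derives the sum-equals-$\Hcal$ statement as a byproduct, in the opposite order). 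For the final full-support step, you remove the finitely many proper subspaces $W_i,W'_j$ and invoke the fact that a complex vector space is not a finite union of proper subspaces, which is the same mechanism as the paper's open-and-dense argument (the paper's lemma yields the slightly stronger density statement, which the theorem itself does not require); your separate treatment of the borderline case $\na+\nb=d+1$, where $W_i=\{0\}$ follows directly from Definition~\ref{def:COINC} rather than from a dimension comparison, is handled correctly and simply corresponds to the case $R=1$ treated uniformly in the paper.
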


Part~(ii) of the theorem is essentially a rephrasing of the definition of the total incompatibility of two bases in terms of the localization properties of all states in the $\Acal$- and $\Bcal$-representations. Precisely, as already explained in~\cite{SDB21}, it ensures that $\Acal$ and $\Bcal$ are COINC if and only if  all states $|\psi\rangle$ have a support uncertainty that is at least equal to $d+1$. This means, for example, that if $|\psi\rangle$ is the superposition of two (respectively three, \dots) basis vectors $|a_i\rangle$, then it has nonvanishing coefficients on at least $d-1$ (respectively $d-2$, $d-3$,\dots) basis vectors $|b_j\rangle$. 
The uncertainty diagram of COINC bases therefore forms a triangle whose lower edge has equation $\na+\nb=d+1$. The relation $\na(\psi)+\nb(\psi)\geq d+1$ is clearly an uncertainty relation. It is much stronger than the multiplicative uncertainty relation Eq.~\ref{eq:uncprincab}. But of course, it only holds for a restricted class of bases, namely the COINC ones. 

Part~(iii) of the theorem gives a stronger assertion. It says that, if $\Acal$ and $\Bcal$ are COINC, given any two index sets $S,T\subset\llbracket 1,d\rrbracket$ such that $|S|+|T|\geq d+1$, there exist a state $|\psi\rangle$ with 
$S_\psi=S, T_\psi=T$. In fact, the proof shows that the set of such states is an open and dense subset of $\Pi_\Acal(S)\Hcal\cap \Pi_\Bcal(T)\Hcal$, which is a $(|S|+|T|-d)$-dimensional subspace of $\Hcal$. (See Lemma~\ref{lem:COINCsupport} below.)  In particular, a randomly chosen $|\psi\rangle$ in $\Pi_\Acal(S)\Hcal\cap \Pi_\Bcal(T)\Hcal$ has these support properties. The theorem is illustrated graphically in  the third panel of Fig.~\ref{fig:uncdiagTAO} for the example of the DFT with $d=7$, which is COINC as already pointed out.  Note the absence of states with $(n_\Acal(\psi), n_\Bcal(\psi))$ in the region of the $(n_\Acal, n_\Bcal)$-plane above or on the curve $n_\Acal n_\Bcal=d$  and strictly below the line $n_\Acal+n_\Bcal=d+1$. 

The theorem also provides a link between total incompatibility and minimal uncertainty. Suppose $|\psi\rangle$ is a minimal support uncertainty state, for which therefore $\nab(\psi)=\na(\psi)+\nb(\psi)=d+1$.  Suppose now we wish to reduce the uncertainty in an $\Acal$-measurement by considering a state $\phi$ for which $\na(\phi)=\na(\psi)-1$. Then necessarily, for this state, $\nb(\phi)\geq \nb(\psi)+1$. So the gain in precision   (or the decrease of uncertainty) of the $\Acal$-measurement is compensated by at least an equal loss in precision for the $\Bcal$-measurement. 
 In this sense, when the bases are COINC, the increase of precision on the variable associated to $\Acal$ is constrained optimally by the loss of precision on the one associated to $\Bcal$. 
 
\noindent\emph{Proof.}  (i) $\Rightarrow$ (ii).  This is an immediate consequence of the definition of COINC. 
Indeed,  clearly $|\psi\rangle\in \PiAcal(S_\psi)\Hcal\cap \PiBcal(T_\psi)\Hcal$. Hence $|S_\psi|+|T_\psi|>d$. 

(i) $\Leftarrow$ (ii). Let us prove the contrapositive. Suppose $\Acal$ and $\Bcal$ are not COINC. Then there exist $S,T$, with $|S|+|T|\leq d$ and  $\Pi_\Acal(S)\Hcal\cap\Pi_\Bcal(T)\Hcal\not=\{0\}$. Hence there exist a state $|\psi\rangle\in\Pi_\Acal(S)\Hcal\cap\Pi_\Bcal(T)\Hcal$. For this state $n_{\Acal}(\psi)\leq |S|, n_{\Bcal}(\psi)\leq |T|$. Hence $n_{\Acal, \Bcal}(\psi)\leq d$.

Since clearly (iii) implies (ii), it remains to prove (i) implies (iii).
For that purpose, let $1\leq R\leq d$, and consider $S,T\in\llbracket 1,d\rrbracket$ so that $|S|+|T|=d+R$. Reorganizing the labels on the basis vectors we can assume, without loss of generality, that
$$
T=\{1,2,\dots, K\},\quad S=\{1,2, \dots, d-(K-R)\},
$$
with $0\leq K-R\leq d-1$. Note that $R\leq K$. 
 
We need to construct $|\psi\rangle\in\Pi_\Acal(S)\Hcal\cap \Pi_\Bcal(T)\Hcal$ with the condition that $S_\psi=S, T_\psi=T.$ 
Lemma~\ref{lem:COINCsupport}  provides the result.\qed
\begin{lemma}\label{lem:COINCsupport}
If $\Acal$ and $\Bcal$ are COINC, and $S,T\in\llbracket 1,d\rrbracket$ so that $|S|+|T|=d+R$, $R\geq 1$,  then $\Pi_\Acal(S)\Hcal\cap \Pi_\Bcal(T)\Hcal$
is $R$-dimensional and $\Pi_\Acal(S)\Hcal + \Pi_\Bcal(T)\Hcal=\Hcal$. In addition, the subset of  $|\psi\rangle\in \Pi_\Acal(S)\Hcal\cap \Pi_\Bcal(T)\Hcal$ for which 
$T_\psi=T, S_\psi=S$ is dense and open in $ \Pi_\Acal(S)\Hcal\cap \Pi_\Bcal(T)\Hcal$. 
\end{lemma}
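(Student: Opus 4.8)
The plan is to prove the three assertions of Lemma~\ref{lem:COINCsupport} in sequence, each building on the previous one, using the minor criterion (Lemma~\ref{lem:minorcrit}) as the main technical input.

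\emph{Dimension count.} First I would establish that $\dim\bigl(\Pi_\Acal(S)\Hcal\cap\Pi_\Bcal(T)\Hcal\bigr)=R$ and that the two subspaces span $\Hcal$. The inequality $\dim\bigl(\Pi_\Acal(S)\Hcal\cap\Pi_\Bcal(T)\Hcal\bigr)\geq |S|+|T|-d=R$ is automatic from $\dim(V\cap W)\geq\dim V+\dim W-\dim\Hcal$, so the content is the reverse inequality, equivalently $\Pi_\Acal(S)\Hcal+\Pi_\Bcal(T)\Hcal=\Hcal$. I would argue the contrapositive via orthogonal complements: if the sum is a proper subspace, then there is a nonzero $\phi\perp\Pi_\Acal(S)\Hcal$ and $\phi\perp\Pi_\Bcal(T)\Hcal$, i.e.\ $\phi\in\Pi_\Acal(S^{\mathrm c})\Hcal\cap\Pi_\Bcal(T^{\mathrm c})\Hcal$. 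But $|S^{\mathrm c}|+|T^{\mathrm c}|=2d-(d+R)=d-R\leq d-1<d$, so complete incompatibility forces $\Pi_\Acal(S^{\mathrm c})\Hcal\cap\Pi_\Bcal(T^{\mathrm c})\Hcal=\{0\}$, a contradiction. Hence the sum is all of $\Hcal$, and the dimension formula for sums then gives $\dim(\Pi_\Acal(S)\Hcal\cap\Pi_\Bcal(T)\Hcal)=|S|+|T|-d=R$ exactly.

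\emph{Genericity of the support condition.} For the second part, I would parametrize $\Hcal(S,T):=\Pi_\Acal(S)\Hcal\cap\Pi_\Bcal(T)\Hcal$ concretely. A vector $|\psi\rangle$ lies in $\Pi_\Acal(S)\Hcal$ iff its $\Acal$-components vanish off $S$; writing such a $\psi$ in the $\Bcal$-representation, the constraint $\langle a_i|\psi\rangle=0$ for $i\in S^{\mathrm c}$ together with $\langle b_j|\psi\rangle=0$ for $j\in T^{\mathrm c}$ is a homogeneous linear system in the $|T|$ free variables $v_j=\langle b_j|\psi\rangle$, $j\in T$, whose coefficient matrix is $U(S^{\mathrm c},T)$ (an $(d-|S|)\times|T|$ submatrix of $U$). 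By the dimension count just proved, the solution space has dimension exactly $R$, so $U(S^{\mathrm c},T)$ has full row rank $d-|S|=|T|-R$. Now for $\psi\in\Hcal(S,T)$, the condition $T_\psi=T$ says no $v_j$ ($j\in T$) vanishes, and $S_\psi=S$ says no $u_i=\langle a_i|\psi\rangle=\sum_{j\in T}U_{ij}v_j$ ($i\in S$) vanishes. Each of these is the nonvanishing of a fixed linear functional on the $R$-dimensional space $\Hcal(S,T)$; the set where it fails is a proper subspace (hence closed with empty interior) \emph{provided} that functional is not identically zero on $\Hcal(S,T)$. The complement of a finite union of proper subspaces is open and dense, which is exactly the claim — so everything reduces to checking that none of these $|S|+|T|$ linear functionals vanishes identically on $\Hcal(S,T)$.

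\emph{The key nonvanishing step (the main obstacle).} This last point is where the minor criterion enters and is, I expect, the crux of the argument. I would show: if the coordinate functional $v_{j_0}$ vanished identically on $\Hcal(S,T)$ for some $j_0\in T$, then actually $\Hcal(S,T)=\Pi_\Acal(S)\Hcal\cap\Pi_\Bcal(T\setminus\{j_0\})\Hcal$, forcing $\dim\geq R$ from a set of indices of total size $|S|+|T|-1=d+R-1$, which by the dimension formula in the first part (applied with $R$ replaced by $R-1$) gives dimension exactly $R-1$ — contradiction, unless one instead argues directly that vanishing of $v_{j_0}$ on all of $\Hcal(S,T)$ means the row-rank-$|T|-R$ matrix $U(S^{\mathrm c},T)$ has its solution space contained in a coordinate hyperplane, which translates into the vanishing of a maximal minor of the $(|T|-R)\times(|T|-1)$ matrix $U(S^{\mathrm c},T\setminus\{j_0\})$ — a minor of $U$, contradicting Lemma~\ref{lem:minorcrit}. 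The symmetric statement for the functionals $u_i$, $i\in S$, follows by exchanging the roles of $\Acal$ and $\Bcal$ (equivalently, replacing $U$ by $U^\dagger$, whose minors also never vanish, since COINC is symmetric in the two bases). Assembling: $\Hcal(S,T)$ minus finitely many proper subspaces is open and dense, and on it $S_\psi=S$, $T_\psi=T$. The delicate bookkeeping — correctly identifying which submatrix of $U$ a given degeneracy corresponds to, and confirming it is genuinely a minor (square submatrix) so that Lemma~\ref{lem:minorcrit} applies — is the part that needs care.
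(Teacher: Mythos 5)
Your proposal is correct, and its second half coincides with the paper's own argument: the locus in $\Hcal(S,T):=\Pi_\Acal(S)\Hcal\cap\Pi_\Bcal(T)\Hcal$ where a given coordinate $\langle b_j|\psi\rangle$ (resp.\ $\langle a_i|\psi\rangle$) vanishes is exactly $\Hcal(S,T\setminus\{j\})$ (resp.\ $\Hcal(S\setminus\{i\},T)$), which the dimension count shows is $(R-1)$-dimensional, and removing these finitely many proper subspaces leaves an open dense set --- this is precisely the paper's construction of the sets ${\mathcal O}_T$ and ${\mathcal O}_S$. Where you genuinely diverge is the dimension count itself. The paper works in the $\Bcal$-representation and invokes Lemma~\ref{lem:minorcrit} to show that the $(|T|-R)\times|T|$ matrix $U(S^{\mathrm c},T)$ has maximal rank, so the homogeneous system cutting out $\Hcal(S,T)$ inside $\Pi_\Bcal(T)\Hcal$ has an $R$-dimensional solution space, and only then deduces $\Pi_\Acal(S)\Hcal+\Pi_\Bcal(T)\Hcal=\Hcal$ from the dimension theorem. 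You prove the sum statement first, by duality: any vector orthogonal to $\Pi_\Acal(S)\Hcal+\Pi_\Bcal(T)\Hcal$ lies in $\Pi_\Acal(S^{\mathrm c})\Hcal\cap\Pi_\Bcal(T^{\mathrm c})\Hcal$, which is $\{0\}$ because $|S^{\mathrm c}|+|T^{\mathrm c}|=d-R\leq d$ and the bases are COINC; the exact value $\dim\Hcal(S,T)=R$ then drops out of the dimension theorem. This is more economical: your primary chain of reasoning uses only Definition~\ref{def:COINC} and never actually needs the minor criterion, whereas the paper's route leans on Lemma~\ref{lem:minorcrit} for the rank statement. Two small points to tidy in a final write-up: when $R=1$ the ``dimension formula with $R$ replaced by $R-1$'' is not literally available from the first part, but COINC applied to $(S,T\setminus\{j_0\})$ with $|S|+|T|-1=d$ gives $\Hcal(S,T\setminus\{j_0\})=\{0\}$ directly, so the contradiction persists; and your parenthetical alternative via ``the vanishing of a maximal minor of $U(S^{\mathrm c},T\setminus\{j_0\})$'' should be stated as the vanishing of \emph{all} maximal minors (i.e.\ a rank drop), though the subspace argument you give first makes that aside unnecessary.
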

\begin{proof}
Note that if $K=R$, then $S=\llbracket 1,d\rrbracket$ and the first statement is immediate.
We therefore consider the case where $R<K$. 
Let  $|\psi\rangle=\sum_{j=1}^K d_j |b_j\rangle\in \Pi_\Bcal(T)\Hcal$, with $d\in\C^K$. 
Then $|\psi\rangle\in \Pi_\Acal(S)\Hcal$ if and only if 
\begin{equation}\label{eq:homd}
\langle a_i|\psi\rangle =0=\sum_{j=1}^K \langle a_i|b_j\rangle d_j, \quad \forall  1\leq d-(K-R)+1\leq i\leq d.
\end{equation}
These are $K-R$ homogeneous equations in the $K$ unknows $d_j$.   
Then, by Lemma~\ref{lem:minorcrit}, the $(K-R)$ by $K$ matrix $(\langle a_i|a_j\rangle)_{ij}$, with $1\leq d-(K-R)+1\leq i\leq d, 1\leq j\leq K$ is of maximal rank $K-R$.  As a result, the solutions of the homogeneous equation~\eqref{eq:homd} form an $R$-dimensional subspace of $\Pi_\Bcal(T)\Hcal$.

By the dimension theorem for the sum of two vector spaces we have
$$
\dim (\PiAcal(S)\Hcal + \PiBcal(T)\Hcal) =\dim \PiAcal(S)\Hcal +\dim \PiBcal(T)\Hcal -\dim (\PiAcal\Hcal(S)\cap\PiBcal(T)\Hcal),
$$
which proves the first statement. 

Now fix $1\leq j\leq K$ and let $T'_j=T\setminus \{j\}$. Then, by the same reasoning $\Pi_\Acal(S)\cap\Pi_\Bcal(T'_j)$ is an $(R-1)$-dimensional subspace of $\Pi_\Acal(S)\cap\Pi_\Bcal(T)$.   It is the subspace of its states for which $\langle b_j|\psi\rangle=0$. Then consider 
$$
\psi\in {\mathcal O}_T:=\left(\Pi_\Acal(S)\right)\cap\Pi_\Bcal(T)\setminus\left(\cup_{j=1}^K \Pi_\Acal(S)\cap\Pi_\Bcal(T'_j)\right).
$$
Then $\langle b_j|\psi\rangle\not=0$, for all $1\leq j\leq K$. So $T_\psi=T$. Note that ${\mathcal O}_T$ is an open dense set because it is obtained by removing from the $R$-dimensional vector space $\Pi_\Acal(S)\cap\Pi_\Bcal(T)$ a finite number of $(R-1)$-dimensional vector spaces. We can similarly define
$$
{\mathcal O}_S:=\left(\Pi_\Acal(S)\right)\cap\Pi_\Bcal(T)\setminus\left(\cup_{i=1}^{d-(K-R)} \Pi_\Acal(S'_i)\cap\Pi_\Bcal(T)\right).
$$
Then $\langle a_i|\psi\rangle\not=0$, for all $1\leq i\leq d-(K-R)$. So $S_\psi=S$. Taking $|\psi\rangle\in {\mathcal O}_T\cap {\mathcal O}_S$, which is still open and dense, we obtain the desired result. 
\end{proof}


\section{Relating complete incompatibility and mutual unbiasedness}\label{s:COINC_MUB}
We have justified our definition of complete incompatibility of two bases $\Acal$ and $\Bcal$ (Definition~\ref{def:COINC}) through an analysis of 
the effect of a $\Bcal$-measurement on previously obtained information in an $\Acal$-measurement. We have seen that, essentially, two bases are COINC if and only if the measurement in $\Bcal$ always perturbs this previous information, whatever the outcome of the $\Acal$-measurement, which can be coarse grained or fine grained. 

One can, alternatively, consider only the situations where the $\Acal$-measurement is fine grained, so that the state after the $\Acal$-measurement is $|a_i\rangle$, and require that the subsequent fine-grained $\Bcal$ measurement has a maximally uncertain outcome, which means that $|\langle a_i|b_j\rangle|=1/\sqrt{d}$, so that all outcomes $b_j$ are equally likely. This idea has lead to the definition of MUB which are, as already mentioned, bases $\Acal$ and $\Bcal$ for which the transition matrix has the property that $\Mab=1/\sqrt d=\mab$. MUB have attracted attention because of their potential use in various quantum information protocols~\cite{Schw60, Iv81, PlRoPe06, DuEnBeZy10}. 
One may note that, when dealing with two conjugate (and hence continuous) variables $X$ and $P$, an analogous property holds; indeed $|\langle x|p\rangle|=\frac{1}{\sqrt{2\pi}}$ in that case, expressing the idea that when the localization in $X$ is perfect, the uncertainty in $P$ is maximal and vice versa. The definition of MUB captures this same phenomenon: perfect localization in the $\Acal$-representation leads to maximal uncertainty in the $\Bcal$-representation. However, conjugate variables have an additional property: their support in the $X$- and $P$-representations cannot be both finite. 
Indeed, it is well known that there do not exist states $|\psi\rangle$ for which the $X$-representation $\langle x|\psi\rangle$ is localized inside some bounded set $S\subset \R$ and the $P$-representation $\langle p|\hat \psi\rangle$ is localized in some bounded set $T\subset\R$~\cite{FoSi97}.
 In fact, if the $X$-support of $|\psi\rangle$ is bounded, its $P$-support must be unbounded.  The definition of complete incompatibility naturally transcribes this second property to a somewhat analogous statement in the finite-dimensional case.  
However, for the DFT in finite dimension, it follows from~\cite{Tao05} and Theorem~\ref{thm:COINCsupport_new} that it is COINC if and only if the dimension is prime. This shows not all MUB are COINC. We point out that, on the contrary, the noncommutativity/incompatibility measures introduced in~\cite{MoKa21} are maximal on MUB and therefore do not distinguish between them.

More generally, one may therefore ask the question under what circumstances MUB are COINC.  It is easy to see that in dimenion $d=2,3,5$ all MUB are COINC, whereas it was shown in~\cite{SDB21} that in dimension $d=4$, none of them are. In fact, we showed above that, in dimension $d=4$, no MUB satisfy the weaker condition (ii) of Proposition~\ref{prop:incompatibilities}. In higher dimension, the situation is not clear, in particular because a complete characterization of all MUB is not known for $d\geq 6$.  

Two questions arise therefore naturally:
\vskip0.2cm

\centerline{``In which dimensions do their exist MUB that are COINC?''}

\noindent and

\centerline{``In which dimensions are all MUB COINC?''}
\vskip0.2cm
\noindent The answer to the first question certainly includes all prime dimensions since, as already mentioned, the DFT is COINC in prime dimension.  Answering this question is probably not easy since a complete characterization of all MUB does not exist. Similarly, the only dimensions eligible for a positive answer to the second question are the prime dimensions. Again, in dimensions $2,3,5$ the assertion is valid, but in higher prime dimensions the answer is not known, to the best of our knowledge. 

While these questions seem difficult, as an immediate consequence of Theorem~\ref{thm:COINC_U_dense}, one nevertheless has that, arbitrarily close to any MUB, there are always bases that are COINC, in any dimension. This is the content of the following theorem, proven in Appendix~\ref{app:coincopendense}. 
\begin{theorem} \label{thm:COINC_mab} For all $d\geq 2$ and for all $0<m<\frac1{\sqrt d}$ there exist bases $\Acal, \Bcal$ that are COINC and whose transition matrix $U$ satisfies 
$m\leq \mab\leq 1/{\sqrt d}$.
\end{theorem}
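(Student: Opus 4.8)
The plan is to obtain this as an immediate corollary of Theorem~\ref{thm:COINC_U_dense} (density of the COINC unitaries) combined with the continuity of the map $U\mapsto\mab$, using a mutually unbiased transition matrix as a base point.

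First I would fix a convenient MUB to perturb around. Take $U_0$ to be the discrete Fourier transform matrix, $(U_0)_{ij}=\tfrac1{\sqrt d}\,\omega^{(i-1)(j-1)}$ with $\omega=\exp(2\pi i/d)$; this is unitary and satisfies $|(U_0)_{ij}|=\tfrac1{\sqrt d}$ for every $i,j$, so that $\mab(U_0)=\Mab(U_0)=\tfrac1{\sqrt d}$. (Any transition matrix of a pair of MUB would serve equally well.) Next I would note that $U\mapsto\mab(U)=\min_{i,j}|U_{ij}|$ is continuous on the group of $d\times d$ unitaries, being the minimum of finitely many continuous functions of the matrix entries. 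Hence, given $0<m<\tfrac1{\sqrt d}$, there exists $\varepsilon>0$ such that every unitary $U$ with $\|U-U_0\|<\varepsilon$ satisfies $\mab(U)\geq m$.

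Then I would invoke Theorem~\ref{thm:COINC_U_dense}: since the COINC unitaries are dense, there is a COINC unitary $U$ with $\|U-U_0\|<\varepsilon$. Viewing $U$, as in Section~\ref{s:coinc_def}, as the transition matrix between the canonical basis $\Acal$ of $\C^d$ and the basis $\Bcal$ formed by its columns, these two bases are COINC; by the choice of $\varepsilon$ one has $m\leq\mab(U)$, and the complementary bound $\mab(U)\leq\tfrac1{\sqrt d}$ holds automatically by~\eqref{eq:mMbounds}. This establishes the theorem.

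As for where the real difficulty lies: at this level there is essentially no obstacle, because the substantive work has been relegated to Theorem~\ref{thm:COINC_U_dense}, whose proof (resting on Lemma~\ref{lem:minorcrit}, i.e.\ on the fact that the common vanishing locus of the minors of $U$ is a nowhere-dense set cut out by nontrivial polynomial conditions in the entries of $U$) is the place where one must be careful to perform the perturbation \emph{within} the unitary group. If one instead wanted a proof of Theorem~\ref{thm:COINC_mab} independent of Theorem~\ref{thm:COINC_U_dense}, the main task would become the construction of an explicit, or quasi-explicit, family of unitaries close to a MUB on which every minor can be shown to be nonzero; the perturbative argument above avoids this by only invoking existence.
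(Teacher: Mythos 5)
Your proposal is correct and follows essentially the same route as the paper, which likewise obtains Theorem~\ref{thm:COINC_mab} as an immediate consequence of the density of COINC unitaries (Theorem~\ref{thm:COINC_U_dense}, itself derived from Theorem~\ref{thm:COINC_dense} in Appendix~\ref{app:coincopendense}) applied to a perturbation of a mutually unbiased transition matrix, together with the continuity of $U\mapsto\mab$ and the universal bound $\mab\leq 1/\sqrt d$ from~\eqref{eq:mMbounds}. Your closing remark correctly locates the substantive work in the unitarity-preserving perturbation argument behind the density theorem.
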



\section{Support uncertainty as a KD-nonclassicality witness}\label{s:supuncwitness}
We now turn to the connection between the support uncertainty of states $|\psi\rangle$ and their KD-nonclassical nature, as defined in the Introduction. 
Our most general result, under the weakest conditions on the transition matrix $U$ (Theorem~\ref{thm:NCbound}), can be paraphrased as stating that - provided  $U$ does not have too many zeros - states $|\psi\rangle$ with a large support uncertainty are KD nonclassical. In other words, the support uncertainty is a KD-nonclassicality witness. The natural lower bound on  the support uncertainty to obtain this conclusion is, as we shall see, the line $\nab=d+1$, that we shall refer to as the \emph{KD-nonclassicality edge}. Under such general circumstances, the support uncertainty is not a faithful witness of nonclassicality: there may be states with a small support uncertainty that are nevertheless KD nonclassical. An example of this situation can be observed in the central panel of Fig.~\ref{fig:uncdiagTAO}, where the uncertainty diagram of the Tao matrix (which is not COINC, as shown above) is displayed.

When $U$ is COINC and $\mab$ close to $1/\sqrt d$, we obtain a complete characterization of the KD-classical states: we show that the only KD-classical states are the basis states, all others being nonclassical (Theorem~\ref{thm:pert_mubclassical}). 

To state these results, we need some further terminology. Let $\Zc$ be the maximum total number of zeros that can be found  in any two distinct columns of the transition matrix $U$. Let $\Zr$ be the maximum total number of zeros in two distinct rows of $U$. If $Z$ is the total number of zeros, then of course $\Zr, \Zc\leq Z$. If $Z=0,1$ or $2$, then so are $\Zc$ and $\Zr$. But $Z$ can be considerably higher than  $\Zr$ and $\Zc$, as we will see in examples below. Theorem~\ref{thm:NCbound} is an immediate consequence of the following more technical statement, which is of interest in its own right.

\begin{proposition}\label{prop:NCbound_bis} Let $\Acal, \Bcal$ be two orthonormal bases on a $d\geq 2$ dimensional Hilbert space $\Hcal$, with transition matrix $U$. Let $|\psi\rangle\in\Hcal$ and suppose
\begin{eqnarray}
&\ & \max\{\na(\psi), \nb(\psi\}> \max\{\Zr,\Zc\}, \label{eq:zczrbound}\\
 &\ &\nab(\psi)=n_\Acal(\psi)+n_\Bcal(\psi)> d+1,\label{eq:suppunclb}
\end{eqnarray}
then $|\psi\rangle\in\Hcal$ is KD nonclassical.
\end{proposition}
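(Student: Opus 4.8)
The plan is to argue by contraposition: assume $|\psi\rangle$ is KD classical, i.e. $Q_{ij}(\psi) = \langle a_i|\psi\rangle\langle\psi|b_j\rangle\langle b_j|a_i\rangle \geq 0$ for all $i,j$, together with~\eqref{eq:zczrbound}, and deduce that $\nab(\psi)\leq d+1$, contradicting~\eqref{eq:suppunclb}. The starting observation is a standard ``support-propagation'' fact for KD-classical states: if $i\in S_\psi$ and $j\in T_\psi$, then positivity of $Q_{ij}(\psi)$ forces the complex number $\langle a_i|\psi\rangle\langle\psi|b_j\rangle\langle b_j|a_i\rangle$ to be a nonnegative real; writing out the phase relation this gives, whenever $\langle b_j|a_i\rangle\neq 0$, that $\arg\langle a_i|\psi\rangle - \arg\langle\psi|b_j\rangle + \arg\langle b_j|a_i\rangle \in \{0\}$ modulo $2\pi$, so that all the overlaps $\langle a_i|\psi\rangle$ for $i\in S_\psi$ have phases rigidly locked to those of $\langle b_j|\psi\rangle$ for $j\in T_\psi$ through the (nonzero) entries of $U$. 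I would phrase this by fixing a reference index, say $j_0\in T_\psi$, and using it to ``propagate'' phase information; the condition~\eqref{eq:zczrbound} is exactly what guarantees there are enough nonzero entries in the relevant rows/columns of $U(S_\psi,T_\psi)$ for the propagation not to get stuck.

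The key algebraic step is to turn this phase-locking into a rank statement about the submatrix $U(S_\psi,T_\psi)$. Concretely: set $n = \na(\psi)$, $m = \nb(\psi)$, let $u\in\C^n$ be the nonzero $\Acal$-components of $\psi$ and $v\in\C^m$ the nonzero $\Bcal$-components, so that $u = U(S_\psi,T_\psi)\, v$. I expect to show that KD classicality lets one replace $U(S_\psi,T_\psi)$, $u$, $v$ by entrywise-nonnegative versions (absorbing phases into diagonal unitaries $D_\Acal$, $D_\Bcal$ acting on rows and columns, which is legitimate because phases of basis vectors are irrelevant), reducing to the case where $\tilde u = \tilde U \tilde v$ with all three objects having nonnegative real entries and $\tilde u,\tilde v$ strictly positive. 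Then one exploits that $\tilde U$ is (the restriction of) a unitary: the rows of $U$ are orthonormal, so for two distinct rows $i,i'\in S_\psi$ one has $\sum_{j\in T_\psi}\overline{U_{ij}}U_{i'j} = -\sum_{j\notin T_\psi}\overline{U_{ij}}U_{i'j}$, and after the phase normalization the left side becomes a sum of nonnegative terms that must still vanish or be controlled — combined with $\langle b_j|\psi\rangle = 0$ for $j\notin T_\psi$ this pins down orthogonality relations inside $\Hcal(S_\psi,T_\psi)$. The upshot I am aiming for is that $\dim \Hcal(S_\psi,T_\psi) \geq 1$ is forced but also that the classicality constraints make $\Pi_\Acal(S_\psi)\Hcal$ and $\Pi_\Bcal(T_\psi)\Hcal$ intersect in a way that is only possible when $n + m \leq d+1$; i.e. I essentially want to recover the inequality $\nab(\psi)\leq d+1$ directly, mirroring the ``$\nabmin = d+1$'' rigidity in Theorem~\ref{thm:COINCsupport_new} but now driven by positivity of $Q$ rather than by COINC.

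The main obstacle, I expect, is precisely handling the entries of $U$ that vanish: the phase-propagation argument only connects $i\in S_\psi$ to $j\in T_\psi$ along nonzero entries of $U(S_\psi,T_\psi)$, and a priori this bipartite ``support graph'' could be disconnected, in which case the phases split into independent clusters and no global contradiction follows. This is where the hypothesis~\eqref{eq:zczrbound}, $\max\{\na(\psi),\nb(\psi)\} > \max\{\Zr,\Zc\}$, must be used decisively: it should guarantee that no row of $U(S_\psi,T_\psi)$ is entirely zero and, more to the point, that the support graph of $U(S_\psi,T_\psi)$ is connected (any two rows of $U$ together have at most $\Zc < \na(\psi) \le |S_\psi|$, wait — at most $\Zc$ zeros among all columns, hence strictly fewer than $m$ zeros among the columns indexed by $T_\psi$ once $m > \Zc$, so every pair of rows shares a common nonzero column, which is exactly connectivity). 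I would isolate this connectivity claim as a small combinatorial lemma at the start, then feed it into the phase-rigidity argument so that all phases are forced into a single consistent family, and finally run the orthogonality/dimension count to the contradiction $\nab(\psi)\leq d+1$. The routine parts — the explicit phase bookkeeping and the diagonal-unitary normalization — I would defer and only sketch.
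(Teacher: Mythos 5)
Your setup (contraposition, rotating the phases of the basis vectors so that all $\langle a_i|\psi\rangle$ and $\langle \psi|b_j\rangle$ are nonnegative, so that KD positivity forces $\langle a_i|b_j\rangle\geq 0$ on the support rectangle $S_\psi\times T_\psi$, and then exploiting orthogonality of rows/columns of $U$ split into the parts inside and outside the support) is exactly the right frame, and it matches the paper's proof up to that point. But the endgame is missing, and the replacement you sketch would not work. The fact that $\psi\in\Pi_\Acal(S_\psi)\Hcal\cap\Pi_\Bcal(T_\psi)\Hcal$, i.e.\ $\dim\Hcal(S_\psi,T_\psi)\geq 1$, carries no information about $|S_\psi|+|T_\psi|$ for a general transition matrix: the rigidity ``nontrivial intersection $\Rightarrow |S|+|T|\geq d+1$'' that you want to ``mirror'' from Theorem~\ref{thm:COINCsupport_new} is precisely the COINC hypothesis (nonvanishing minors), which is not assumed here, and KD positivity does not resurrect it. What actually converts positivity plus orthogonality into the bound is a counting lemma about obtuse vectors (Lemma~\ref{lem:obtuse1}(ii)): at most $n+1$ nonzero vectors in $\C^n$ can have pairwise \emph{strictly} negative inner products. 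Assuming w.l.o.g.\ $2\leq\nb(\psi)\leq\na(\psi)<d$, for any two columns $j\neq j'$ in $T_\psi$ the orthogonality $0=\langle b_j|b_{j'}\rangle$ splits as a sum over $i\in S_\psi$ plus a sum over $i\notin S_\psi$; the first sum consists of $\na(\psi)$ nonnegative terms of which at most $\Zc<\na(\psi)$ vanish (this is where~\eqref{eq:zczrbound} enters), so it is strictly positive, hence the truncated columns $\bigl(\langle a_i|b_j\rangle\bigr)_{i\notin S_\psi}\in\C^{d-\na(\psi)}$ are pairwise obtuse, and the lemma gives $\nb(\psi)\leq d-\na(\psi)+1$, i.e.\ $\nab(\psi)\leq d+1$. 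The boundary case $\na(\psi)=d$ needs a separate (easy) argument: two entrywise-nonnegative columns with fewer than $d$ zeros between them cannot be orthogonal. Without this counting mechanism (or an equivalent), your plan does not reach the inequality.

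A secondary point: you misplace the role of hypothesis~\eqref{eq:zczrbound}. No phase ``propagation'' along nonzero entries of $U(S_\psi,T_\psi)$ is needed, hence no connectivity lemma: the phases are fixed once and for all by rotating each $|a_i\rangle$ and $|b_j\rangle$ so that the components of $\psi$ are nonnegative, and then $Q_{ij}(\psi)\geq0$ directly forces $\langle a_i|b_j\rangle\geq0$ whenever $i\in S_\psi$, $j\in T_\psi$, with zero entries of $U$ causing no trouble at that stage. Moreover, the connectivity claim as you state it overreaches: the hypothesis only guarantees that $\max\{\na(\psi),\nb(\psi)\}$ exceeds $\max\{\Zr,\Zc\}$, so you cannot assert both that every pair of rows of $U(S_\psi,T_\psi)$ shares a nonzero column and that every pair of columns shares a nonzero row; the proof must be arranged so that the zero-count bound is applied in the direction of the larger support (columns and $\Zc$ when $\na(\psi)\geq\nb(\psi)$, rows and $\Zr$ in the symmetric case), which is exactly how the strict positivity above is obtained.
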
 
The proof of this result is given below.

The first condition on $|\psi\rangle$, Eq.~\eqref{eq:zczrbound}, is implied by the second in those cases where there are not ``too many'' zeros in $U$. This idea is made precise in the following theorem.
\begin{theorem}\label{thm:NCbound}
Let $\Acal, \Bcal$ be two orthonormal bases on a Hilbert space $\Hcal$ of dimension $d\geq 2$, with transition matrix $U$.  
Suppose 
$\max\{\Zr,\Zc\}\leq \frac{d+1}{2}$.
Then, if $|\psi\rangle\in\Hcal$ satisfies Eq.~\eqref{eq:suppunclb}, it is KD nonclassical. 
\end{theorem}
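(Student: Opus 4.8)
The plan is to derive Theorem~\ref{thm:NCbound} directly from Proposition~\ref{prop:NCbound_bis} by showing that, under the hypothesis $\max\{\Zr,\Zc\}\leq \frac{d+1}{2}$, the support-uncertainty bound~\eqref{eq:suppunclb} already forces the first condition~\eqref{eq:zczrbound}. So suppose $|\psi\rangle$ satisfies $\nab(\psi)=\na(\psi)+\nb(\psi)>d+1$. Without loss of generality assume $\na(\psi)\geq \nb(\psi)$, so that $2\na(\psi)\geq \na(\psi)+\nb(\psi)>d+1$, hence $\na(\psi)>\frac{d+1}{2}$. Combining with the hypothesis gives $\na(\psi)>\frac{d+1}{2}\geq \max\{\Zr,\Zc\}$, and therefore $\max\{\na(\psi),\nb(\psi)\}=\na(\psi)>\max\{\Zr,\Zc\}$, which is exactly~\eqref{eq:zczrbound}. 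If instead $\nb(\psi)\geq\na(\psi)$ the symmetric argument gives $\nb(\psi)>\frac{d+1}{2}\geq\max\{\Zr,\Zc\}$. In either case both hypotheses of Proposition~\ref{prop:NCbound_bis} hold, so $|\psi\rangle$ is KD nonclassical, which is the claim.

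There is essentially no obstacle here: the theorem is a clean corollary of the proposition, and the only content is the elementary observation that $\nab(\psi)>d+1$ forces $\max\{\na(\psi),\nb(\psi)\}>\frac{d+1}{2}$ because the larger of two numbers is at least their average. The only place one must be slightly careful is whether the inequality $\max\{\Zr,\Zc\}\leq\frac{d+1}{2}$ combined with $\max\{\na(\psi),\nb(\psi)\}>\frac{d+1}{2}$ really yields the \emph{strict} inequality~\eqref{eq:zczrbound}; it does, since $\max\{\na(\psi),\nb(\psi)\}>\frac{d+1}{2}\geq\max\{\Zr,\Zc\}$ gives strictness from the first relation regardless of whether the second is an equality.

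If one wanted a self-contained proof not routed through the proposition, the harder part would of course be re-deriving the proposition itself, i.e.\ the nonclassicality argument: one would fix a state $|\psi\rangle$ with $\nab(\psi)>d+1$, assume for contradiction its KD distribution $Q_{ij}(\psi)=\langle a_i|\psi\rangle\langle\psi|b_j\rangle\langle b_j|a_i\rangle$ is everywhere real and nonnegative, and exploit that the phases of $\langle a_i|\psi\rangle\overline{\langle b_j|\psi\rangle}$ must then align with the phases of $\overline{\langle a_i|b_j\rangle}$ on the support, while the zero pattern of $U$ (controlled by $\Zr,\Zc$) is too sparse to be consistent with $\na(\psi)+\nb(\psi)>d+1$ — a counting/pigeonhole argument on a suitable bipartite incidence structure of the supports $S_\psi$, $T_\psi$. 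But since Proposition~\ref{prop:NCbound_bis} is stated in the excerpt and may be assumed, the proof of Theorem~\ref{thm:NCbound} reduces to the two-line implication above, and that is the route I would take.
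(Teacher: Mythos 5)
Your proposal is correct and follows exactly the paper's route: Theorem~\ref{thm:NCbound} is deduced from Proposition~\ref{prop:NCbound_bis} by observing that $\nab(\psi)>d+1$ forces $\max\{\na(\psi),\nb(\psi)\}>\frac{d+1}{2}\geq\max\{\Zr,\Zc\}$, so that Eq.~\eqref{eq:suppunclb} implies Eq.~\eqref{eq:zczrbound}. Your remark on strictness is sound and matches the paper's (more terse) argument.
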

This result was proven in~\cite{SDB21} for all $d\geq 2$, but only under the restrictive hypothesis that $U$ has no zeros, in which case Eq.~\eqref{eq:zczrbound} is automatically satisfied for any $|\psi\rangle$. 
The case $d=2$ is particular and fully covered by the following remarks. Note that when $d=2$, then $Z=\max\{\Zr, \Zc\}$. Since a unitary two by two matrix cannot have exactly one zero, it has either one or two zeros. The hypothesis  $\max\{\Zr,\Zc\}\leq \frac{d+1}{2}=3/2$ therefore corresponds to the case where  $U$ has no zeros at all. In that case the theorem, together with the fact that the basis states are classical,  implies that the only nonclassical states are those with $\na(\psi)=2=\nb(\psi)$. If on the contrary $U$ does have two zeros, then it is (equivalent to) the identity matrix, and then there are no nonclassical states at all. 

\noindent{\bf Proof of Theorem~\ref{thm:NCbound}.} We use Proposition~\ref{prop:NCbound_bis}. It is enough to show that when $\max\{\Zr,\Zc\}\leq \frac{d+1}{2}$,  Eq.~\eqref{eq:zczrbound} is implied by Eq.~\eqref{eq:suppunclb}. Suppose therefore that Eq.~\eqref{eq:suppunclb} holds. Then $\max\{\na(\psi), \nb(\psi)\}> \frac{d+1}{2}\geq \max\{\Zr,\Zc\}$, which proves the result. 
\hfill\qed\\

Here are two examples with $d=6$,  that satisfy the hypotheses of the theorem:
\begin{equation}\label{eq:Uzeros}
U_6=\frac{1}{\sqrt5}
\begin{pmatrix}
0&1&1&1&1&1\\
1&0&1&-1&1&-1\\
1&1&0&-1&-1&1\\
1&-1&-1&0&1&1\\
1&1&-1&1&0&-1\\
1&-1&1&1&-1&0
\end{pmatrix},\quad
U'_6=\frac{1}{\sqrt5}
\begin{pmatrix}
0&1&1&1&1&1\\
1&0&1&-1&1&-1\\
1&1&0&1&-1&-1\\
1&-1&1&0&-1&1\\
1&1&-1&-1&0&1\\
1&-1&-1&1&1&0
\end{pmatrix}.
\end{equation}
Indeed, $Z=6, \Zr=2=\Zc$. 
 These examples can be extended to arbitrarily high dimension, as follows. Let us write $|a_i\rangle, |b_j\rangle$ for two bases of $\C^6$ having $U_6$ as transition matrix. Let $\Hcal=\C^6\otimes\C^k$ and let $V$ be the  unitary $k$ by $k$ transition matrix of two bases $|a'_{i'}\rangle, |b'_{j'}\rangle$ of $\C^k$, that we assume to not have any zeros. Hence $U_6\otimes V$ is the transition matrix between the bases $|a_i, a'_{i'}\rangle, |b_i, b'_{j'}\rangle$. 
Then each basis vector $|b_j, b'_{j'}\rangle$ has exactly $k$ zero matrix elements on the basis $|a_i, a'_{i'}\rangle$. So the total number of zeros in the transition matrix is $Z=6k^2$ and any  $2$ such basis vectors have $2k$ vanishing components implying $\Zc=2k$. Similarly $\Zr=2k$. Since the dimension of the Hilbert space is $d=6k$, the condition in Theorem~\ref{thm:NCbound} holds for any $k\in\N_*$. These examples show that the number of zeros in any given column  can in fact be very large, and proportional to the length of the column (which is $6k$ in these examples). 

When the two bases $\Acal$ and $\Bcal$ are mutually unbiased, or in a suitable sense close to mutually unbiased, a stronger result can be proven, that we now turn to.
\begin{theorem}\label{thm:pert_mubclassical}
Let $\Acal, \Bcal$ be two orthonormal bases on a Hilbert space $\Hcal$ of dimension $d\geq 3$, with transition matrix $U$.  Suppose
\begin{equation}\label{eq:mMbound}
\frac{d-1}{d+1}< \left(\frac{\mab}{\Mab}\right)^2\leq 1.
\end{equation}
Suppose $|\psi\rangle$ is KD classical. Then either $|\psi\rangle\in \Acal\cup\Bcal$ or $\na(\psi)+\nb(\psi)\leq d$.  
Consequently, if $\Acal$ and $\Bcal$ are in addition COINC, then the only classical states are the basis states. 
\end{theorem}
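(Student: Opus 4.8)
The plan is to argue by contradiction: from the hypothesis I would extract two scalar facts, and then show that a KD-classical state that is not (up to a phase) one of the basis vectors and has $\na(\psi)+\nb(\psi)\geq d+1$ would force $(\mab/\Mab)^2\leq\frac{d-1}{d+1}$. For the preliminaries: since $d\geq 3$, the hypothesis gives $(\mab/\Mab)^2>\frac{d-1}{d+1}>0$, hence $\mab>0$, so the transition matrix $U$ (with entries $U_{ij}=\langle a_i|b_j\rangle$) has no vanishing entry. Now suppose $|\psi\rangle$ is KD classical with $|\psi\rangle\notin\Acal\cup\Bcal$. Because the bases matter only up to phases, $\na(\psi)=1$ would mean $|\psi\rangle\in\Acal$ and $\nb(\psi)=1$ would mean $|\psi\rangle\in\Bcal$, so $n_A:=\na(\psi)\geq 2$ and $n_B:=\nb(\psi)\geq 2$. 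Write $S=S_\psi$, $T=T_\psi$, and set $\langle a_i|\psi\rangle=|u_i|\e^{\mathrm i\alpha_i}$ ($i\in S$), $\langle b_j|\psi\rangle=|v_j|\e^{\mathrm i\beta_j}$ ($j\in T$), $U_{ij}=|U_{ij}|\e^{\mathrm i\gamma_{ij}}$. For $i\in S$, $j\in T$ all three factors in $Q_{ij}(\psi)=|u_i||v_j||U_{ij}|\,\e^{\mathrm i(\alpha_i-\beta_j-\gamma_{ij})}$ are nonzero, so KD-classicality ($Q_{ij}(\psi)\geq 0$) is equivalent to the phase-alignment relation $\gamma_{ij}=\alpha_i-\beta_j$ for all $i\in S$, $j\in T$. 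This is the structural input.

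The main step would then exploit unitarity of $U$ column-wise. For $j\neq j'$ in $T$ one has $\sum_{i=1}^d\overline{U_{ij}}U_{ij'}=\langle b_j|b_{j'}\rangle=0$. Split this sum over $S$ and $S^{\mathrm c}$. For $i\in S$, the phase-alignment relation gives $\overline{U_{ij}}U_{ij'}=|U_{ij}|\,|U_{ij'}|\,\e^{\mathrm i(\beta_j-\beta_{j'})}$, whose argument $\beta_j-\beta_{j'}$ is independent of $i$, so there is no cancellation and
\[
\Bigl|\sum_{i\in S}\overline{U_{ij}}U_{ij'}\Bigr|=\sum_{i\in S}|U_{ij}|\,|U_{ij'}|\ \geq\ n_A\,\mab^2,
\]
whereas $\bigl|\sum_{i\notin S}\overline{U_{ij}}U_{ij'}\bigr|\leq |S^{\mathrm c}|\,\Mab^2=(d-n_A)\,\Mab^2$. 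Equating the two moduli yields $n_A\,\mab^2\leq(d-n_A)\,\Mab^2$. Running the same argument with $\Acal$ and $\Bcal$ interchanged (orthogonality of the rows of $U$, two distinct indices of $S$, which needs $n_A\geq 2$) gives $n_B\,\mab^2\leq(d-n_B)\,\Mab^2$. Adding the two,
\[
(n_A+n_B)\,\mab^2\ \leq\ \bigl(2d-(n_A+n_B)\bigr)\,\Mab^2 .
\]
If $n_A+n_B\geq d+1$, the right-hand side is at most $(d-1)\Mab^2$ while the left-hand side is at least $(d+1)\mab^2$, so $(\mab/\Mab)^2\leq\frac{d-1}{d+1}$, contradicting the hypothesis. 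Hence $\na(\psi)+\nb(\psi)\leq d$, which is the asserted dichotomy. (Alternatively one may first invoke Theorem~\ref{thm:NCbound}, applicable because $\mab>0$ forces $\Zr=\Zc=0$, to reduce to the case $n_A+n_B=d+1$ before applying the two estimates.)

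For the ``consequently'': if in addition $\Acal,\Bcal$ are COINC, then by Theorem~\ref{thm:COINCsupport_new} every nonzero $|\psi\rangle$ satisfies $\na(\psi)+\nb(\psi)\geq d+1$, so the alternative $\na(\psi)+\nb(\psi)\leq d$ is impossible and any KD-classical state must lie in $\Acal\cup\Bcal$. Conversely every basis vector is KD classical, since $Q_{kj}(|a_i\rangle)=\delta_{ki}|\langle a_i|b_j\rangle|^2\geq 0$ and symmetrically for the $|b_j\rangle$. Thus the KD-classical states are exactly the elements of $\Acal\cup\Bcal$.

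I expect the crux to be the realization that KD-classicality is precisely what makes the partial sum $\sum_{i\in S}\overline{U_{ij}}U_{ij'}$ cancellation-free, so that it is genuinely bounded below by $n_A\mab^2$ instead of being possibly small; once this is available, balancing it against the crude upper bound $(d-n_A)\Mab^2$ on the complementary sum — first column-wise, then row-wise, then adding — is exactly what converts near mutual unbiasedness into a contradiction. Everything else (the phase normalization, the reduction to $n_A,n_B\geq 2$, the classicality of basis vectors) is routine bookkeeping.
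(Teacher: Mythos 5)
Your proposal is correct and follows essentially the same route as the paper's proof: KD classicality lets you align phases so that the orthogonality sum $\langle b_j|b_{j'}\rangle=0$ ($j\neq j'$ in $T_\psi$) splits into a cancellation-free part over $S_\psi$ bounded below by $\na(\psi)\,\mab^2$ and a complementary part bounded above by $(d-\na(\psi))\,\Mab^2$, which is exactly the paper's key estimate. The only cosmetic difference is that you symmetrize by also running the row-wise inequality and adding, whereas the paper assumes WLOG $\nb(\psi)\leq\na(\psi)$ and uses the column inequality alone; both yield the same bound $\na(\psi)+\nb(\psi)\leq 2d/\bigl(1+(\mab/\Mab)^2\bigr)<d+1$.
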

We excluded the case $d=2$ from the statement because in that case, if $\mab>0$, we know from Theorem~\ref{thm:NCbound} that, if $|\psi\rangle$ is classical, then $\na(\psi)+\nb(\psi)\leq 3$.  This, in turn, implies that $|\psi\rangle\in\Acal\cup\Bcal$, since $d=2$, proving the result. So the additional constraint on $\mab/\Mab$ is not relevant when $d=2$. 

The theorem implies that for the DFT in prime dimension, which is both mutually unbiased and COINC, only the basis vectors are classical. This can be observed in the rightmost panel of Fig.~\ref{fig:uncdiagTAO} for $d=7$. 

Note that $\mab=1/\sqrt d$ is equivalent to $\mab=\Mab$ and hence to the bases being mutually unbiased. The theorem completely characterizes the classical states of COINC bases that have a large $\mab$, meaning that $\mab$ is close to its maximal possible value $1/\sqrt d$, attained only for MUB. 
More precisely, suppose that, for some $0\leq\delta<1$, 
$$
\mab^2 > \frac1d(1-\delta).
$$
Then the normalization of the columns of $U$ implies that
$$
\mab^2(d-1)+\Mab^2\leq 1,
$$
so that
$$
\left(\frac{\mab}{\Mab}\right)^2 >\frac1{1+\frac{\delta d}{1-\delta}}. 
$$
Hence, provided 
$$
\delta<\frac{2}{d(d-1)}\frac{1}{1+\frac2{d(d-1)}},
$$
hypothesis Eq.~\eqref{eq:mMbound} is satisfied.  This proves Theorem~\ref{thm:COINC_KDNC}~(iv).
Note that, while this is an increasingly restrictive condition as the dimension $d$ grows, we know such $U$ exist by Theorem~\ref{thm:COINC_mab}.

The proof of Theorem~\ref{thm:pert_mubclassical} relies on the arguments in the proof of Proposition~\ref{prop:NCbound_bis} that we therefore prove first.

\noindent\emph{Proof (of Proposition~\ref{prop:NCbound_bis})}  The proof follows the strategy used in~\cite{ArDrHa21} and~\cite{SDB21}. We proceed by contradiction and suppose $|\psi\rangle$ is KD classical and that hypothesis~\eqref{eq:zczrbound} holds. We need to prove that
$$
\na(\psi)+\nb(\psi)\leq d+1.
$$
When $\na(\psi)=1$, or $\nb(\psi)=1$, $|\psi\rangle$ is one of the basis states and the result is then immediate. We can therefore assume that
$\na(\psi),\nb(\psi)\geq 2$. Since the KD distribution (see Eq.~\eqref{eq:KD}) is insensitive to global phase rotations $|a_i\rangle \to \exp(i\phi_i)|a_i\rangle, |b_j\rangle \to \exp(i\phi_j')|b_j\rangle$, we can suppose that all $\langle a_i|\psi\rangle$ and $\langle \psi|b_j\rangle$ are nonnegative (hence real) for $1\leq i, j\leq  d$. Possibly reordering the basis vectors, we can suppose that $\langle a_i|\psi\rangle\not=0\not=\langle b_j|\psi\rangle $ for $1\leq i\leq  n_\Acal(\psi), 1\leq j\leq n_\Bcal(\psi)$ whereas all other $\langle a_i|\psi\rangle$, $\langle b_j|\psi\rangle$ vanish.  By hypothesis, the KD distribution of $|\psi\rangle$ is real and nonnegative. Hence, for the same range of $i$ and $j$, we can  conclude $\langle a_i|b_j\rangle$ is real and nonnegative.

Now, assume $2\leq \nb(\psi)\leq \na(\psi)=d$.   Then the matrix $U$ contains two columns with real nonnegative entries. Since by hypothesis~\eqref{eq:zczrbound}, there are at most $Z_c<d$ zeros in these two columns of $U$ this is in contradiction with the fact that those columns are orthogonal.  

Let us therefore assume that $2\leq n_\Bcal(\psi)\leq n_\Acal(\psi)<d$. 
Then, for $1\leq j< j'\leq n_\Bcal(\psi)$, we have
\begin{equation}\label{eq:vanish}
0=\langle b_j|b_{j'}\rangle=\sum_{i=1}^{n_\Acal(\psi)} \langle b_j|a_i\rangle \langle a_i|b_{j'}\rangle + \sum_{i=n_\Acal(\psi)+1}^{d} \langle b_j|a_i\rangle \langle a_i|b_{j'}\rangle.
\end{equation}
We first show that for all $1\leq j< j'\leq n_\Bcal(\psi)$, one has
\begin{equation}\label{eq:inprod1}
\sum_{i=1}^{n_\Acal(\psi)} \langle b_j|a_i\rangle \langle a_i|b_{j'}\rangle>0.
\end{equation}
To see this, note first that, for any fixed pair $j\not=j'$, the sum in~\eqref{eq:inprod1} contains $n_\Acal(\psi)$ nonnegative terms.  If at least one of those is positive, the sum is positive. But this is the case, since at most $Z_c$ of these terms can vanish, and $Z_c<n_\Acal(\psi)$ by hypothesis~\eqref{eq:zczrbound}. This proves Eq.~\eqref{eq:inprod1}. 

It then follows from Eq.~\eqref{eq:vanish} that, for all $1\leq j< j'\leq n_\Bcal(\psi)$, one has
$$
\sum_{i=n_\Acal(\psi)+1}^{d} \langle b_j|a_i\rangle \langle a_i|b_{j'}\rangle<0.
$$
Defining, for each $1\leq j\leq n_\Bcal(\psi)$ the vector $v_j=(\langle a_{n_\Acal(\psi)+1}|b_{j}\rangle,\dots, \langle a_{d}|b_{j}\rangle)\in\C^{d-n_\Acal(\psi)}$ we see from the above that $\langle v_j|v_{j'}\rangle<0$. It then follows from Lemma~\ref{lem:obtuse1} below that 
$$
n_\Bcal(\psi)\leq d-n_\Acal(\psi)+1.
$$
This proves the result. 

The case where $n_\Acal(\psi)\leq n_\Bcal(\psi)<d$ is treated similarly, inverting the roles of the columns and the rows. 
\hfill\qed

\noindent\emph{Proof (of Theorem~\ref{thm:pert_mubclassical})} 
We proceed as in the proof of Proposition~\ref{prop:NCbound_bis}. 
We can again suppose $2\leq \nb(\psi)\leq\na(\psi)<d$.  Note that this implies that $|\psi\rangle$ is not one of the basis vectors since $\mab>0$. Eq.~\eqref{eq:vanish} now yields
$$
\mab^2\na(\psi)\leq | \sum_{i=n_\Acal(\psi)+1}^{d} \langle b_j|a_i\rangle \langle a_i|b_{j'}\rangle|\leq \Mab^2 (d-\na(\psi)).
$$
Using that $\na(\psi)\leq\nb(\psi)$, this implies
$$
\na(\psi)+\nb(\psi)\leq \frac{2d}{1+\left(\frac{\mab}{\Mab}\right)^2}<\frac{2d}{1+\frac{d-1}{d+1}}=d+1.
$$
It follows that $\na(\psi)+\nb(\psi)\leq d$, which is the desired result. The second statement of the Theorem now follows from
Theorem~\ref{thm:COINCsupport_new}.
\hfill\qed

The following lemma can be understood geometrically as putting an upper bound on the number of vectors in $\C^n$ that can have an obtuse angle between them, two by two.  It is a refinement of a result in~\cite{ArDrHa21}, where only part (i) of the Lemma was proven. It appeared in the Supplementary Material of~\cite{SDB21}, we repeat the proof here for completeness. 
\begin{lemma}\label{lem:obtuse1}
Let $n,k\in\N_*$ and $v_1, v_2, \dots, v_k\in \C^n\setminus\{0\}$. Then the following holds:\\
(i) If $\langle v_i|v_j\rangle\leq 0$ for all $1\leq i<j\leq k$, then $k\leq 2n$.\\
 (ii) If $\langle v_i|v_j\rangle< 0$ for all $1\leq i<j\leq k$, then $k\leq n+1$.
\end{lemma}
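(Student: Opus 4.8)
The plan is to reduce the complex statement to the real one and then handle part (ii) by a short sign‑splitting argument and part (i) by induction on the dimension $n$.

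\textbf{Reduction to real vectors.} First I would observe that the Gram matrix $G=(\langle v_i|v_j\rangle)_{1\le i,j\le k}$ has \emph{real} entries: its diagonal entries are the squared norms $\|v_i\|^2>0$ and its off‑diagonal entries are real by hypothesis (they are $\le 0$ or $<0$). It is also positive semidefinite, with $\operatorname{rank}G=\dim_{\C}\operatorname{span}\{v_1,\dots,v_k\}\le n$. Hence, by its spectral decomposition, $G=W^{\mathsf T}W$ for a real matrix $W$ with $r:=\operatorname{rank}G\le n$ rows, whose columns $w_1,\dots,w_k\in\R^{r}\subseteq\R^{n}$ satisfy $\langle w_i,w_j\rangle=\langle v_i|v_j\rangle$ for all $i,j$. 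Replacing the $v_i$ by the $w_i$ preserves the nonvanishing of the vectors and the sign conditions on all off‑diagonal inner products, so without loss of generality the $v_i$ lie in $\R^n$ and the inner product is the real Euclidean one. (Alternatively one can run the arguments below directly over $\C$, since every inner product that enters is real; the reduction just avoids having to say this each time.)

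\textbf{Part (ii).} I would prove the stronger fact that $v_1,\dots,v_{k-1}$ are linearly independent, which gives $k-1\le n$, i.e.\ $k\le n+1$. Suppose $\sum_{i=1}^{k-1}\lambda_i v_i=0$; set $P=\{i:\lambda_i>0\}$, $N=\{i:\lambda_i<0\}$, and $u=\sum_{i\in P}\lambda_i v_i=\sum_{j\in N}(-\lambda_j)v_j$. Expanding $\|u\|^2$ using the first expression for one factor and the second for the other yields a sum of terms $\lambda_i(-\lambda_j)\langle v_i,v_j\rangle$ with $i\in P$, $j\in N$ (so $i\ne j$), each of which is $\le 0$; hence $u=0$. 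Then $\sum_{i\in P}\lambda_i v_i=0$, and pairing with $v_k$ gives $\sum_{i\in P}\lambda_i\langle v_i,v_k\rangle=0$ with every summand strictly negative, forcing $P=\varnothing$; symmetrically $N=\varnothing$. Thus all $\lambda_i=0$.

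\textbf{Part (i).} I would induct on $n$. The case $n=1$ is immediate: three nonzero reals cannot be pairwise non‑positive in product (one is positive, forcing the other two negative, hence their product positive). For the inductive step, fix $v_k$ and split the remaining indices into $N=\{i<k:\langle v_i,v_k\rangle<0\}$ and $Z=\{i<k:\langle v_i,v_k\rangle=0\}$. Project each $v_i$ with $i\in N$ orthogonally onto $v_k^{\perp}$, obtaining $\tilde v_i=v_i-\frac{\langle v_i,v_k\rangle}{\|v_k\|^2}v_k$. A one‑line computation gives $\langle\tilde v_i,\tilde v_{i'}\rangle=\langle v_i,v_{i'}\rangle-\frac{\langle v_i,v_k\rangle\langle v_{i'},v_k\rangle}{\|v_k\|^2}<0$ for distinct $i,i'\in N$ (the subtracted quantity is strictly positive), while $\langle\tilde v_i,v_j\rangle=\langle v_i,v_j\rangle\le 0$ for $i\in N$, $j\in Z$ because $\langle v_k,v_j\rangle=0$. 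At most one of the $\tilde v_i$ can be zero, since $\tilde v_i=0=\tilde v_{i'}$ would make $v_i,v_{i'}$ negative multiples of $v_k$, hence with positive mutual inner product. Discard that one vector if present; the remaining $|N|-1+|Z|$ vectors (at least) are nonzero, lie in $v_k^{\perp}\cong\R^{n-1}$, and have pairwise non‑positive inner products, so the induction hypothesis gives $|N|-1+|Z|\le 2(n-1)$, whence $k=|N|+|Z|+1\le 2n$.

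\textbf{Main obstacle.} There is no deep difficulty here; the one point that must be handled with care is the projection step in part (i): one has to verify that strict negativity is retained among the projected $N$‑vectors — this is what limits the degenerate (zero) projections to \emph{one}, and it is precisely this ``$-1$'' that turns $2(n-1)+1$ into the sharp bound $2n$ rather than something weaker — and one has to check that the mixed inner products between the projected $N$‑vectors and the $Z$‑vectors stay non‑positive. The reduction to real vectors and the sign bookkeeping in part (ii) are routine.
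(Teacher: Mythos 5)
Your proposal is correct; no gaps. For part (i) you are essentially reproducing the paper's argument: the paper rotates $v_1$ to $a_1e_1$ and writes $v_j=a_je_1+w_j$, which is exactly your orthogonal projection onto the complement of a distinguished vector, with the same two key observations (non-positivity of the projected inner products because $a_ia_j\geq 0$, and at most one projected vector can vanish), followed by the same induction giving $k-2\leq 2(n-1)$. Where you genuinely diverge is part (ii): the paper simply runs the induction a second time, using that in the strict case none of the $w_j$ vanish, to get $k-1\leq n+1$ in one lower dimension; you instead give the classical direct argument that any $k-1$ of the vectors are linearly independent (the sign-splitting $\|u\|^2=\sum_{i\in P,\,j\in N}\lambda_i(-\lambda_j)\langle v_i,v_j\rangle\leq 0$ forcing $u=0$, then pairing with $v_k$ to kill $P$ and $N$), which yields the sharper conclusion $k-1\leq n$ without induction and as a bonus exhibits the linear independence itself. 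The price you pay is the preliminary reduction to $\R^n$ via the Gram factorization $G=W^{\mathsf T}W$; that step is valid (all entries of $G$ are real by hypothesis, and $\operatorname{rank}G\leq n$), but the paper avoids it by working directly in $\C^n$, using only phase rotations and the fact that every inner product that enters is real by hypothesis. Either route is complete; yours trades a small amount of extra setup for a cleaner, non-inductive proof of the strict case.
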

The proof follows from an induction argument, given below. For $n=1$, there is no difference between (i) and (ii). Indeed, one may note that one can always take $v_1>0$, by applying a common phase rotation to all $v_i$, which does not change the inner products $\langle v_i| v_j\rangle$ between them. Hence $v_j<0$ for all $j\not=1$. But if $k>2$, then this contradicts the requirement that $v_2v_3\leq 0$. So $k\leq 2$ when $n=1$.  For arbitrary $n$, it is clear the upper bound in (i) is reached by taking for example $v_1=e_1=-v_2, v_3=e_2=-v_4, \dots v_{2n}=-e_n$. Some of the $v_i$ are then orthogonal, so that this set does not satisfy the hypothesis of (ii). When $n=2$, one can  understand geometrically why in (ii) the upper bound is only $k=3$ and not $k=4$, as in (i). For that purpose, let us reason as if we were working in $\R^2$, not $\C^2$. By applying a rotation to all $v_i\in\R^2$, we can consider $v_1=e_1$. Let $v_j=(\cos\theta_j, \sin\theta_j)$. Since we are in the plane, the hypothesis then implies that the angles between $v_j$ and $v_1$  must be larger than $\pi/2$ for all $2\leq j\leq k$, so that $\theta_j\in ]-\pi/2, \pi/2[$. But the hypothesis further implies $|\theta_j-\theta_{j'}|>\frac{\pi}{2}$, which can only be true if $j$ takes at most $2$ values so that $k\leq 3$. \\
\noindent\emph{Proof (of Lemma~\ref{lem:obtuse1})} (i) 
The proof goes by induction. We have seen the result holds for $n=1$. Suppose the result holds for some $n\in\N_*$. We show it holds for $n+1$. Let $v_1, \dots, v_k\in\C^{n+1}$. As above, we can suppose $v_1=a_1e_1$, $a_1>0$. Write $v_j=a_je_1 +w_j$, with $\langle w_j, e_1\rangle=0$, for all $j=2,\dots, k$. By hypothesis, $a_j\leq 0$ for all $2\leq j\leq k$. As a result, for all $2\leq i<j\leq k$,
\begin{equation}\label{eq:induction}
0\geq \langle v_i|v_j\rangle=a_ia_j +\langle w_i|w_j\rangle.
\end{equation}
Hence, for all $2\leq i<j\leq k$, $\langle w_i|w_j\rangle\leq 0$. 
Note that at most one of the $w_j$ can vanish. Indeed, if two of them vanish, say $w_2=0=w_3$, then $a_2\not=0\not=a_3$ and hence  $\langle v_2|v_3\rangle=a_2a_3>0$, which contradicts the hypothesis. We conclude that among the $(k-1)$ vectors $w_j\in\C^n$, there are at least $(k-2)$ that don't vanish and since their mutual inner products are all non-positive, the induction hypothesis allows to conclude that $k-2\leq 2n$ so that $k\leq 2(n+1)$. \\
(ii) The argument is similar and proceeds again by induction. 
This time, by hypothesis, all $a_j<0$, for $j=2,\dots, k$: none of the $a_j$ can vanish. Now suppose one of the $w_j$ vanishes: $w_2=0$. Then, for $j=3,\dots, k$ since $\langle v_1|v_j\rangle =a_1a_j< 0, \langle v_2|v_j\rangle=a_2a_j<0$, which is impossible since $a_1>0, a_2<0$. Consequently, none of the $w_j, j=2, \dots k$ vanishes. We have found therefore $(k-1)$ non vanishing vectors  in $\C^n$ with a negative inner product. So $k-1\leq n+1$ or $k\leq (n+1)+1$.  \hfill\qed 

Theorem~\ref{thm:NCbound} provides an improvement on existing results that we now further analyze. 
First, as already mentioned, the result was proven in~\cite{SDB21} under the much stricter hypothesis that $Z=0$, \emph{i.e.} $\mab>0$. Recall from Proposition~\ref{prop:incompatibilities} that this constitutes a weak incompatibility condition on $\Acal$ and $\Bcal$. 

In~\cite{ArDrHa21}, on the other hand, the following closely related result was obtained, which imposes no condition on the zeros of $U$:
\begin{theorem}\label{thm:arvid} Let $\Acal$ and $\Bcal$ be orthonormal bases in a $d$-dimensional Hilbert space $\Hcal$. Suppose 
$\Mab<1$. Then, if $|\psi\rangle\in\Hcal$ satisfies
\begin{equation}\label{eq:arvidbound}
\nab(\psi)=\na(\psi)+\nb(\psi)>\lfloor 3d/2\rfloor,
\end{equation}
then $|\psi\rangle$ is KD nonclassical.
\end{theorem}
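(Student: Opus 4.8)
\medskip
\noindent\textbf{Proof proposal.} The plan is to argue by contradiction, following the strategy of the proof of Proposition~\ref{prop:NCbound_bis}. Suppose $|\psi\rangle$ is KD classical and set $n_A:=\na(\psi)$, $n_B:=\nb(\psi)$; the goal is to show $n_A+n_B\le\lfloor 3d/2\rfloor$, which contradicts the hypothesis. Since the KD distribution is invariant under rephasing of the basis vectors, rephase so that every $\langle a_i|\psi\rangle$ and $\langle\psi|b_j\rangle$ is real and nonnegative, and reorder the bases so that $S_\psi=\llbracket 1,n_A\rrbracket$ and $T_\psi=\llbracket 1,n_B\rrbracket$. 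KD classicality then forces $\langle a_i|b_j\rangle\ge 0$ (real) whenever $1\le i\le n_A$ and $1\le j\le n_B$. The cases $n_A\le 1$ or $n_B\le 1$ give $n_A+n_B\le d+1\le\lfloor 3d/2\rfloor$ and so do not arise; assume $n_A,n_B\ge 2$.

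The core is an analysis of the $n_B$ columns of $U$ indexed by $T_\psi$, split along the row partition $\llbracket 1,n_A\rrbracket\cup\llbracket n_A+1,d\rrbracket$. For $1\le j\le n_B$ put $q_j=(\langle a_i|b_j\rangle)_{i=n_A+1}^{d}\in\C^{d-n_A}$. Unitarity of $U$ together with nonnegativity on the top block gives, for $j\ne j'$, $\langle q_j|q_{j'}\rangle=-\sum_{i=1}^{n_A}\langle a_i|b_j\rangle\langle a_i|b_{j'}\rangle\le 0$. Call $j$ \emph{degenerate} if $q_j=0$, i.e.\ if $|b_j\rangle\in\mathrm{span}\{|a_1\rangle,\dots,|a_{n_A}\rangle\}$, and let $z$ be the number of degenerate indices $j\in\llbracket 1,n_B\rrbracket$. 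The non-degenerate $q_j$ are nonzero vectors of $\C^{d-n_A}$ with pairwise non-positive inner products, so Lemma~\ref{lem:obtuse1}(i) yields the first bound $n_B-z\le 2(d-n_A)$.

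The second bound comes from the degenerate columns. If $j$ is degenerate, then expanding $0=\langle b_j|b_{j'}\rangle$ for any $j'\le n_B$ with $j'\ne j$ as a sum of the nonnegative terms $\langle a_i|b_j\rangle\langle a_i|b_{j'}\rangle$ forces $\langle a_i|b_{j'}\rangle=0$ for every $i$ in the support $P_j:=\{i\le n_A:\langle a_i|b_j\rangle\ne 0\}$. Taking $j'$ also degenerate shows that the sets $P_j$ (over degenerate $j$) are pairwise disjoint, while $\Mab<1$ excludes $|P_j|=1$; hence $|P_j|\ge 2$ and $\bigl|\bigcup_{j\text{ deg}}P_j\bigr|\ge 2z$. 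On the other hand, those same relations say that each $|a_i\rangle$ with $i\in\bigcup_{j\text{ deg}}P_j$ lies in the span of the $z$ degenerate vectors $|b_j\rangle$ together with the vectors $|b_{j'}\rangle$, $j'>n_B$, a subspace of dimension $z+(d-n_B)$; since these $|a_i\rangle$ form an orthonormal set of size $\ge 2z$, we get $2z\le z+(d-n_B)$, i.e.\ $z+n_B\le d$. Adding the two bounds gives $2(n_A+n_B)\le 3d$, hence $n_A+n_B\le\lfloor 3d/2\rfloor$, the desired contradiction.

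The delicate part, I expect, is the degenerate-column bookkeeping: showing that the supports $P_j$ are pairwise disjoint and, above all, that the rows indexed by $\bigcup_{j\text{ deg}}P_j$ decouple from every $\Bcal$-support column other than their own, which is what confines them to a subspace of dimension $z+d-n_B$ and produces $z+n_B\le d$. This is the single place where $\Mab<1$ enters (to rule out $|P_j|=1$). One should also check that the argument handles the boundary case $n_A=d$ uniformly: there every $q_j$ vanishes, so $z=n_B$, the first bound is automatic, and $2z\le z+(d-n_B)$ becomes $n_B\le d/2$, giving $n_A+n_B\le 3d/2$ again.
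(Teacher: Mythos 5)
Your proof is correct. One thing to be aware of: the paper itself does not prove Theorem~\ref{thm:arvid} — it quotes it from~\cite{ArDrHa21} — and its own arguments (Proposition~\ref{prop:NCbound_bis}, Theorem~\ref{thm:NCbound}) establish the sharper edge $\nab(\psi)>d+1$ under extra hypotheses on the zeros of $U$. Your route shares the opening moves of that in-paper proof (rephase so all $\langle a_i|\psi\rangle$, $\langle\psi|b_j\rangle$ are nonnegative, deduce $\langle a_i|b_j\rangle\geq 0$ on the support block, and use orthogonality of the $\Bcal$-support columns restricted to the rows outside $S_\psi$), but it diverges where it must: the paper's conditions on $\Zr,\Zc$ force the truncated inner products to be strictly negative, allowing the $n+1$ bound of Lemma~\ref{lem:obtuse1}(ii), whereas with only $\Mab<1$ you get $\langle q_j|q_{j'}\rangle\leq 0$ and therefore invoke the $2n$ bound of part~(i). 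The price is the degenerate-column bookkeeping, and that part of your argument checks out: for degenerate $j$ the vanishing of $\langle b_j|b_{j'}\rangle$ as a sum of nonnegative terms forces $\langle a_i|b_{j'}\rangle=0$ for $i\in P_j$ and all $j'\leq \nb(\psi)$, $j'\neq j$, which gives both the disjointness of the $P_j$ and the confinement of the corresponding orthonormal $|a_i\rangle$ to a subspace of dimension $z+d-\nb(\psi)$; together with $|P_j|\geq 2$ (indeed the only point where $\Mab<1$ is used, and it must be used somewhere, since for $\Acal=\Bcal$ the statement fails) this yields $z+\nb(\psi)\leq d$. Adding this to $\nb(\psi)-z\leq 2(d-\na(\psi))$ gives $2(\na(\psi)+\nb(\psi))\leq 3d$ and hence, by integrality, the $\lfloor 3d/2\rfloor$ bound; the boundary cases $\na(\psi)\leq 1$, $\nb(\psi)\leq 1$ and $\na(\psi)=d$ are handled consistently. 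So this is a complete, self-contained derivation of the cited result in the spirit of~\cite{ArDrHa21}, using only Lemma~\ref{lem:obtuse1}(i) from the paper.
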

To compare Theorem~\ref{thm:NCbound} and Theorem~\ref{thm:arvid}, we first remark that, in the latter result, the hypothesis on $U$ is essentially empty, so that it has a very general applicability. Indeed, if $\Mab=1$, some of the $|a_i\rangle$ are equal, up to a phase, to some of the $|b_j\rangle$, and then one can split the Hilbert space into a direct sum and study the nonclassicality on the subspace orthogonal to the common basis vectors, as already pointed out above. On the other hand, the conclusion of KD nonclassicality in Theorem~\ref{thm:arvid} is obtained only on states $|\psi\rangle$ for which $\nab(\psi)> \lfloor 3d/2\rfloor$; this is a more restrictive family of states than those satisfying the bound $\nab(\psi)>d+1$ required in Theorem~\ref{thm:NCbound} as soon as $d\geq 4$ and it becomes increasingly restrictive as $d$ grows. In fact is easy to show that, when $d=3$, Theorem~\ref{thm:NCbound} and Theorem~\ref{thm:arvid} are equivalent. To see this, one needs to remark that in that case, the condition on the zeros of the matrix $U$ in Theorem~\ref{thm:NCbound} does not actually constitute a restriction.  
We refer to Appendix~\ref{app:d3} for the details of the argument.

 In the high-dimensional examples above, where $d=6k$, $k\in\N_*$, and $U=U_6\otimes V$, Theorem~\ref{thm:NCbound} guarantees the KD nonclassicality of all states $|\psi\rangle$ for which $\nab(\psi)>6k+1$, whereas Theorem~\ref{thm:arvid} only concludes this when $\nab(\psi)>9k$.


\section{Conclusion}\label{s:conclusion}
  We have analyzed the Kirkwood-Dirac quasi-probability distributions associated with two observables $A$ and $B$, with eigenbases $\Acal$ and $\Bcal$ on a finite dimensional Hilbert space of states $\Hcal$.   We have characterized the Kirkwood-Dirac (non)classical states in terms of their support uncertainty and shown the special role played by the complete incompatibility of the observables in this analysis. In particular, when the observables are both completely incompatible and (close to) mutually unbiased, we have shown that the only Kirkwood-Dirac classical states are the basis vectors of $\Acal$ and $\Bcal$. 
  
A number of questions remain open. Some, relating to the precise link between mutual unbiasedness and complete incompatibility, have been suggested in Section~\ref{s:COINC_MUB}. We have also not attempted to establish a hierarchy among the Kirkwood-Dirac nonclassical states, a question of obvious interest. Finally, the extension of our results to mixed states remains a subject for further research.

\bigskip

\noindent\textbf{Acknowledgements}
This work was supported in part by the Agence Nationale de la Recherche under grant ANR-11-LABX-0007-01 (Labex CEMPI) and by the Nord-Pas de Calais Regional Council and the European Regional Development Fund through the Contrat de Projets \'Etat-R\'egion (CPER).

\appendix

\vskip0.5cm
\centerline{\bf APPENDICES}

\section{Proof of Proposition~\ref{prop:Cinf0}}\label{s:commutator}
To see this, we write 
$
\Hcal=\Hcal_-\oplus \Ker C\oplus \Hcal_+,
$
where $\Hcal_-, \Hcal_+$ are the negative and positive spectral subspaces for $C$ and  $C_-<0, C_+>0$ are the restrictions of $C$ to $\Hcal_-,\Hcal_+$. So
$$
C=C_-\oplus 0\oplus C_+.
$$

Note that, in finite dimension, one always has $\Tr C=0$, as can be seen by using the cyclicity of the trace. So, in finite dimension, either $C=0$, or both $C_-\not=0$ and $C_+\not=0$. 

 In infinite dimension, this argument does not work, since $A$ and/or $B$ may not be trace class. In fact, it turns out that in infinite dimension it is possible that $C_-=0, C_+\not=0$ and Ker$C=\{0\}$. Examples can be found in~\cite{How87, Ka91, HeKr19} where $C$ is trace class and positive. It always has zero in its spectrum then, but not necessarily as an eigenvalue.  In some examples $C$ is finite rank and nonnegative in which case Ker$C\not=\{0\}$. 
 
We now turn to the general proof of~\eqref{eq:Cinf0}.
If $\Ker C\not=\{0\}$ it trivially holds. We therefore suppose  $\Ker C=\{0\}$.
 Let us first consider the case where the commutator  $C$ has both positive and negative spectrum, so that $C_-\not=0\not=C_+$. The proof of~\eqref{eq:Cinf0} then goes as follows. Let $0\not=\psi_\pm\in \Hcal_\pm$ and let 
$|\psi\rangle=\alpha\psi_-+ \beta \psi_+$, with $\alpha,\beta\in \R_*^+$. Then
$$
\parallel \psi \parallel^2=\alpha^2\parallel \psi_-\parallel^2 +\beta^2\parallel \psi_+\parallel^2, \quad\textrm{and}\quad
\langle \psi|C|\psi\rangle=-\alpha^2 \langle \psi_-||C_-| |\psi_-\rangle +\beta^2\langle \psi_+|C_+\psi_+\rangle.
$$
Imposing $\langle \psi|C|\psi\rangle=0, \parallel\psi\parallel=1$, one finds a unique nontrivial solution for  $(\alpha^2, \beta^2)$.  This proves~\eqref{eq:Cinf0}.

Suppose now that $\Ker C=\{0\}, C_-=0, C_+\not=0$. We prove~\eqref{eq:Cinf0} by contradiction. Suppose there exists $\delta>0$ so that 
 $\langle\psi |C|\psi\rangle=\langle\psi |C_+|\psi\rangle\geq \delta$ for all $|\psi\rangle\in\Hcal$. Then, for all $t\in\R^+$, 
$$
\langle \psi, \exp(iAt)B\exp(-iAt)\psi\rangle=\langle \psi, B\psi\rangle+ i\int_0^t \rd s \langle \psi_t, iC\psi_t\rangle \leq \langle \psi, B\psi\rangle- \delta t.
$$
But this is impossible if $B$ is bounded. 
\qed

It follows from this that, in infinite dimension, if $A$ and $B$ are bounded and have a positive commutator $C\geq0$, then $0\in\sigma(C)$ and~\eqref{eq:Cinf0} holds. Indeed, if not $\inf\sigma(C)=\delta>0$. And that leads to the above contradiction.

When both $A$ and $B$ are unbounded, the commutator can of course be positive. The standard case is when they are canonically conjugate so that $C=i\bbone$ and~\eqref{eq:dispunc2} reduces to the Heisenberg uncertainty principle, which provides  a uniform lower bound for all $|\psi\rangle$.

\section{Proof of the support uncertainty principle~\eqref{eq:uncprincab}}\label{s:proof_uncprinc}

Let $|\psi\rangle$ be an arbitrary state in $\Hcal$. Then
\begin{eqnarray*}
|\langle a_i|\psi\rangle|&=&\left|\sum_j \langle a_i|b_j\rangle \langle b_j|\psi\rangle\right|\leq \Mab\sum_j |\langle b_j|\psi\rangle|\\
&\leq&\Mab\sqrt{n_\Bcal(\psi)}\left(\sum_j  |\langle b_j|\psi\rangle|^2\right)^{1/2}\\
&=&\Mab\sqrt{n_\Bcal(\psi)}\left(\sum_k  |\langle a_k|\psi\rangle|^2\right)^{1/2}\\
&\leq&\Mab\sqrt{n_\Bcal(\psi)}\sqrt{n_\Acal(\psi)}\max_k|\langle a_k|\psi\rangle|
\end{eqnarray*}
Taking the maximum over $i$, the result follows. 


\begin{figure*}
\begin{center}
\begin{minipage}[t]{.45\linewidth}
\vspace{0pt}
\centering
\includegraphics[width=5cm]{spin1.pdf}\hspace{1cm}
\end{minipage}%
\begin{minipage}[t]{.45\linewidth}
\vspace{0pt}
\centering
\setlength\extrarowheight{5pt}
\begin{tabular}{|l||*{4}{c|}}\hline
\backslashbox{$T$}{$S$}
&\makebox[4em]{$\{1,0\}$}&\makebox[4em]{$\{1,-1\}$}&\makebox[4em]{$\{0,-1\}$}\\
\hline\hline
 $\{1,0\}$&$\psi_{-1,-1} $ \checked&$\psi_{-1,0}$&$\psi_{-1,1}$\checked\\ \hline
$\{1,-1\}$&$\psi_{0,-1}$ &$\psi_{0,0}$\checked&$\psi_{0,1}$\\ \hline
$\{0,-1\}$ &$\psi_{1,-1}$ \checked&$\psi_{1,0}$& $\psi_{1,1}$\checked  \\ \hline
\end{tabular}
\end{minipage}
\end{center}
\caption{Uncertainty and KD nonclassicality for spin $1$ system. (a) Left panel. The curve $n_{z}n_{x}=2$ (dashed (black) curve) and KD-nonclassicality edge $n_{z}+n_{x}=4$ (dot-dashed (black) line). The (blue) diamonds indicate the $(n_z(\psi), n_x(\psi))$ values corresponding to KD-nonclassical states $\psi$. The (red) squares correspond to KD-classical states.  (b) In the top row and in the first column of the table, one has the possible sets $S,T\subset\{1,0,-1\}$ with exactly two elements. And in each cell of the table, the state generating the intersection $\Pi_z(S)\Hcal \cap \Pi_x(T)\Hcal$, which is one-dimensional.  The state $\psi_{\epsilon, \epsilon'}$ is perpendicular to both $|x,\epsilon\rangle$ and $|z, \epsilon'\rangle$.  The check marks (\checked) indicate that the corrsponding state has $n_{x,z}({\psi_{\epsilon,\epsilon'}})=4$, which corresponds to the KD-nonclassicality edge. Explicit computation of their KD distribution shows they are KD classical. The four other states are equal to either $|x,0\rangle$ or $|z,0\rangle$ and as such are KD classical as well. For them $n_{z,x}({\psi_{\epsilon,\epsilon'}})=3$. \label{fig:spin1}}
\end{figure*}


\section{Spin $1$}\label{app:spinone}

We analyze here the KD nonclassicality of all spin $1$ states. The results are represented graphically in  Fig.~\ref{fig:spin1}. Since $n_z(\psi)$ and $n_x(\psi)$ are integers, the support uncertainty relation rules out only the case $n_z(\psi)=1=n_x(\psi)$, but that is obvious from the matrix $U$ at any rate: the two observables don't have common eigenvectors.  There are exactly two  states on the curve $n_z n_x=2$:
$|z,0\rangle, |x,0\rangle$. They are KD classical, since they are basis vectors. Also, for them $n_{z,x}(\psi)=3$. To study the KD nonclassicality of the other states, remark that, since $Z=1$, it follows from Theorem~\ref{thm:NCbound} that all states with $n_{z,x}(\psi)=n_z(\psi)+n_x(\psi)>4$ are KD nonclassical. This leaves only the states with  $n_{z,x}(\psi)= 4$ to analyze. The four eigenvectors $|z,\pm\rangle, |x,\pm\rangle$ are in this case. They are KD classical, since they are basis vectors. The list of  states $|\psi\rangle$ so that $n_z(\psi)=2=n_x(\psi)$ is given in Fig.~\ref{fig:spin1}. Explicit computations not reproduced here show they are all classical.

In conclusion, in a spin $1$ system all states $|\psi\rangle$ above the nonclassicality edge $n_{z,x}(\psi)=4$ are KD nonclassical. All states on or below the nonclassicality edge are KD classical. Note that the overwhelming majority of states has $n_\psi=6$ and is therefore KD nonclassical.


\section{Commuting and noncommuting projectors}
 We collect here some elementary results on commuting and noncommuting projectors needed in the main part of the article. In what follows, $\Pi$ and $\Pi'$ are two orthogonal projectors on a Hilbert space $\Hcal$. They project onto $\Pi\Hcal$ and $\Pi'\Hcal$ respectively. The following lemma  concerns commuting projectors.
 
 \begin{lemma}\label{lem:commutingproj} 
 Let $\Pi$ and $\Pi'$ be two orthogonal projectors on a Hilbert space $\Hcal$. Then 
 \begin{itemize}
 \item[(i)] $[\Pi,\Pi']=0$ if and only if $\Pi\Pi'$ is the projector onto 
 $\Pi\Hcal\cap\Pi'\Hcal$. 
 \item[(ii)]  $[\Pi,\Pi']=0$ if and only if $(\bbone-\Pi)\Pi'\Pi=0$ if and only if $\Pi'$ leaves $\Pi\Hcal$ invariant.
 \item[(iii)] $\Pi\Hcal\subset\Pi'\Hcal$ if and only if  $\Pi\Pi'=\Pi=\Pi'\Pi$. 
 \item[(iv)] $\Pi\Pi'=0$ if and only if $\Pi\Hcal$ is orthogonal to $\Pi'\Hcal$. 
 \item[(v)] If $\textrm{dim}\,\Pi\Hcal=1$, then $[\Pi,\Pi']=0$ if and only if either $\Pi\Hcal\subset\Pi'\Hcal$ or $\Pi\Hcal$ is orthogonal to $\Pi'\Hcal$.
 \end{itemize}
 \end{lemma}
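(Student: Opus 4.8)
The plan is to treat part~(ii) as the central statement and to derive the others from it together with routine manipulations of the projector identities $\Pi^2=\Pi=\Pi^*$ and $\Pi'^2=\Pi'=\Pi'^*$. First I would prove~(ii): starting from $\Pi\Pi'=\Pi'\Pi$, right-multiplication by $\Pi$ gives $\Pi\Pi'\Pi=\Pi'\Pi$, i.e. $(\bbone-\Pi)\Pi'\Pi=0$; conversely, $(\bbone-\Pi)\Pi'\Pi=0$ reads $\Pi'\Pi=\Pi\Pi'\Pi$, and taking adjoints yields $\Pi\Pi'=\Pi\Pi'\Pi$, so comparing the two recovers $\Pi\Pi'=\Pi'\Pi$. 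Finally, $(\bbone-\Pi)\Pi'\Pi=0$ says precisely that $\Pi'\Pi\psi\in\Pi\Hcal$ for every $\psi$, that is, $\Pi'$ carries the range of $\Pi$ into itself; this closes the three-way equivalence.

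Next I would prove~(i). Assuming $[\Pi,\Pi']=0$, I would check that $P:=\Pi\Pi'$ is self-adjoint, since $P^*=\Pi'\Pi=P$, and idempotent, since $P^2=\Pi\Pi'\Pi\Pi'=\Pi^2\Pi'^2=P$, hence an orthogonal projector; then I identify its range as $\Pi\Hcal\cap\Pi'\Hcal$, noting that $P\Hcal\subset\Pi\Hcal$ and, using $P=\Pi'\Pi$, that $P\Hcal\subset\Pi'\Hcal$, while every vector of $\Pi\Hcal\cap\Pi'\Hcal$ is fixed by $P$. For the converse, if $\Pi\Pi'$ is an orthogonal projector it is in particular self-adjoint, so $\Pi\Pi'=(\Pi\Pi')^*=\Pi'\Pi$.

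Parts~(iii) and~(iv) are short. For~(iii), $\Pi\Hcal\subset\Pi'\Hcal$ is equivalent to $\Pi'\Pi=\Pi$ (apply $\Pi'$ to $\Pi\psi$), and then $\Pi\Pi'=\Pi$ follows by adjunction; conversely $\Pi'\Pi=\Pi$ forces $\Pi\Hcal=\Pi'\Pi\Hcal\subset\Pi'\Hcal$. For~(iv), $\Pi\Pi'=0$ is equivalent to $\langle\Pi\phi\mid\Pi'\psi\rangle=0$ for all $\phi,\psi$, which is exactly orthogonality of the two ranges. For~(v), writing $\Pi=|e\rangle\langle e|$ with $\|e\|=1$, I would use~(ii): commutation gives $\Pi'e\in\mathbb{C}e$, and since $\Pi'$ is a projector its only eigenvalues are $0$ and $1$, so either $\Pi'e=e$, whence $\Pi\Hcal\subset\Pi'\Hcal$, or $\Pi'e=0$, whence $\Pi\Hcal\perp\Pi'\Hcal$; the converse is immediate from~(iii) and~(iv) (in the orthogonal case also $\Pi'\Pi=(\Pi\Pi')^*=0$).

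None of the steps is a genuine obstacle; the only places that need a little care are the range identification in~(i) — one must verify that $\Pi\Pi'$ is self-adjoint before it makes sense to call it ``the projector onto'' a subspace — and keeping track of which one-sided identity, $\Pi'\Pi=\Pi$ or $\Pi\Pi'=\Pi$, is in play in~(iii).
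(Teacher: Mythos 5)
Your proof is correct and uses essentially the same elementary projector algebra as the paper: part~(i) via self-adjointness and idempotence of $\Pi\Pi'$, part~(iii) via the one-sided identity $\Pi'\Pi=\Pi$ plus adjoints, and (iv) by polarization of the ranges. The only (harmless) variation is that you deduce~(v) from~(ii) with an eigenvalue argument for the rank-one projector, whereas the paper deduces it from~(i); both routes are equally valid.
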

 \noindent\emph{Proof.} (i) If the commutator vanishes, it is clear that $\Pi\Pi'$ is a projector. In that case, one easily checks that  $\Pi\Pi'\Hcal= \Pi\Hcal\cap\Pi'\Hcal$. Conversely, if $\Pi\Pi'$ is a projector, then it is self-adjoint and therefore $[\Pi,\Pi']=0$. \\
 (ii) This is obvious. \\
 (iii) Suppose $\Pi\Hcal\subset\Pi'\Hcal$. Let $|\psi\rangle\in\Hcal$ and write $\psi=\Pi'\psi+ (\bbone-\Pi')\psi$. Then $\Pi\psi=\Pi\Pi'\psi+\Pi(\bbone-\Pi')\psi$. The second term vanishes since $(\bbone-\Pi')\psi$ is orthogonal to $\Pi'\Hcal$ and therefore to $\Pi\Hcal$.  So $\Pi=\Pi\Pi'$. Taking adjoints, we also have $\Pi=\Pi'\Pi$. The converse is obvious.\\
 (iv) This is obvious.\\
 (v) We know from (i) that the commutator vanishes if and only if $\Pi\Pi'$ projects onto $\Pi\Hcal\cap\Pi'\Hcal$. Also, one either has $\Pi\Hcal\cap\Pi'\Hcal=\Pi\Hcal$ or $\Pi\Hcal\cap\Pi'\Hcal=\{0\}$. In the first case $\Pi\Hcal\subset\Pi'\Hcal$. In the second case, according to (i), $\Pi\Pi'$ is the projector onto $\{0\}$, which means $\Pi\Pi'=0$. But this equivalent to $\Pi\Hcal$ being orthogonal to $\Pi'\Hcal$ by (iv). \qed
 
We now turn to noncommuting projectors. 
\begin{lemma}\label{lem:propnoncommutcrit} Let $\Pi,\Pi'$ be two orthogonal projectors Then $[\Pi, \Pi']\not=0$ if and only if there exists $|\psi\rangle\in\Hcal$ so that $\Pi'\Pi\psi\not\in\Pi\Hcal$, if and only if $(\bbone-\Pi)\Pi'\Pi\not=0$.
\end{lemma}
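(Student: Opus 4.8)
The plan is to prove the stated chain of equivalences by showing that the middle condition and the last condition are each equivalent to $[\Pi,\Pi']\neq 0$, with the only substantive input being a short adjoint identity already contained in Lemma~\ref{lem:commutingproj}(ii).

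First I would dispose of the equivalence between ``there exists $|\psi\rangle\in\Hcal$ with $\Pi'\Pi\psi\notin\Pi\Hcal$'' and ``$(\bbone-\Pi)\Pi'\Pi\neq 0$''. The key elementary observation is that, since $\Pi$ is the orthogonal projector onto $\Pi\Hcal$, a vector $\chi\in\Hcal$ lies in $\Pi\Hcal$ if and only if $\Pi\chi=\chi$, that is, if and only if $(\bbone-\Pi)\chi=0$. Applying this with $\chi=\Pi'\Pi\psi$, the membership $\Pi'\Pi\psi\in\Pi\Hcal$ is equivalent to $(\bbone-\Pi)\Pi'\Pi\psi=0$. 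Therefore some $|\psi\rangle$ violates this membership precisely when the operator $(\bbone-\Pi)\Pi'\Pi$ fails to be the zero operator.

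Next I would show $(\bbone-\Pi)\Pi'\Pi=0$ if and only if $[\Pi,\Pi']=0$, and conclude by taking the contrapositive. For completeness I would include the two-line argument: if $(\bbone-\Pi)\Pi'\Pi=0$ then $\Pi'\Pi=\Pi\Pi'\Pi$; taking adjoints of both sides and using $\Pi^\dagger=\Pi$, $(\Pi')^\dagger=\Pi'$ gives $\Pi\Pi'=\Pi\Pi'\Pi$ as well, and comparing the two identities yields $\Pi'\Pi=\Pi\Pi'$, i.e.\ $[\Pi,\Pi']=0$. Conversely, if $[\Pi,\Pi']=0$ then $\Pi\Pi'\Pi=\Pi\Pi\Pi'=\Pi\Pi'=\Pi'\Pi$, so $(\bbone-\Pi)\Pi'\Pi=\Pi'\Pi-\Pi\Pi'\Pi=0$. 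Chaining this with the equivalence of the previous paragraph gives exactly the two ``if and only if'' statements in the lemma.

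There is no real obstacle here: the whole argument is a manipulation of projector identities. The only point deserving a moment's care is the self-adjointness step, where passing to adjoints upgrades the one-sided relation $\Pi'\Pi=\Pi\Pi'\Pi$ to the symmetric conclusion $\Pi'\Pi=\Pi\Pi'$; the rest is simply unwinding the definition of a nonzero operator in terms of the existence of a vector it does not annihilate.
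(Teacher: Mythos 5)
Your argument is correct and follows essentially the same route as the paper, which simply notes that the lemma is the contrapositive of Lemma~\ref{lem:commutingproj}(ii) (there declared obvious); your adjoint computation is exactly the content hiding behind that ``obvious.'' Nothing is missing, and spelling out that $\Pi'\Pi\psi\in\Pi\Hcal \Leftrightarrow (\bbone-\Pi)\Pi'\Pi\psi=0$ is a fine way to make the first equivalence explicit.
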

\noindent\emph{Proof.} This is the contrapositive of point (ii) of Lemma~\ref{lem:commutingproj}. \qed\\


\section{The Tao matrix: uncertainty diagram, nonclassicality}\label{app:tao}
We recall the Tao matrix~\cite{tao03} is the following unitary $6$ by $6$ matrix:
\begin{equation}\label{eq:taomatrix2}
U=\frac1{\sqrt d}\\
\begin{pmatrix}
1&1&1&1&1&1\\
1&1&\omega&\omega&\omega^2&\omega^2\\
1&\omega&1&\omega^2&\omega^2&\omega\\
1&\omega&\omega^2&1&\omega&\omega^2\\
1&\omega^2&\omega^2&\omega&1&\omega\\
1&\omega^2&\omega&\omega^2&\omega&1
\end{pmatrix},
\quad 
\omega=\exp(i\frac{2\pi}{3}).
\end{equation}
Since all matrix elements have modulus $1/\sqrt d$, it can be viewed as the transition matrix between two MUB. We will construct here the corresponding uncertainty diagram. First, note that $U$ is certainly not COINC, since it contains, for example, vanishing 2 minors, as is immediately seen.

We now first determine the uncertainty diagram of $U$, represented in Fig.~\ref{fig:uncdiagTAO}. Note that, if $(\na, \nb)$ belongs to the uncertainty diagram of a transition matrix $U$, then there exist $S, T\subset \llbracket 1,d \rrbracket$  with $|S|=\na, |T|=\nb$ and 
$\Hcal(S,T)\not=\{0\}$. While this necessary condition is not sufficient in general to conclude that $(\na, \nb)$ belongs to the uncertainty diagram, a useful sufficient condition is given in the following lemma. 
\begin{lemma}\label{lem:fullsupport}
Let $S, T\subset \llbracket 1,d \rrbracket$ and suppose $\dim \Hcal(S,T)=L\geq 1$. Suppose that for all $S'\subset S$ for which $|S'|=|S|-1$, one has
$\dim\Hcal(S',T)\leq L-1$, and that for all $T'\subset T$ for which $|T'|=|T|-1$, one has $\dim\Hcal(S,T')\leq L-1$. Then the set of $|\psi\rangle\in\Hcal(S,T)$ for which $\na(\psi)=|S|, \nb(\psi)=|T|$ is an open and dense set in $\Hcal(S,T)$. The opposite implication is also true. 
\end{lemma}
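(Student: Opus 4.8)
\textbf{Proof proposal for Lemma~\ref{lem:fullsupport}.}

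The plan is to prove both implications by relating the condition ``$\na(\psi)=|S|$ and $\nb(\psi)=|T|$'' to the avoidance of a finite union of proper subspaces of the vector space $\Hcal(S,T)$. First I would observe that any $\psi\in\Hcal(S,T)$ automatically satisfies $S_\psi\subset S$ and $T_\psi\subset T$, so that $\na(\psi)\leq|S|$ and $\nb(\psi)\leq|T|$. Hence the statement $\na(\psi)=|S|$ is equivalent to $\langle a_i|\psi\rangle\neq 0$ for every $i\in S$, and $\nb(\psi)=|T|$ to $\langle b_j|\psi\rangle\neq 0$ for every $j\in T$. For each $i\in S$ introduce $S'=S\setminus\{i\}$ and note that $\{\psi\in\Hcal(S,T): \langle a_i|\psi\rangle=0\}=\Hcal(S',T)$; similarly, for each $j\in T$ with $T'=T\setminus\{j\}$, $\{\psi\in\Hcal(S,T): \langle b_j|\psi\rangle=0\}=\Hcal(S,T')$. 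Therefore the set of ``full-support'' vectors is exactly
\begin{equation*}
\mathcal{O}(S,T):=\Hcal(S,T)\setminus\left(\bigcup_{i\in S}\Hcal(S\setminus\{i\},T)\ \cup\ \bigcup_{j\in T}\Hcal(S,T\setminus\{j\})\right),
\end{equation*}
a complement in the $L$-dimensional vector space $\Hcal(S,T)$ of a finite union of linear subspaces.

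Next I would recall the elementary topological fact (over $\C$, or indeed over any infinite field): a finite union of proper linear subspaces of a finite-dimensional vector space $W$ is closed and has empty interior, so its complement is open and dense; and conversely, if one of the subspaces in the union is not proper, i.e.\ equals $W$, then the complement is empty. Applying this to $W=\Hcal(S,T)$ with $L\geq 1$: if every subspace $\Hcal(S\setminus\{i\},T)$ ($i\in S$) and $\Hcal(S,T\setminus\{j\})$ ($j\in T$) is a proper subspace of $\Hcal(S,T)$, then $\mathcal{O}(S,T)$ is open and dense. Since each of these is already contained in $\Hcal(S,T)$, being proper is equivalent to having dimension at most $L-1$, which is precisely the hypothesis $\dim\Hcal(S',T)\leq L-1$ for all $|S'|=|S|-1$, $S'\subset S$, and $\dim\Hcal(S,T')\leq L-1$ for all $|T'|=|T|-1$, $T'\subset T$. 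This gives the forward implication.

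For the converse I would argue contrapositively: suppose one of the hypotheses fails, say $\dim\Hcal(S\setminus\{i_0\},T)=L$ for some $i_0\in S$ (the case of a $T'$ is symmetric). Since $\Hcal(S\setminus\{i_0\},T)\subset\Hcal(S,T)$ and both have dimension $L$, they coincide, so every $\psi\in\Hcal(S,T)$ satisfies $\langle a_{i_0}|\psi\rangle=0$, hence $\na(\psi)\leq|S|-1<|S|$ for all such $\psi$; thus $\mathcal{O}(S,T)=\varnothing$, which is certainly not dense in the nonzero space $\Hcal(S,T)$. This establishes the opposite implication. I do not anticipate a serious obstacle here; the only point requiring a little care is the bookkeeping that ``$\psi\in\Hcal(S,T)$ with $\langle a_i|\psi\rangle=0$'' is the same as ``$\psi\in\Hcal(S\setminus\{i\},T)$'' — this uses that $\Hcal(S,T)=\PiAcal(S)\Hcal\cap\PiBcal(T)\Hcal$ and that imposing one more vanishing $\Acal$-coordinate is exactly intersecting with $\PiAcal(S\setminus\{i\})\Hcal$ — together with the standard fact about finite unions of proper subspaces over an infinite field, which I would state as a one-line remark rather than prove.
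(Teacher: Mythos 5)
Your proposal is correct and follows essentially the same route as the paper's proof: both identify the failure of full support with membership in the subspaces $\Hcal(S\setminus\{i\},T)$ and $\Hcal(S,T\setminus\{j\})$, use the hypothesis to see these are proper (dimension at most $L-1$) so their finite union has open dense complement, and prove the converse by contraposition via $\Hcal(S',T)=\Hcal(S,T)$ forcing $\na(\psi)<|S|$ for every $\psi$. The only cosmetic difference is that the paper intersects the finitely many open dense complements one subspace at a time, whereas you phrase it as the complement of a single finite union of proper subspaces; these are the same argument.
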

\begin{proof} For each $j\in S$, we have that $\Hcal(S\setminus \{j\}, T)$ is an at most $L-1$-dimensional subspace of $\Hcal(S,T)$. Hence the set $\Hcal(S,T)\setminus \Hcal(S\setminus\{j\}, T)$ is open and dense and this is true also for 
$$
\cap_{j\in S}\Hcal(S,T)\setminus \Hcal(S\setminus\{j\}, T).
$$
 It is composed of all vectors  $|\psi\rangle\in\Hcal(S,T)$  for which $\na(\psi)=|S|$. Similarly, the set of $|\psi\rangle\in\Hcal(S,T)$ for which $\nb(\psi)=|T|$ is open and dense as well. Taking the intersection, the direct implication follows. To prove the opposite implication, one can work by contraposition. Suppose there exist
 $S'\subset S$ for which $|S'|=|S|-1$, and so that 
$\dim\Hcal(S',T)=L$. This of course means that $\Hcal(S',T)=\Hcal(S,T)$. 
Hence all states $|\psi\rangle\in \Hcal(S,T)$ have the property that $S_\psi\subset S'$ so that $\na(\psi)\leq |S'|<|S|$. This concludes the proof. 
\end{proof}
We will use the above lemma repeatedly to determine the uncertainty diagram of the Tao matrix.  Note that, since the matrix is symmetric, so is the uncertainty diagram. We first determine the dimension of $\Hcal(S,T)$ in terms of $|S|$ and $|T|$. 

One may first remark that, 
\begin{equation}\label{eq:STdim}
|S|+|T|\geq 7\quad \Rightarrow\quad\dim\Hcal(S,T)=|S|-[6-|T||]=|S|+|T|-6.
\end{equation}
Indeed, if $|T|$ equals $1$ or $6$, this is immediate. 
Consider now the case where $|T|\geq 2$ and $|S|=5$. Then we are imposing one constraint on the $|T|$ components of $|\psi\rangle\in\Hcal(T)$ on the $\Bcal$-basis. Hence $\dim\Hcal(S,T)=|T|-1=|S|+|T|-6$.  
To proceed, we remark that a numerical computation shows that none of the 400 different $3\times 3$ minors of $U$ vanish. This implies that, when 
$|T|=3$ and $|S|=4$, the two constraints imposed on the three components of $|\psi\rangle=d_{j_1}|b_{j_1}\rangle+ d_{j_2}|b_{j_2}\rangle+d_{j_3}|b_{j_3}\rangle$ are necessarily independent so that $\dim\Hcal(S,T)=3-2=1=|S|+|T|-6$. Now consider the case where $|S|=4=|T|$. One then imposes two linearly independent constraints on $4$ basis vectors. Hence $\dim\Hcal(S,T)=2=|S|+|T|-6$. 
Interchanging the roles of $S$ and $T$ in the previous arguments, \eqref{eq:STdim} follows.  

Next, we have that 
\begin{eqnarray}\label{eq:STdim2}
|S|+|T|=6, |T|\not=2\not=|S|& \Rightarrow&\dim\Hcal(S,T)=0,\\
|S|+|T|=6, |T|=2\ \mathrm{or}\ |S|=2&\Rightarrow &\dim\Hcal(S,T)=0 \ \textrm{or}\ 1.\label{eq:STdim3}
\end{eqnarray}
To see this, first remark that when $|S|=1$ or $|T|=1$, then this is immediate since the basis vectors have full support. When $|S|=3=|T|$, the result follows since all $3\times 3$ minors are nonvanishing. 
Consider the case $|T|=2$ and $|S|=4$. With $T=\{j_1, j_2\}$, one has $|\psi\rangle=d_{j_1}|b_{j_1}\rangle + d_{j_2}|b_{j_2}\rangle$. If $S^c=\{i_1, i_2\}$, then $|\psi\rangle\in \Hcal(S,T)$ provided
$\langle a_{i_1}|\psi\rangle=0=\langle a_{i_2}|\psi\rangle$.   Inspecting the matrix $U$ one observes that two such constraints can be either linearly dependent or independent, depending on the choice of $i_1, i_2, j_1, j_2$. So $\dim\Hcal(S,T)=1$ or $0$.  The same is true with the roles of $S$ and $T$ interchanged, proving~\eqref{eq:STdim2}-\eqref{eq:STdim3}.

Finally, one has
\begin{equation}\label{eq:STdim4}
|S|+|T|\leq 5\Rightarrow \dim\Hcal(S,T)=0.
\end{equation}
Since $\dim\Hcal(S,T)$ is a nondecreasing function of $|S|$ and of $|T|$, this follows from what precedes.

We now use Lemma~\ref{lem:fullsupport} to conclude. First, if $\na+\nb\geq 8$, Eq.~\eqref{eq:STdim} allows us to conclude that $(\na,\nb)$ belongs to the uncertainty diagram. Next, when $\na+\nb\leq 5$, Eq.~\eqref{eq:STdim4} implies $(\na,\nb)$ does not belong to the uncertainty diagram. Consider then the case where $\na+\nb=6$. Clearly $(1,5), (3,3), (5,1)$ do not belong to the uncertainty diagram because for these cases $\dim \Hcal(S,T)=\{0\}$. Since there exist $S, T $ with $|S|=2, |T|=4$ for which $\dim \Hcal(S,T)=1$, it follows from Eq.~\eqref{eq:STdim4} and Lemma~\ref{lem:fullsupport} that $(2,4)$ does belong to the uncertainty diagram. 

It remains to check the case where $\na+\nb=7$. We know $(1,6)$ and $(6,1)$ do belong to the uncertainty diagram. An inspection of the Tao matrix shows that $(2,5)$ and $(5,2)$ do not belong to the uncertainty diagram. For the state 
$$
|\psi\rangle=\omega|b_1\rangle-(1+\omega)|b_2\rangle +|b_3\rangle
$$
one sees readily that $(\na(\psi),\nb(\psi))=(4,3)$. 

In conclusion, the uncertainty diagram of the Tao matrix is includes all $(\na,\nb)$ for which $\na+\nb\geq 7$, except for $(2,5), (5,2)$,
 as well as the points $(2,4)$ and $(4,2)$. 
 In particular, the lower edge $L(\na)$ of the diagram is given by
$$
L(1)=6,\quad L(2)=4,\quad L(3)=4,\quad L(4)=2,\quad L(5)=3,\quad L(6)=1.
$$
 
It remains to discuss the KD (non)classicality of the states in function of their support uncertainty.  We know from Proposition~\ref{prop:NCbound_bis} that all states $|\psi\rangle$ with $\na(\psi)+\nb(\psi)> 7$ are KDNC. These are indicated with (blue) diamonds in Fig.~\ref{fig:uncdiagTAO}.
When  $\na(\psi)+\nb(\psi)\leq 7$, the basis states for which $(\na(\psi),\nb(\psi)=(6,1)$ or $(1,6)$ are all classical. They are indicated as (red) squares in Fig.~\ref{fig:uncdiagTAO}.
We have verified with a numerical computation that, when $(\na(\psi),\nb(\psi))=(3,4)$, $(4,3)$, $(2,4)$ or $(4,2)$, $|\psi\rangle$ is KD nonclassical.


\section{COINCs are an open dense set}\label{app:coincopendense}
We will prove Theorem~\ref{thm:COINC_U_dense} and Theorem~\ref {thm:COINC_mab}  here; they are immediate  consequences of  Theorem~\ref{thm:COINC_dense} below.
The proof relies crucially on Lemma~\ref{lem:minorcrit} that was shown in~\cite{SDB21}. We give, for completeness, a slightly different proof of it, in the following Lemma. 
\begin{lemma}\label{lem:minor_inc} (i) Let $1\leq |S|< d, 1\leq |T|\leq d$. Then \textrm{Rank} $U(S^c,T)<|T|$ if and only if $\PiAcal(S)\Hcal\cap\PiBcal(T)\Hcal\not=\{0\}$. \\
(ii) Let $1\leq |S|< d, 1\leq |T|\leq d$. Then \textrm{Rank} $U(S^c,T)=|T|$ if and only if $\PiAcal(S)\cap\PiBcal(T)=\{0\}$.\\
(iii) Let $1\leq |T|<d$ and let $S$ be such that $|S|=d-|T|$. Then det $U(S^c,T)\not=0$ if and only if $\PiAcal(S)\cap\PiBcal(T)=\{0\}$. \\
(iv) $\Acal$ and $\Bcal$ are COINC if and only if none of the minors of the matrix $U$  vanish.
\end{lemma}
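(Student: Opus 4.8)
The plan is to establish part~(i) by a direct kernel computation, obtain~(ii) and~(iii) as formal consequences, and then deduce~(iv) by matching square submatrices of $U$ with pairs of index sets saturating the bound $|S|+|T|=d$.

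For~(i), fix $S,T$ with $1\le|S|<d$ and $1\le|T|\le d$. Every $|\psi\rangle\in\PiBcal(T)\Hcal$ is uniquely $|\psi\rangle=\sum_{j\in T}c_j|b_j\rangle$ with $c\in\C^{|T|}$, and such a $|\psi\rangle$ lies in $\PiAcal(S)\Hcal$ if and only if $\langle a_i|\psi\rangle=0$ for every $i\in S^c$, that is, $\sum_{j\in T}\langle a_i|b_j\rangle c_j=0$ for all $i\in S^c$, i.e. $U(S^c,T)c=0$. Thus $c\mapsto\sum_{j\in T}c_j|b_j\rangle$ is a linear isomorphism from $\Ker U(S^c,T)$ onto $\PiAcal(S)\Hcal\cap\PiBcal(T)\Hcal$, whence
\[
\dim\bigl(\PiAcal(S)\Hcal\cap\PiBcal(T)\Hcal\bigr)=|T|-\mathrm{Rank}\,U(S^c,T),
\]
and the intersection is nonzero exactly when $\mathrm{Rank}\,U(S^c,T)<|T|$. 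This is~(i). Part~(ii) is then the contrapositive of~(i) together with the trivial bound $\mathrm{Rank}\,U(S^c,T)\le|T|$. For~(iii), the hypothesis $|T|<d$, $|S|=d-|T|$ makes $U(S^c,T)$ a square $|T|\times|T|$ matrix, so $\det U(S^c,T)\ne0\iff\mathrm{Rank}\,U(S^c,T)=|T|$, and~(ii) closes the loop.

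For~(iv), I would first recall (from the discussion following Definition~\ref{def:COINC}) that $\Acal,\Bcal$ are COINC if and only if $\PiAcal(S)\Hcal\cap\PiBcal(T)\Hcal=\{0\}$ for all $S,T$ with $|S|+|T|=d$ (and $1\le|S|,|T|\le d-1$): the cases $|S|+|T|<d$ follow by enlarging $S$, and empty index sets give the zero subspace automatically. A minor of $U$ is $\det U(R,C)$ for some $R,C\subset\llbracket1,d\rrbracket$ with $|R|=|C|=k$; if $k=d$ then $U(R,C)=U$ is unitary and the minor is nonzero, so take $k<d$. Put $T:=C$ and $S:=R^c$, so that $|T|=k<d$, $|S|=d-k$, $|S^c|=k=|T|$, hence $|S|+|T|=d$ and $U(S^c,T)=U(R,C)$. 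By~(iii), $\det U(R,C)\ne0\iff\PiAcal(S)\Hcal\cap\PiBcal(T)\Hcal=\{0\}$. Since $(R,C)\mapsto(R^c,C)$ is a bijection between square submatrices of $U$ of size $k<d$ and pairs $(S,T)$ with $|S|+|T|=d$, ``no minor of $U$ vanishes'' is equivalent to the COINC condition.

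I do not expect a genuine obstacle here: the entire content is the kernel identity in~(i) plus careful bookkeeping of index sets in~(iv). The only points requiring mild care are the edge cases — the top minor $k=d$, pairs with $|S|+|T|>d$ where~(ii) degenerates to a dimension count, and checking that the correspondence $(R,C)\leftrightarrow(R^c,C)$ really does exhaust exactly the pairs with $|S|+|T|=d$ — and, for~(iv), invoking the earlier reduction of the COINC condition to the equality $|S|+|T|=d$ so that the relevant submatrices are square.
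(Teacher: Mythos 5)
Your proof is correct and follows essentially the same route as the paper's: you identify $\PiAcal(S)\Hcal\cap\PiBcal(T)\Hcal$ with $\Ker U(S^c,T)$ and apply rank--nullity for (i)--(iii), then match $k$-minors ($k<d$) with pairs $(S,T)$ satisfying $|S|+|T|=d$ and use the reduction of the COINC condition to that case (equivalently, the monotonicity of the intersections the paper invokes) for (iv). The explicit treatment of the $k=d$ minor and of the bijection $(R,C)\leftrightarrow(R^c,C)$ is just careful bookkeeping of what the paper does implicitly.
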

\begin{proof}(i) 
Reordering the bases $\Acal, \Bcal$, we can assume $S^c=\llbracket 1, K\rrbracket$, with $K=|S^c|$ and $T=\llbracket 1, L\rrbracket$ with $L=|T|$. 
Hence
$$
U(S^c,T)=
\begin{pmatrix}
\langle a_1|b_1\rangle&\dots&\langle a_1|b_L\rangle\\
\vdots&\vdots&\vdots\\
\langle a_K|b_1\rangle&\dots&\langle a_K|b_L\rangle
\end{pmatrix}.
$$
Consequently, one has that $|\psi\rangle =\sum_{j=1}^L \beta_j |b_j\rangle\in \PiAcal(S)\Hcal\cap \PiBcal(T)\Hcal$ if and only if 
$\beta=(\beta_1,\dots, \beta_K)^T\in \textrm{Ker}\ U(S^c,T)$. 
\\
\emph{Proof of} $\Rightarrow$. Since 
$$
\textrm{Rank} U(S^c,T) +\textrm{Ker} U(S^c,T)=L=|T|,
$$
it follows from the hypothesis that Ker $U(S^c, T)=\{0\}$, which implies the result.  \\
\emph{Proof of} $\Leftarrow$.  Let $0\not=\psi=\sum_{j=1}^L\beta_j|b_j\rangle\in \PiAcal(S)\Hcal\cap \PiBcal(T)\Hcal$.  Then $U(S^c,T)\beta=0$ so that Ker $U(S^c,T)\not=\{0\}$. 
Hence
$
\textrm{Rank}\ U(S,T) <|T|
$. \\
(ii) This is the contraposition of (i).\\
(iii) This follows directly from (ii) since now the matrix $U(S^c,T)$ is a square matrix. \\
(iv) Note that all minors of $U$ are determinants of matrices of the form $U(S^c,T)$, with $|S^c|=|T|$ and hence $|S|=d-|T|$. Now suppose $\Acal$ and $\Bcal$ are COINC. Then, for any such $S,T$, $\PiAcal(S)\cap\PiBcal(T)=\{0\}$. Hence (iii) implies the determinant of $U(S^c,T)$ does not vanish. It remains to prove the converse. Suppose therefore that, whenever $|S|+|T|=d$, det~$U(S^c,T)\not=0$. Then, according to (iii), $\PiAcal(S)\Hcal\cap \PiBcal(T)\Hcal=\{0\}$. Now, suppose $S'\subset S, T'\subset T$, then we still have  $\PiAcal(S')\Hcal\cap \PiBcal(T')\Hcal=\{0\}$. This concludes the proof. 
\end{proof}

To state Theorem~\ref{thm:COINC_dense}, we need some preliminaries. Let $\Acal=(|a_1\rangle, |a_2\rangle, \dots, |a_d\rangle)$ and $\Bcal=(|b_1\rangle, |b_2\rangle, \dots, |b_d\rangle)$ be two orthonormal bases in $\Hcal$, with transition matrix $U$ and suppose $V, W$ are unitary $d\times d$ matrices. Then we construct the orthonormal bases $\Acal'=(|a'_1\rangle, \dots, |a'_d\rangle)$, $\Bcal'=(|b_1'\rangle, \dots, |b_d'\rangle)$ by
$$
|a_i'\rangle=\sum_j V_{ji}|a_j\rangle, \quad |b'_j\rangle =\sum_k W_{kj}|b_k\rangle.
$$ 
We will write
$
\Acal'=\Acal V,\quad \Bcal'=\Bcal W.
$
Their transition matrix $U'$ is
$$
U'=V^\dagger U W.
$$
Note that, since $|b_i\rangle=\sum_j U_{ji}|a_j\rangle$, we have $\Bcal=\Acal U$. 
Let $\sigma$ be a permutation of $\llbracket 1,d\rrbracket$. 
 Let $P_\sigma=(e_{\sigma_1} \dots e_{\sigma_d})$
where $e_1, \dots, e_d$ is the canonical basis of $\C^d$. Then
$\Acal P_\sigma=(|a_{\sigma_1}\rangle,\dots, |a_{\sigma_d}\rangle)$. 
Remark for further computations that, for all $d'\times d$ matrices $C=(c_1 \dots c_d)$, and all $d\times d'$ matrices $F=
\begin{pmatrix} f_1\\ \vdots \\ f_d\end{pmatrix}$,
$$
P_\sigma P_\tau=P_{(\sigma\circ\tau)}, \quad P_\sigma^T=P_\sigma^\dagger=P_{\sigma^{-1}},\quad CP_\sigma=(c_{\sigma_1} \dots c_{\sigma_d}), \quad P_\sigma F=
\begin{pmatrix} f_{\sigma_1^{-1}} \\ \vdots\\ f_{\sigma_d^{-1}}\end{pmatrix}.
$$
It follows from the above that, if $U$ is the transition matrix for two bases $\Acal$ and $\Bcal$, and if $\Acal'=\Acal P_\sigma, \Bcal'=\Bcal P_\tau$, then the transition matrix $U'$ for $\Acal'$ and $\Bcal'$ is 
\begin{equation}\label{eq:equivU}
U'=P_{\sigma^{-1}}UP_\tau,
\end{equation}
since
\begin{eqnarray*}
U'_{ij}&=&\langle a'_i|b'_j\rangle=\langle a_{\sigma_i}|b_{\tau_j}\rangle=U_{\sigma_i \tau_j}\\
&=&\langle e_{\sigma_i}, U e_{\tau_j}\rangle_{\C^d}=\langle P_{\sigma}e_i, U P_\tau e_j\rangle_{\C^d}\\
&=&\langle e_i, P_\sigma^\dagger U P_\tau e_j\rangle_{\C^d},
\end{eqnarray*}
where $\langle \cdot, \cdot \rangle_{\C^d}$ is the inner product on $\C^d$. Note that $U'$ is obtained from $U$ by permuting its rows by $\sigma$ and its columns by $\tau$. 
The central result of this Appendix is then the following theorem. 
\begin{theorem}\label{thm:COINC_dense} Let $\Acal$, $\Bcal$ be orthonormal bases in $\Hcal$, with transition matrix $U$. Then there exists a family of unitary  matrices $T(\epsilon)$ with $\epsilon>0$, and $\lim_{\epsilon\to0}T(\epsilon)=\bbone_d$, and so that the bases $\Acal(\epsilon)=\Acal T(\epsilon)^\dagger$ and $\Bcal$ are COINC for all sufficiently small $\epsilon$. One has $\Bcal=\Acal(\epsilon)U(\epsilon)$, with $U(\epsilon)=T(\epsilon)U$.
\end{theorem}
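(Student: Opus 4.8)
The plan is to work directly inside the unitary group $U(d)$, rather than attempting to build an explicit unitary deformation of $U$ by hand. Since $U$ is invertible, the map $T\mapsto TU$ is a bijection of $U(d)$ onto itself, and by the correspondence recalled before the theorem, $TU$ is precisely the transition matrix of the pair $(\Acal T^\dagger,\Bcal)$. By Lemma~\ref{lem:minorcrit} (equivalently Lemma~\ref{lem:minor_inc}~(iv)), $\Acal T^\dagger$ and $\Bcal$ are COINC if and only if no minor of $TU$ vanishes. So I would define
\[
\mathcal G=\{T\in U(d)\mid \text{no minor of }TU\text{ vanishes}\},
\]
and the whole theorem reduces to showing that $\mathcal G$ is open and dense in $U(d)$: once that is known, for each $\epsilon>0$ one picks $T(\epsilon)\in\mathcal G$ with $\|T(\epsilon)-\bbone_d\|<\epsilon$ (possible because $\bbone_d\in U(d)$ and $\mathcal G$ is dense), so that $\Acal(\epsilon)=\Acal T(\epsilon)^\dagger$ and $\Bcal$ are COINC for every $\epsilon>0$, $T(\epsilon)\to\bbone_d$, and $\Bcal=\Acal T(\epsilon)^\dagger T(\epsilon)U=\Acal(\epsilon)U(\epsilon)$ with $U(\epsilon)=T(\epsilon)U$, exactly as claimed.

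To prove openness and density of $\mathcal G$, I would fix index sets $I,J\subset\llbracket 1,d\rrbracket$ with $|I|=|J|=k$, $1\le k\le d-1$, and study the function
\[
M_{I,J}(T)=\det\bigl((TU)_{I,J}\bigr),
\]
the $k\times k$ minor of $TU$ using rows $I$ and columns $J$. This is a polynomial in the real and imaginary parts of the entries of $T$, hence its restriction to $U(d)$ is a real-analytic function on the connected real-analytic manifold $U(d)$. The crucial step is to check that $M_{I,J}$ is \emph{not} identically zero on $U(d)$. For this, choose a bijection $\sigma$ of $\llbracket 1,d\rrbracket$ with $\sigma(J)=I$ (possible since $|J|=|I|$ and $|J^c|=|I^c|$) and let $P_\sigma$ be its permutation matrix; then $(P_\sigma)_{I,J}$ has exactly one $1$ in each row and column, so $\det (P_\sigma)_{I,J}=\pm1\neq0$. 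Taking $T_0=P_\sigma U^\dagger\in U(d)$ gives $T_0U=P_\sigma$, hence $M_{I,J}(T_0)=\pm1\neq0$. By the identity principle for real-analytic functions on a connected manifold, the zero set $\mathcal Z_{I,J}=\{T\in U(d)\mid M_{I,J}(T)=0\}$ is then closed with empty interior, i.e.\ nowhere dense. Since there are only finitely many pairs $(I,J)$, the set $\mathcal G=U(d)\setminus\bigcup_{I,J}\mathcal Z_{I,J}$ is a finite intersection of open dense subsets of $U(d)$, hence open and dense, which is what was needed. Theorem~\ref{thm:COINC_U_dense} follows at once, and Theorem~\ref{thm:COINC_mab} by applying the construction with $U$ a MUB transition matrix and invoking continuity of $\mab$ in the entries of $U$.

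The main obstacle is exactly the one flagged in the text: the unitarity constraint means one cannot perturb the entries of $U$ freely, so the naive "a generic perturbation changes all the minors" heuristic has to be made rigorous on the constrained set. Phrasing everything on the manifold $U(d)$ itself is what handles this cleanly — the technical content then collapses to two standard ingredients: that a real-analytic function on a connected manifold which is nonzero somewhere has nowhere-dense zero set, and the elementary exhibition of a unitary (a permutation matrix times $U^\dagger$) that realizes a prescribed non-vanishing minor. The only points that require a little care in the write-up are verifying the real-analyticity of $M_{I,J}$ on $U(d)$ and confirming the bookkeeping $U(\epsilon)=T(\epsilon)U$ against the conventions $\Acal'=\Acal V$, $\Bcal'=\Bcal W\Rightarrow U'=V^\dagger UW$ introduced just above.
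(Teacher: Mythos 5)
Your argument is correct, but it follows a genuinely different route from the paper. The paper proves the theorem constructively: it runs a downward recursion on the number of vanishing minors of $U$, at each step permuting rows and columns to bring a vanishing minor into a corner and then applying an explicit one-parameter family of planar rotations (Lemma~\ref{lem:Mlemma}) that renders that minor nonzero while, for small enough rotation angle, leaving all previously nonvanishing minors nonvanishing; after at most $L$ steps one obtains $T(\epsilon)$ close to $\bbone_d$ with $T(\epsilon)U$ free of vanishing minors. You instead argue softly on the group: each minor $M_{I,J}(T)=\det\bigl((TU)_{I,J}\bigr)$ is real-analytic on the connected real-analytic manifold $U(d)$, it is not identically zero because $T_0=P_\sigma U^\dagger$ with $\sigma(J)=I$ gives $T_0U=P_\sigma$ whose $(I,J)$ minor is $\pm1$, so by the identity principle each zero set $\mathcal Z_{I,J}$ is closed and nowhere dense, and the finite intersection $\mathcal G=U(d)\setminus\bigcup_{I,J}\mathcal Z_{I,J}$ is open and dense; picking $T(\epsilon)\in\mathcal G$ within $\epsilon$ of $\bbone_d$ finishes the proof, and your bookkeeping ($U'=V^\dagger UW$ with $V=T^\dagger$, $W=\bbone$, hence $U(\epsilon)=T(\epsilon)U$) is consistent with the paper's conventions. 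What each approach buys: yours is shorter and more conceptual, delivers Theorem~\ref{thm:COINC_U_dense} (openness and density of COINC unitaries) in the same stroke, and the permutation-matrix witness neatly replaces the paper's recursion; its only external ingredient is the identity theorem for real-analytic functions on a connected manifold (standard, but it should be stated or referenced — or avoided by composing with the surjective entire parametrization $H\mapsto We^{iH}$ from Hermitian matrices and applying the identity theorem on $\R^{d^2}$). The paper's proof is longer and more delicate (it must preserve unitarity by hand and control previously repaired minors), but it is elementary, self-contained linear algebra and produces an explicit deformation. I see no gap in your argument.
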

\begin{proof}
We will construct $T(\epsilon)$ recursively. Note first that, by Lemma~\ref{lem:minorcrit}, if $U$ has no vanishing minors, then the result holds and we can simply put $T(\epsilon)=\bbone_d$. We therefore suppose that $U$ does have at least one vanishing minor. We will perform a ``downward'' recursion on the number of vanishing minors of $U$. 

Consider therefore an arbitrary unitary $d\times d$ matrix $U$ and let $1\leq L$ be the number of vanishing minors of $U$. Let $M_i$, $i=1,\dots, L$ be the list of the corresponding $\ell_i\times \ell_i$  submatrices of  $U$; $\det M_i=0$. Since $U$ is unitary, we know $\ell_i<d$. We can order the $M_i$ so that $d>\ell_1\geq \ell_2\geq \dots\geq  \ell_L$. Let $\sigma_1$, respectively $\tau_1$, be permutations that send the columns and rows of $M_1$ to $1,2,3,\dots \ell_1$. Then, the $\ell_1$-minor of 
$$
U'=P_{\sigma_1}^\dagger U P_{\tau_1}
$$  
corresponding to its first $\ell_1$ rows and columns vanishes, whereas the $k_1=\ell_1+1$-minor corresponding to its $k_1$ first rows and columns does not. 
Indeed, since $U'$ is obtained from $U$ by permuting its columns and rows, the number of vanishing minors of $U'$ is equal to that of $U$ and the maximal dimension of those minors is still $\ell_1<d$. Writing $M_{1}'^{+}$ for this $k_1\times k_1$ submatrix, we are now in the situation of Lemma~\ref{lem:Mlemma} since, by the above, $\det M_1'^{+}\not =0$. There therefore exists a continuous family $V_{1}(\theta_1)$ of unitary $k_1$ by $k_1$ matrices so that $M_1'^{+}(\theta_1)=V_1(\theta_1)M_{1}'^{+}$ has the property that, there exists $\theta_1^*>0$ so that for all  $0<\theta_1\leq \theta_1^*$, $\det M_1'^{+}(\theta_1)\not=0$ and $\det M_1'(\theta_1)\not=0$. 
We can then construct the continuous family of unitary $d\times d$ matrices
$$
T_1(\theta_1)=
\begin{pmatrix}
V_1(\theta_1)& 0\\
0&\bbone_{d-k_1}
\end{pmatrix},\quad T_1(0)=\bbone_d.
$$
Then $M_1'(\theta_1)$ is the submatrix of $U'(\theta_1)=T_1(\theta_1)P_{\sigma_1}^\dagger U P_{\tau_1}$ containing its first $\ell_1$ columns and rows. Its determinant does not vanish.
Since for small $\theta_1$, $V_1(\theta_1)$ is close to the identity, all non-vanishing minors of $U'$ are non-vanishing minors of $U'(\theta_1)$; therefore the number of nonvanishing minors of $U'(\theta_1)$ has increased by at least one. Hence, the number of vanishing minors of $U'(\theta_1)$ is at most equal to $L-1$ provided $\theta_1$ is small enough and different from $0$. The same is then true for $P_{\sigma_1}U'(\theta_1)P_{\tau_1}^\dagger=P_{\sigma_1}T_1(\theta_1)P_{\sigma_1}^\dagger U$. 

Now, let $0<\epsilon<\theta_1^*$ and set $\theta_1=\epsilon$. Then $U_1(\epsilon):=P_{\sigma_1}T_1(\epsilon)P_{\sigma_1}^\dagger U$ has at most $L_{\epsilon,1}\leq L-1$ vanishing minors. We can then repeat the process. There exists $\theta_2^*$,  $\sigma_2, \tau_2$  and for all $0<\theta_2<\theta_2^*$, $T_2(\theta_2)$, constructed as above, so that $P_{\sigma_2}T_2(\theta_2)P_{\sigma_2}^\dagger U_1(\epsilon)$ has at most $L_{\epsilon,1}-1$ vanishing minors. Note that $\theta_2^*$,  $\sigma_2, \tau_2$ all depend on $\epsilon$: we did not indicate this dependence, not to overly burden the notation. But it is important to keep it in mind. We now define 
$\theta_2(\epsilon)=\min \{\epsilon, \theta_2^*\}$ and $U_2(\epsilon)=P_{\sigma_2}T_2(\theta_2(\epsilon))P_{\sigma_2}^\dagger U_1(\epsilon)$. 
Then $U_2(\epsilon)$ has at most $L_{\epsilon, 2}\leq L-2$ vanishing minors. 

Repeating this process at most $L'_\epsilon\leq L$ times, we conclude that there exist permutations $\sigma_j$ and unitary maps $T_j(\theta_j(\epsilon))$ with $0<\theta_j(\epsilon)\leq \epsilon$ so that, 
$$
T(\epsilon)U:=\Pi_{j=L'_\epsilon}^1 P_{\sigma_j}T_j(\theta_j(\epsilon))P_{\sigma_j}^\dagger U
$$
has no vanishing minors. By construction, $\lim_{\epsilon\to 0} T(\epsilon)=0$. 
Defining $\Acal(\epsilon)=\Acal T^\dagger(\epsilon)$, we have $\Bcal=\Acal U=\Acal (\epsilon)T(\epsilon)U$. Since, by the above, $T(\epsilon)U$ has no vanishing minors, provided $\epsilon$ is small enough, the result follows from Lemma~\ref{lem:minorcrit}. 
\end{proof} 
The central argument of the above proof is contained in the technical Lemma~\ref{lem:Mlemma} below. To state and prove it, we need some further notation. 
Let $M^+=(m_1^+\dots m_k^+)$ be a $k\times k$ matrix, with columns $m_i^+\in\C^k$:
$$
m_i^+=\begin{pmatrix} m_i\\\overline\mu_i\end{pmatrix},  
$$ 
with $m_i\in\C^{\ell}$ with $\ell=k-1$ and $\mu_i\in\C$. We  introduce the $\ell\times \ell$ matrix $M=(m_1 \dots  m_{\ell})$ as well as 
$$
\mu=\begin{pmatrix}\mu_1 \\\vdots \\\mu_{\ell}\end{pmatrix}\in\C^{\ell}.
$$
 So we have
$$
M^+=\left(
\begin{array}{c|c}
M&m_k\\
\hline
\mu^\dagger&\overline \mu_k
\end{array}
\right).
$$
Here, for any $v\in\C^\ell$, $v^\dagger$ is its conjugate transpose. 
We will write Ker$M$ for the null space of $M$, meaning those $x\in\C^{\ell}$ for which $Mx=0$; and $\Ima M$ for the image of $M$, which is its column space. Also, if $v\in\C^{\ell}$ is a vector, $\C v$ is the complex line containing $v$ and $[\C v]^\perp$ is the  vector subspace of $\C^{\ell}$ that is perpendicular to it. If $v\not=0$, this is a $(\ell-1)$-dimensional hyperplane. 
\begin{lemma}\label{lem:Mlemma} Let $k\geq 2$. If $\det M^+\not=0$ and $\det M=0$, then $\mu\not=0$ and
\begin{itemize}
\item[(i)] $\Ker M\cap [\C\mu^\dagger]^\perp=\{0\}$;
\item[(ii)] $\Ker M$ is one-dimensional and $\Ima M$ is $(\ell-1)=(k-2)$-dimensional.
\end{itemize}
In addition, there exists a one-parameter continuous  family of unitary $k$ by $k$ matrices $V(\theta)$, $\theta\in [-\pi, \pi]$, with $V(0)=\bbone_k$ so that, defining $M^+(\theta)=V(\theta)M^+=(m_1^+(\theta)\dots m_k^+(\theta))$, one has $\det M^+(\theta)\not=0\not= \det M(\theta)$, for all small enough
$\theta\not=0$. 
\end{lemma}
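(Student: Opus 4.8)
\noindent\emph{Proof sketch.} The plan is to settle the three itemized claims first and then to construct $V(\theta)$ as an explicit two-dimensional rotation. Since $\det M^+\neq 0$, the $k$ columns $m_1^+,\dots,m_\ell^+,m_k^+$ of $M^+$ are linearly independent in $\C^k$, hence so are $m_1^+,\dots,m_\ell^+$. If $\mu=0$, every $m_i^+$ with $i\leq\ell$ has vanishing last component, so these vectors lie in a copy of $\C^\ell$ and their independence forces $m_1,\dots,m_\ell$ to be independent, i.e.\ $\det M\neq 0$, contradicting the hypothesis; hence $\mu\neq 0$. For (i), which amounts to saying that no nonzero $x$ satisfies both $Mx=0$ and $\mu^\dagger x=0$: if such an $x$ existed, then $\sum_{i=1}^\ell x_i m_i^+=\binom{Mx}{\mu^\dagger x}=0$, contradicting the independence of $m_1^+,\dots,m_\ell^+$. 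For (ii), $\det M=0$ gives $\dim\Ker M\geq 1$, while by (i) the linear map $\Ker M\to\C$, $x\mapsto\mu^\dagger x$, is injective; hence $\dim\Ker M=1$, $\mathrm{Rank}\,M=\ell-1=k-2$ and $\dim\Ima M=k-2$.

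For the last assertion the key observation is that $\det M^+(\theta)=\det V(\theta)\,\det M^+$ has constant modulus $|\det M^+|\neq 0$, so $\det M^+(\theta)\neq 0$ holds automatically for \emph{any} unitary family $V(\theta)$; the only content is to arrange $\det M(\theta)\neq 0$ for small $\theta\neq 0$. I would take $V(\theta)$ to be the rotation mixing the coordinate axes $p$ and $k$, for a suitable $p\in\{1,\dots,\ell\}$ to be chosen: namely $V(\theta)e_j=e_j$ for $j\notin\{p,k\}$, $V(\theta)e_p=\cos\theta\,e_p-\sin\theta\,e_k$ and $V(\theta)e_k=\sin\theta\,e_p+\cos\theta\,e_k$. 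This is orthogonal, continuous on $[-\pi,\pi]$, and equals $\bbone_k$ at $\theta=0$. A short computation shows that $M(\theta)$ agrees with $M$ in every row except row $p$, which becomes $\cos\theta$ times the $p$-th row of $M$ plus $\sin\theta\,\mu^\dagger$; by multilinearity of the determinant in that row, $\det M(\theta)=\cos\theta\,\det M+\sin\theta\,\det M^{(p)}=\sin\theta\,\det M^{(p)}$, where $M^{(p)}$ denotes $M$ with its $p$-th row replaced by $\mu^\dagger$.

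It remains to choose $p$ with $\det M^{(p)}\neq 0$, which I expect to be the crux. Form the $(\ell+1)\times\ell$ matrix $\widehat M$ by adjoining $\mu^\dagger$ as a bottom row to $M$. Its kernel is $\{x:Mx=0,\ \mu^\dagger x=0\}=\{0\}$ by (i), so $\widehat M$ has full column rank $\ell$; hence its $\ell+1$ rows span $\C^\ell$ and some $\ell$ of them form a basis of $\C^\ell$. This $\ell$-element basis cannot consist of the rows of $M$ alone, since those are linearly dependent ($\det M=0$); therefore it contains $\mu^\dagger$ and omits exactly one row of $M$, say the $p$-th one. The rows of $M^{(p)}$ are then, up to reordering, precisely this basis, so $\det M^{(p)}\neq 0$. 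With this $p$ the formula above gives $\det M(\theta)=\sin\theta\,\det M^{(p)}\neq 0$ for all $\theta\in(-\pi,\pi)\setminus\{0\}$, in particular for all small $\theta\neq 0$, which completes the proof.

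A remark on the obstacle: the only non-routine point is the existence of $p$ with $\det M^{(p)}\neq 0$. One can also deduce this from the fact that $\mathrm{adj}\,M$ has rank one when $\mathrm{Rank}\,M=\ell-1$, together with $\mu^\dagger$ being nonzero on $\Ker M\setminus\{0\}$ (by (i)), which forces $\mu^\dagger\,\mathrm{adj}(M)\neq 0$ and hence a nonzero entry $\sum_i\overline\mu_i\,C_{pi}(M)=\det M^{(p)}$; the row-basis argument above is preferable because it is self-contained. \qed
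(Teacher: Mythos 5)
Your proof is correct. The preliminary items are handled essentially as in the paper: your argument for (i) (a nonzero $x$ with $Mx=0$ and $\mu^\dagger x=0$ would make $\sum_{i\le \ell}x_im_i^+$ vanish, contradicting $\det M^+\neq0$) is literally the paper's argument, and your derivations of $\mu\neq0$ and of (ii) (injectivity of $x\mapsto\mu^\dagger x$ on $\Ker M$) are equivalent to, if slightly slicker than, the paper's determinant expansion and dimension-count. Where you genuinely diverge is the construction of $V(\theta)$. The paper uses part (ii) to pick a unit vector $b$ spanning $[\Ima M]^\perp\subset\C^\ell$ and rotates in the plane spanned by $b^+$ and $e_k$; it then shows $\det M(\theta)\neq0$ by a linear-independence argument: a relation $\sum_i z_im_i(\theta)=0$ forces $z\in\Ker M\cap[\C\mu^\dagger]^\perp=\{0\}$ via the $b$-component. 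You instead take a Givens rotation in the coordinate plane $(e_p,e_k)$, where $p$ is chosen so that the matrix $M^{(p)}$ (row $p$ of $M$ replaced by $\mu^\dagger$) is invertible, and you get the explicit formula $\det M(\theta)=\sin\theta\,\det M^{(p)}$ by multilinearity; the existence of such a $p$ follows correctly from your full-column-rank argument for the bordered matrix $\widehat M$ (whose kernel is trivial by (i)), since an $\ell$-element row basis must contain $\mu^\dagger$ and omit exactly one row of the singular matrix $M$. Your route buys an explicit, quantitative expression for $\det M(\theta)$, nonvanishing on all of $0<|\theta|<\pi$, and it bypasses the need for the vector $b$ (i.e.\ for part (ii)) in the construction, using only (i); it also notes that $\det M^+(\theta)\neq0$ holds exactly by unitarity of $V(\theta)$ rather than by the paper's continuity appeal. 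The paper's route is more geometric and generalizes the same idea without singling out a coordinate direction, but both are complete proofs of the stated conclusion.
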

Recall that the rank of a matrix is the maximal number of linearly independent columns.  Since $M^+$ is invertible, it has maximal rank $k$. Since the determinant of $M$ vanishes, its rank is strictly less than
$\ell$. The important information in the second statement of the Lemma is therefore that the rank of $M$ equals $\ell-1$. The last statement asserts that a small unitary perturbation of the matrix $M^+$ suffices to make $M$ invertible.

To follow the proof it is convenient to have a low dimensional example in mind. For $k=3$, $\ell=2$, the following matrix $M^+$ satisfies the hypotheses of the theorem
$$
M^+=
\begin{pmatrix}
1&2&1\\
1&2&2\\
3&-1&3
\end{pmatrix}, \quad\mu=\begin{pmatrix} 3\\-1\end{pmatrix},\ [\C\mu^\dagger]^\perp=\begin{pmatrix}1\\3\end{pmatrix},\ \Ker M=\C\begin{pmatrix} 2\\-1\end{pmatrix},\ \Ima M=\C\begin{pmatrix}1\\1\end{pmatrix}.
$$ 
The first two assertions of the Lemma are now clearly satisfied. In the proof, we will construct $V(\theta)$ 
to be the rotation over an angle $\theta$ in the plane spanned by $e_3$ and 
$$
b^+=\begin{pmatrix} b\\ 0\end{pmatrix}=\frac1{\sqrt 2}\begin{pmatrix} 1\\-1\\0\end{pmatrix},
$$
where $b$ is chosen orthogonal to $\Ima M$. It is clear in this example that such rotations will tilt the plane spanned by $m_1^+$ and $m_2^+$ into the plane spanned by $m_1^+(\theta)=V(\theta) m_1^+$ and $m_2^+(\theta)=V(\theta) m_2^+$ in such a way that  their projections $m_1(\theta)$ and $m_2(\theta)$  onto the $e_1-e_2$ plane are no longer aligned. Hence the determinant of $M(\theta)$ does not vanish. This argument works in general, as we now show.

\begin{proof} Note first that, if $\mu=0$, then  $\det M^+=\overline \mu_k\det M$ which is in contradiction with the hypothesis. So $\mu \not=0$. 

We now prove that (i) implies (ii). From the dimension theorem for sums of vector spaces, we know that
$$
\dim (\Ker M + [\C\mu^\dagger]^\perp)=\dim \Ker M + \dim [\C\mu^\dagger]^\perp - \dim (\Ker M \cap [\C\mu^\dagger]^\perp)
$$
Now, $\Ker M + [\C\mu^\dagger]^\perp$ is by definition the linear span of $\Ker M$ and $[\C\mu^\dagger]^\perp$ and is therefore a subspace of $\C^{\ell}$, so that $\dim (\Ker M + [\C\mu^\dagger]^\perp)\leq \ell$. Hence (i) implies that
$$
\dim \Ker M + \dim [\C\mu^\dagger]^\perp\leq \ell.
$$
Since $\mu\not=0$, we know that $\dim [\C\mu^\dagger]^\perp=\ell-1$ and consequently $\dim\Ker M\leq 1$. But by assumption $\det M=0$, so $M$ has a non-trivial kernel. Hence $\dim \Ker M=1$. From the dimension theorem for linear maps, we know that 
$$
\dim \Ker M+\dim \textrm{Im} M=\ell,
$$
so that $\dim\textrm{Im} M=\ell-1$, which proves (ii).

It remains to prove (i). For that purpose, note that 
$$
m_i^+=m_i+\overline{\mu_i} e_k,\quad \textrm{where}\ e_k=\begin{pmatrix} 0\\\vdots\\1\end{pmatrix}\in\C^{k}.
$$
Hence, for all $z\in\C^{\ell}$, one has 
$$
\sum_{i=1}^{\ell} z_i m_i^+=\sum_{i=1}^{\ell}z_i m_i + (\mu^\dagger z)e_k.
$$
Suppose $z\in [\C\mu]^\perp\cap \Ker M$. Then the right hand side vanishes; but since the $m_i^+$ are linearly independent, this implies all $z_i=0$. This proves the result. 

We now turn to the last statement of the Lemma. 

Under the conditions of the Lemma, $[\Ima M]^\perp\subset \C^{\ell}$ is one-dimensional. Let $ b\in\Ima M^\perp$ be a unit vector: $b^\dagger b=1$. Let $b^+=\begin{pmatrix} b\\ 0\end{pmatrix}$ and let $\Pi$ be the orthogonal projector onto the plane spanned by $b^+$ and $e_k$:
$$
\Pi=b^+ b^{+\dagger} +e_k e_k^\dagger\quad \textrm{and}\quad \Pi^\perp=\bbone_k-\Pi.
$$
Note that $\Pi^\perp$ is the orthogonal projector onto the image of $M$ (viewed as a subspace of $\C^{k}$, not of $\C^{\ell}$). Now consider, for all $\theta\in [0,2\pi[$, 
$$
V(\theta)=R(\theta)\Pi +\Pi^\perp, 
$$
where $R(\theta)$ is a rotation in the plane spanned by $b^+$ and $e_k$: 
$$
R(\theta)=\cos\theta \Pi + \sin\theta\left(e_k b^{+\dagger}-b^+e_k^\dagger\right).
$$
One easily checks $V(\theta)$ is unitary, continuous in $\theta$ and that $V(0)=\bbone_k$.

We now consider
$$
M^+(\theta)=V(\theta)M^+=(m_1^+(\theta) \dots m_k^+(\theta)),\quad\textrm{and}\quad
m_i^+(\theta)=\begin{pmatrix}m_i(\theta)\\ \mu_i(\theta)\end{pmatrix}.
$$
Since,  for all $i=1, \dots, \ell$,  $\Pi m_i=0$ by the above construction, a direct computation yields,
$$
m_i^+(\theta)=V(\theta) m_i^+= m_i+ \overline \mu_i R(\theta)e_k=m_i+ \overline \mu_i [\cos\theta e_k -\sin\theta b^+]
$$
and hence
$$
m_i(\theta)= m_i- \overline \mu_i \sin\theta b,\quad \mu_i(\theta)=\mu_i \cos\theta.
$$
It remains to show that $\det  M(\theta)\not=0\not=\det M^+(\theta)$ for all $\theta$ small enough.
To show $\det M(\theta)\not=0$ we need to show the vectors $m_i(\theta)$ are linearly independent. Let $0<\theta<\pi$.  Consider $z_i\in\C$, $i=1,\dots, \ell$ and suppose that
$$
\sum_{i=1}^\ell z_im_i(\theta)=0.
$$
Then 
$$
0=\sum_{i=1}^\ell z_i m_i - (\sum_{i=1}^\ell \overline \mu_i  z_i)\sin\theta b.
$$
Since $b$ is orthogonal to all $m_i$, this implies $z\in \Ker M$ and $z\in [\C\mu^\dagger]^\perp$. Hence by part (i) of the lemma we conclude $z_i=0$ for all $i=1,\dots, \ell$, so that the vectors $m_i(\theta)$ are indeed linearly independent.

Since $\det M^+\not=0$, this remains true, by continuity, for $\det M^+(\theta)$ for $\theta$ small enough.  

\end{proof}


\section{KD nonclassicality in dimension $d=3$}\label{app:d3}
It is a simple matter to check that, when $d=3$, Theorems~\ref{thm:NCbound} and~\ref{thm:arvid} are in fact equivalent. First, note that both theorems require $\Mab<1$. Next, when $d=3$,  $3d/2=4.5$ and hence 
the sufficient condition on $|\psi\rangle$ for KD nonclassicality in Theorem~\ref{thm:arvid} is $n_\Acal(\psi)+n_\Bcal(\psi)\geq 5$. While in Theorem~\ref{thm:NCbound} it is $\na(\psi)+\nb(\psi)>d+1=4$, which is the same condition. 
It remains to show that the additional conditions on the number of zeros of $U$ in Theorem~\ref{thm:NCbound} are in fact always satisfied. 
For that purpose we show that the possible values for $Z$ are $Z=0,1$. 
In other words,  the condition $Z\leq \frac{d+1}{2}=2$ is in fact not restrictive. 
Indeed, suppose $Z\geq 2$. One easily sees that two zeros cannot occur in the same row or column because then necessarily $\Mab=1$.
The matrix is therefore (after possible permutations of columns or rows) of the form 
$$
U=
\begin{pmatrix}
0& \cdot &\cdot\\
\cdot &0 &\cdot\\
x&y&\cdot
\end{pmatrix}.
$$
Unitarity then implies $\overline x y=0$ and hence either $x=0$ or $y=0$, so that there is a third zero, which occurs in one of the first two columns. But this is impossible since $\Mab<1$. 


%

\end{document}